\newcommand{\id}{\mathrm{id}}  
\newcommand{\inc}{\hookrightarrow}
\newcommand{\prog}[1][p]{\mathtt{#1}}
\newcommand{\lsem}{\llbracket}
\newcommand{\rsem}{\rrbracket}
\newcommand{\maybe}{\prog[Maybe]}
\newcommand{\sComp}{\hspace{1pt}\large{\mathbf{;}}\hspace{2pt}}
\newcommand{\At}{\prog[At]}
\newcommand{\pfs}{\hspace{0.2cm}}
\newcommand{\cat}[1][C]{\mathbf{#1}}
\newcommand{\Set}{\mathbf{Set}}
\newcommand{\Pol}{\mathbf{Pol}}
\newcommand{\Kl}[1][T]{\cat_{#1}}
\newcommand{\Vect}{\mathbf{Vect}}
\newcommand{\El}[1][T]{\mathbf{El}(#1)}
\newcommand{\catC}{\mathbf{C}}
\newcommand{\N}{\mathbb{N}}
\newcommand{\R}{\mathbb{R}}
\newcommand{\Q}{\mathbb{Q}}
\newcommand{\unit}{\left[0,1\right]}
\newcommand{\EK}[1][T]{\mathrm{End}_{#1}}
\newcommand{\Var}{\mathcal{V}}
\newcommand{\op}{^{\mathrm{op}}}
\newcommand{\colim}{\operatornamewithlimits{colim}}
\newcommand{\Lan}[1][\Inc]{\mathrm{Lan}_{#1}}
\newcommand{\Id}{\mathsf{Id}}
\newcommand{\Pow}{\mathsf{P}}
\newcommand{\NPow}{\mathsf{Q}}
\newcommand{\Giry}{\mathsf{G}}
\newcommand{\Maybe}{\mathsf{M}}
\newcommand{\GMset}[1][S]{\mathsf{B}_{#1}}
\newcommand{\Forg}{\mathsf{U}}
\newcommand{\Free}{\mathsf{F}}
\newcommand{\Dist}{\mathsf{D}}
\newcommand{\HybM}{\mathsf{H}}
\newcommand{\Yoneda}{\mathsf{Y}}
\newcommand{\Diag}{\mathscr{D}}
\newcommand{\Inc}{\mathsf{I}}
\newcommand{\one}{\underline{1}}
\newcommand{\inv}{^{-1}}
\newcommand{\supp}{\mathrm{supp}}
\newcommand{\ari}{\mathrm{ar}}
\newcommand{\const}[1]{\underline{#1}}
\newcommand{\conc}{\mathbin{+\mkern-08mu+}}
\newcommand{\comp}{\circ}
\newcommand{\sps}[1]{\hspace{-1pt}#1\hspace{-1pt}}
\newcommand{\ssps}[1]{\hspace{-2pt}#1\hspace{-2pt}}
\newenvironment{myProof}[1]
	{
	\noindent \textbf{Proof of #1.}\\
	}
	{
	\begin{flushright}$\blacksquare$\end{flushright}
	}
\def\red#1{\textcolor{red}{#1}}
\begin{document}

\title[Compositional semantics for new paradigms]{Compositional semantics for new paradigms: probabilistic, hybrid and beyond}       


\author{Fredrik Dahlqvist}
\affiliation{
  \institution{University College London}            
}
\email{f.dahlqvist@ucl.ac.uk}          

\author{Renato Neves}
\affiliation{
  \institution{University of Minho}           
}
\email{nevrenato@di.uminho.pt}         

\begin{abstract}
%

  Emerging computational paradigms, such as probabilistic and hybrid
  programming, introduce new primitive operations that often need to
  be combined with classic programming constructs. However, it still
  remains a challenge to provide a semantics to these features and
  their combination in a systematic manner.

  For this reason, we introduce a \emph{generic}, \emph{monadic}
  framework that allows us to investigate not only which programming
  features a given paradigm supports, but also on how it can be
  extended with new constructs.  By applying our method to the
  probabilistic and hybrid case, we list for example all binary
  program operations they possess, and show precisely when and if
  important axioms such as commutativity and idempotency hold.  Using
  this framework, we also study the possibility of incorporating
  notions of failure and non-determinism, and obtain new results on
  this topic for hybrid and probabilistic programming.
  
\end{abstract}

\begin{CCSXML}
<ccs2012>
<concept>
<concept_id>10011007.10011006.10011008</concept_id>
<concept_desc>Software and its engineering~General programming languages</concept_desc>
<concept_significance>500</concept_significance>
</concept>
<concept>
<concept_id>10003456.10003457.10003521.10003525</concept_id>
<concept_desc>Social and professional topics~History of programming languages</concept_desc>
<concept_significance>300</concept_significance>
</concept>
</ccs2012>
\end{CCSXML}

\ccsdesc[500]{Software and its engineering~General programming languages}
\ccsdesc[300]{Social and professional topics~History of programming languages}

\keywords{Probabilistic program, hybrid program, monad,
  semantics} 

\maketitle

\setlength{\abovedisplayskip}{2pt}
\setlength{\belowdisplayskip}{1pt}
\setlength{\dbltextfloatsep}{0pt}
\setlength{\textfloatsep}{0pt}
\setlength{\dblfloatsep}{0pt}
\setlength{\floatsep}{0pt}
\setlength{\parskip}{2pt}

\section{Introduction}

Probabilistic programming languages such as Church
\cite{goodman2012church}, Anglican \cite{wood-aistats-2014} or
Probabilistic C \cite{pmlr-v32-paige14} have become increasingly
popular in the last years, and although progress has been made in
developing semantics for them many questions remain. In particular,
\emph{how does one interpret combinations of probabilistic features
  with `classical' features like error handling or non-deterministic
  choice}?  Consider for example the program below, written in
Probabilistic C-style.
\begin{lstlisting}[basicstyle=\fontsize{8}{10}\ttfamily]
int main() {
  int a;
  int c_1=bernoulli(0.3);
  int c_2=bernoulli(0.6);
  printf("Please input an integer: ");
  scanf("%d", &a);
  if(a % 3 == 0){
    return c_1;
  } else if (a % 3 == 1){
  return c_2;
  } else {


  exit(EXIT_FAILURE);
  }
}
\end{lstlisting}
\vspace{-0.1cm}
\noindent
This program non-deterministically combines two probabilistic
instructions -- producing Bernoulli trials which return 1 with
probability 0.3 (resp. 0.6) and 0 with probability 0.7 (resp. 0.4) --
with an execution failure.  By abstracting away from the C-style
grammar and moving to an algebraic syntax, we want to understand how
to interpret the expression
\[
\prog[(1+_{.3}0)+(1+_{.6}0)+abort]
\] 
where $\prog[+_\lambda]$ is the binary probabilistic choice operator
with parameter $\lambda\in\unit$. It is easy enough to provide a
semantics to the purely probabilistic instructions in terms of Markov
kernels $\N\to \Dist\N$, where $\Dist$ is the finitely supported
distribution monad. \emph{But can we interpret $\prog[+]$ and
  $\prog[abort]$ in this semantics? And if not, how can we modify
  $\Dist$ to support these constructs?} We aim to provide firm answers to this kind of question.

The challenge that we described above is not unique to probabilistic
programming. Parallel to the latter, recent years have witnessed a
flurry of research activity aiming to formalise the programmable
features of hybrid systems \cite{hofner09,platzer10,suenaga11}, which
require an orchestrated use of both \emph{classic program constructs}
and \emph{systems of differential equations}.  Consider, for example,
the `C-style' hybrid program below.
  \begin{lstlisting}[mathescape=true,commentstyle=\color{gray},
    basicstyle=\fontsize{8}{10}\ttfamily]
int cool_or_heat() {
  int a;
  printf("Please input an integer: ");
  scanf("%d", &a);
  if(a == 0){
    // Heating up
    (dtemp = 1 & 3);
    return 0; // success 
  } else if (a == 1){
    // Cooling down 
    (dtemp = -1 & 3);
    return 0; // success 
  } else {
    exit(EXIT_FAILURE);
  }
}
\end{lstlisting}
\noindent
Depending on the input, the program increases or decreases a reactor's
temperature during three miliseconds or aborts: there exists a global
variable (\texttt{temp}) that registers the current temperature, the
expression \texttt{(dtemp = 1 \& 3)} dictates how the temperature is
going to evolve for the next three miliseconds, and similarly for
\texttt{(dtemp = -1 \& 3)}. Abstracting from the C-style
grammar, we want to interpret the expression
\[
\prog[(dtemp = 1 \> \& \> 3)+(dtemp = -1 \> \& \> 3)+abort]
\]
in a suitable manner, and this yields questions completely analogous
to the probabilistic case. Purely hybrid programs can be naturally
interpreted as functions $\R^n \to \HybM(\R^n)$ where $\HybM$ is the
hybrid monad \cite{neves15}. \emph{But it is not clear how to extend
  this semantics in a systematic manner so that it incorporates
  exceptions and non-determinism.}

\vspace{-2ex}
\paragraph{Approach and contributions.} The paper is divided in two
halves: in the first part we develop a framework -- \emph{Kleisli
  representations} -- so that we can analyse program semantics in a
systematic way. In the rest of the paper, we show how program semantics
can be explicitly built in this framework, and use it to analyse
the hybrid and probabilistic paradigms.

The framework is laid out in Section ~\ref{sec:KlRep}. It reinterprets
Moggi's idea \cite{1991:MoggiNotions} of interpreting a program $\prog$ as a
Kleisli arrow $\lsem\prog\rsem: X\to TX$ and sequential composition as
Kleisli composition by saying that the interpretation map is a
\emph{monoid morphism} from the monoid of programs to the monoid of
\emph{endomorphisms of $X$ in the Kleisli category of $T$}.

Representing an algebraic structure as a collection of endomorphisms
is an idea very familiar to mathematicians and physicists
alike, it sits at the heart of \emph{Representation Theory}, a vast
field of research which has extensively studied the representations of
groups and Lie algebras as endomorphisms of vector spaces, with
applications ranging from the classification of finite groups to
quantum field theory
\cite{steinberg2011representation,haywood2011symmetries}. In this work
we re-interpret denotations \textit{\`{a} la} Moggi as
\emph{representations} in the usual mathematical understanding of the
word. We will refer to a representation in the Kleisli category $\Kl$
of $T$ as a \emph{Kleisli $T$-representation}.

Representation theory provides a natural and useful way of
thinking about program semantics: it captures both the algebraic
aspects of the language and the coalgebraic aspects of its
interpretation in a simple, well-known mathematical object, it separates the role of sequential composition from other
operations on programs, it connects smoothly with the existing
literature on algebraic effects
\cite{1999:PowerEnriched,2001:PlotkinPowerAEAdequacy,2001:PlotkinPowerSemanticsAE,2003:PlotkinPowerEffects},
and leads to new methods for building program semantic.

Whilst the Kleisli representation of sequential composition boils down
to a natural transformation $T\circ T\to T$, all other binary
operations on programs will be interpreted as natural transformations
$T\times T\to T$, an idea first proposed by Plotkin and Power
\cite{1989:MoggiComputational,1991:MoggiNotions}. Thus, in order to
determine precisely which binary program operations are supported by a
monad, we must completely enumerate all natural
transformations $T\times T\to T$. Section~\ref{sec:operations},
introduces a set of techniques to do this for a large list of
well-known monads. For example, we list all binary operations for the
hybrid, probabilistic, non-deterministic, and partial paradigms.

Our next step is to study program axiomatics, i.e.\@ the
(im)possibility for a monad to support program operations satisfying
some given axioms. This is done in Section \ref{sec:axioms} where we
focus on commutativity, idempotence, units, and absorption. We give
fine grained results on which axioms can be supported by Kleisli
$T$-representations for different monads, which clarifies the types of
operation specific computational paradigms support. We show for
example that the hybrid paradigm does not admit a non-deterministic
choice, that the probabilistic paradigm supports precisely one
commutative idempotent operation and that it does not support failures.

Section \ref{sec:combining} shows that combining monads with 
  the $\maybe$ monad $\Maybe$ or the powerset monad $\Pow$ (i) provides a
  generic interpretation of tests, and (ii) yields richer monads with
  which to overcome representability issues highlighted in Section
  \ref{sec:axioms}. We strengthen a well-known result from
  \cite{2003:VaraccaProbability} by showing that there cannot exist
  \emph{any} monad structure on $\Pow\Dist$ whatsoever. We also prove
  the existence of a distributive law of the non-empty powerset  monad $\NPow$
  over the hybrid one. This allows to generate powerful, hybrid programming
  languages that mix non-deterministic assignments with differential
  predicates.  

  We assume basic knowledge of category theory and monads. All 
  proofs can be found in the appendix.
\vspace{-2ex}
\section{Building semantics for hybrid and probabilistic programs}
\label{sect:illustr}
Let us illustrate the questions raised in the introduction and some of
the solutions that we have developed for them by looking at two
emerging programming paradigms.

\vspace{-1ex}
\paragraph{Hybrid programs} The distinguishing feature of hybrid programming is that it emphasises
and makes explicit the interaction between digital devices and
physical processes. This
is becoming essential for the software engineer, because he/she
needs more and more often to develop complex systems that are deeply
intertwined with physical systems
\cite{tabuada09,platzer10}, e.g.\@ cruise
controllers, thermostats, etc.

Let us build a very simple hybrid programming language. Take a finite
set of real-valued variables $X = \{ x_1, \dots, x_n\}$ and denote by
$\At(X)$ the set, given by the grammar
\begin{flalign*}
  & \varphi = (x_1 := t, \dots, x_n := t) \mid (\dot{x}_1 = t,
  \dots, \dot{x}_n = t \> \& \> r) \\ & t = r \mid r
  \cdot x \mid t + t
\end{flalign*}
where $r$ is a real number and $x \in X$. Then define $\prog[Hyb(X)]$
as the free monoid over $\At(X)$ given by the grammar
\[
\prog[p]=\prog[a\in At(X) \mid skip\mid p\sComp p]
\]
and with the usual monoidal laws. One possible program is the
composition
$\prog[a := 10 \sComp (] \dot{\prog[p]}\prog[ \> = v,] \> \>
\dot{\prog[v]}\prog[ \> = a \> \& \> 3)]$, which, intuitively, sets the
acceleration of a vehicle to $\prog[10] m/s^2$ and then makes it move
during three milliseconds.

In Section~\ref{sec:KlRep}, we will see that the semantics for this
language comes naturally as a Kleisli representation
\begin{flalign*}
  \lsem - \rsem : \prog[Hyb(X)] \to \EK[\HybM](\R^3)
\end{flalign*}
where $\HybM$ is the hybrid monad \cite{neves15} (whose definition is recalled
in the following section). We now want to endow
$\prog[Hyb(X)]$ with other programming features, such as
$\prog[abort]$ operations and non-deterministic choice.  In the framework of Klesili
representations, this amounts to providing $\EK[\HybM](\R^3)$ a
suitable algebraic structure that supports these constructs. Consider,
for example, the language $\prog[Hyb^0(X)]$ with syntax
\[
\prog[p]=\prog[a\in At(X) \mid skip\mid p\sComp p \mid 0]
\]
Our goal is to build a Kleisli
representation
\begin{flalign*}
  \lsem - \rsem : \prog[Hyb^0(X)] \to \EK[\HybM](\R^3)
\end{flalign*}
for this language.  Corollary~\ref{thm:TNfromInitCriterion}
shows that this is impossible, because there does \emph{not} exist a
natural transformation $T^{\prog[0]}:=\one\to \HybM$ to interpret
$\prog[0]$.  In other words, pure hybrid computations do \emph{not}
support $\prog[abort]$ operations.

In order to surpass this obstacle, one may want to consider the
partial hybrid monad $\HybM(- + 1)$. We show that this monad has
precisely one natural transformation $T^{\prog[0]}:=\one\to \HybM(-+1)$
to interpret $\prog[0]$ such that the axiom
$\prog[0 \sComp p] = \prog[0]$ holds. In contrast to classic
paradigms, we prove that partial hybrid programs do not admit the
axiom $\prog[p \sComp 0] = \prog[0]$ and such is to be expected.

Suppose now that we wish to extend the language $\prog[Hyb(X)]$ with a
non-deterministic choice $(+)$. As already mentioned, in our framework
this amounts to finding a suitable natural transformation
$\HybM \times \HybM \to \HybM$. Using the set of techniques introduced
in the paper, we list all such transformations and quickly discover
that none of them is commutative. This means that in pure hybrid
computations one cannot expect a non-deterministic choice operation.

To solve this issue, we show that there exists a distributive law of
the non-empty powerset monad $\NPow$ over $\HybM$.  The monad
$\NPow \HybM$ inherits the natural transformation
$\NPow \HybM \times \NPow \HybM \to \NPow \HybM$ that takes unions and
this allows to extend the language $\prog[Hyb(X)]$ with a
non-deterministic choice operation in the usual way.

\vspace{-1ex}
\paragraph{Probabilistic programs}
Consider the simple probabilistic programming language $\prog[Prob]$
described by the syntax
\begin{align*}
\prog[p]=\prog[a\in At\mid skip \mid p\sComp p \mid p +_\lambda p] \hspace{1cm}\lambda\in\unit\cap\Q
\end{align*}
where $+_\lambda$ is the probabilistic choice operation and
$\prog[At]$ is a set of atomic programs.  $\prog[Prob]$
satisfies the following axioms:

\begin{center}
\begin{tabular}{r@{\hskip 3pt} l@{\hskip 15pt}  r@{\hskip 3pt}  l}
(1)& $\prog[p\sComp skip \sps{=} skip\sComp p\sps{=} p]$ & (5)&  $\prog[p\sComp (q\sComp r)\sps{=} (p\sComp q)\sComp r]$ \\
(2)& $\prog[p\sComp (q+_\lambda r)\sps{=} (p\sComp q)+_\lambda (p\sComp r)]$ & (6) & $\prog[p+_\lambda q\sps{=} q+_{1-\lambda}p]$\\
(3)& $\prog[(p+_\lambda q)\sComp r=(p\sComp r) \sps{+_\lambda} (q\sComp r)]$ & (7) &  $\prog[p +_\lambda p\sps{=}  p]$\\
(4)& \multicolumn{3}{l}{$\prog[p+_\lambda(q+_{\tau} r)\sps{=}  (p+_{\hspace{-3pt}\frac{\lambda}{\lambda+(1-\lambda)\tau}}\hspace{-2pt}q)+_{\lambda+(1-\lambda)\tau} r]$}
\end{tabular}
\end{center}

\noindent
Let us call an algebraic structure for this signature and these
equations a \emph{convex semiring}. $\prog[Prob]$ is the free convex semiring over $\prog[At]$. The $+_\lambda$ fragment
of such a structure is known as a \emph{convex algebra} (modulo an
extension to $n$-ary affine combinations, see
e.g. \cite{2017:Sokolova}), and convex algebras
are precisely the Eilenberg-Moore algebras for the distribution monad
$\Dist:\Set\to\Set$. The set $\EK[\Dist](X)$
can be equipped with a convex algebra structure inherited from
$\Dist$ in the obvious way. In fact,
$\EK[\Dist](X)$ is also a convex semiring, i.e. all the axioms
listed above also hold when $\prog[\sComp]$ is interpreted as the
Kleisli composition $\circ_\Dist$.

We now interpret the language above in terms of rational Markov
kernels (i.e. maps $X\to \Dist_r X$ where $\Dist_r$ is the monad of
\emph{rational} probability distributions), by defining Kleisli
$\Dist_r$-representations as follows: for each $\prog[a\in At]$ choose
an interpretation $\lsem \prog[a]\rsem: X\to\Dist_r X$ for some state
space $X$. The semantics is then extended inductively:
\vspace{-2ex}
\begin{center}
\begin{tabular}{l l l }
$\lsem \prog[skip]\rsem \ssps{=} \eta_X$ &  $\lsem \prog[p\sComp q]\rsem\ssps{=}\lsem \prog[q]\rsem
  \sps{\circ} \lsem \prog[p]\rsem$  &  $\lsem \prog[p\sps{+_\lambda} q]\rsem\ssps{=}\lambda\lsem p \rsem \sps{+}
  (1\sps{-}\lambda)\lsem \prog[q]\rsem$ 
\end{tabular}
\end{center}

\noindent
This Kleisli $\Dist_r$-representation
$\lsem -\rsem: \prog[Prob]\to \EK[\Dist_r](X)$ is a convex semiring
homomorphism.  We now want to extend the language $\prog[Prob]$ with
other operations on programs, such as non-deterministic choice (as in ProbNetKAT \cite{2016:ProbNetKAT,2017:CantorScott}), iteration, or parallel composition. Again, non-deterministic choice is enough to illustrate some
goals and contributions of this paper. Consider the language
$\prog[Prob^+]$ whose syntax is
\[
\prog[p]=\prog[a\in At\mid skip\mid 0 \mid p\sComp p \mid p +_\lambda p\mid p+p] \hspace{1.7em}\lambda\in\unit\cap\Q
\] 
and whose axioms are those of $\prog[Prob]$ together with those making
$(\prog[;,1,+,0])$ an idempotent semiring.

The most obvious strategy to provide a semantics for $\prog[Prob^+]$ is
to try to put a $+$ operation satisfying the axioms above directly on
$\EK[\Dist_r](X)$ and then define Kleisli representations
$\prog[Prob^+]\to\EK[\Dist_r](X)$ in the same way as we did for
$\prog[Prob]$. Since $+$ will \textit{in fine} be interpreted by
a natural transformation $\Dist_r\times \Dist_r\to\Dist_r$, our first
task is to characterise these.
We show in Theorem \ref{thm:TNatRatDist} that they are precisely
the convex sum operations. It follows that by choosing the equally
weighted convex sum $+_{\nicefrac{1}{2}}$ we can equip $\EK[\Dist_r](X)$
with an operation which satisfies all the axioms above apart from (10)
and (11) -- actually, we will see that this is the only possible
choice. The axioms (10) and (11) fail because once again there are \emph{no} natural transformations
$T^0:=\one\to \Dist_r$ to interpret $\prog[0]$
(Corollary~\ref{thm:TNfromInitCriterion}).  In particular there cannot
exist \emph{any} Kleisli $\Dist_r$-representations of
$\prog[Prob^+]$. 
To overcome this obstacle we will, as in the hybrid case, 
consider a more complex monad in Section \ref{sec:combining} -- $\Dist\Maybe$ -- which adds the missing notion of
partial computation to $\Dist$, and restrict our attention to a particular class of non-deterministic instruction.


\section{Kleisli representations}\label{sec:KlRep}

\paragraph{Kleisli representation of monoids.}
  Given a monoid $(M,\cdot,1)$, a monad $T:\cat\to\cat$ and a
  $\cat$-object $X$, define a \emph{Kleisli $T$-representation of
    $M$ in $X$}, or simply a \emph{Kleisli representation of $M$ in
    $X$} if there is no ambiguity, as a monoid homomorphism
\[
  \rho: (M,\cdot,1)\to (\EK(X), \circ_T, \eta_X^T)
\]
where $\circ_T$ is Kleisli composition and $\eta^T$ is the unit
of $T$.

\begin{example}[Classical linear representations]
 Let $\Free: \Set\to\Vect$ be the functor building free vector
  spaces over some chosen field, and let $\Forg: \Vect\to\Set$ be the
  corresponding forgetful functor. The composition
  $\Forg\Free:\Set\to\Set$ is a monad, and for any group $G$, a
  Kleisli $\Forg\Free$-representation of $G$ on a finite set $n$ is
  simply the usual notion of linear representation of $G$ on the
  $n$-dimensional vector space.
\end{example}


\begin{example}[Stochastic processes]
  The category $\Pol$ is the category of Polish spaces, i.e.\@
  separable, completely metrisable topological spaces, and continuous
  maps. The Giry monad \cite{1981:Giry} $\Giry:\Pol\to\Pol$ associates
  to every Polish space $X$ the set of probability distributions on
  $X$ together with the topology of weak convergence, which is
  Polish. On morphisms, it associates to any continuous map
  $f: X\to Y$ the map
  $\Giry f:\Giry X\to\Giry Y, \mu\mapsto f_\ast(\mu)$ taking the
  pushforward of measures. A Kleisli $\Giry$-representation of the
  monoid $([0,\infty),+,0)$ of non-negative reals is a
  \emph{stochastic process.}
\end{example}

\begin{example}[Hybrid systems]
The functorial part of the hybrid monad  
 $\HybM : \Set \to \Set$ \cite{neves15}
is defined by
\begin{flalign*}
  \HybM = \coprod_{d \in [0,\infty)} \hom([0,d], -)
\end{flalign*}
Its unit
is given by the equation
$\eta_X ( x ) = (\const{x}, 0)$, with $\const{x}$
the constant function on $x$, and the multiplication is defined
by
\begin{flalign*}
  \mu_X (f,d) = (\theta_{X} \circ f, d) \conc (f (d))
\end{flalign*}
with $\theta : \HybM \to \Id$ the natural transformation that sends an
evolution $(f,d)$ to $f(0)$ and $\conc : \HybM \times \HybM \to \HybM$
the natural transformation that concatenates two
evolutions. Intuitively, the multiplication will be used to
concatenate the evolutions produced by two hybrid programs.

Recall from Section~\ref{sect:illustr} the grammar $\At(X)$ of atomic
hybrid programs and denote the usual interpretation of a term $t$ over
a valuation $(v_1,\dots,v_n) \in \R^n$ by
$\lsem t \rsem_{(v_1,\dots,v_n)}$ or simply $\lsem t \rsem $ if the
valuation is clear from the context. Since linear systems of ordinary
differential equations always have unique solutions
\cite{perko2013differential}, there exists an interpretation map
\begin{flalign*}
  \At(X) \to \EK[\HybM](\R^n)
\end{flalign*}
that sends $(x_1 := t_1, \dots, x_n := t_n)$ to the function
$\R^n \to \HybM(\R^n)$ defined by,
\begin{flalign*}
  (v_1,\dots,v_n) \mapsto \eta_{\R^n} \left (\lsem {t_1} \rsem ,\dots ,\lsem {t_n} \rsem
  \right )
\end{flalign*}
and that sends $(\dot{x}_1 = t_1, \dots, \dot{x}_n = t_n \> \& \> d)$
to the respective solution $\R^n \to (\R^n)^{[0,\infty)}$ but
restricted to $\R^n \to (\R^n)^{[0,d]}$. The free monoid extension of
this interpretation map provides a Kleisli representation
\begin{flalign*}
  \prog[Hyb](X) \to \left (\EK[\HybM](\R^n), \comp,
    \eta_{\R^n} \right )
\end{flalign*}
which includes both assignments and differential equations. In the
appendix we provide more details about this language, and give
examples of other languages generated by the hybrid monad.
\end{example}

\paragraph{Representing general varieties.} Consider a finitary variety $\Var$ defined by a signature
$\Sigma = \Phi \cup \{\prog[skip],\prog[;]\}$ with arity map
$\ari:\Sigma\to \N$, and a set of equations $E$ that contains
the monoidal laws for $\{\prog[skip],\prog[;]\}$. Consider also
a monad $T:\cat\to\cat$ on a category with products. As the reader may have guessed, a Kleisli $T$-representation of a
$\Var$-object $A$ will be a morphism $\rho: A\to\EK(X)$. But in which
category? Since our starting point will always be a signature $\Sigma$, we define a Kleisli
$T$-representation of a $\Var$-object $A$ as a morphism
$\rho: A\to\EK(X)$ \emph{in the category of $\hspace{1pt}\Sigma$-algebras}, that is to
say a morphism $\rho$ which commutes with all the operations in the
signature but whose codomain may not live in $\Var$. The rationale for
this choice of category is the following: a group representation is not a group homomorphism because the whole point of a representation is to map group elements to important mathematical objects which do not form a group, for example real-valued matrices. Similarly, requiring the representation map to be a $\Var$-morphism would be way too stringent and would drastically limit the choice of possible semantics. By defining a Kleisli representation as a $\Sigma$-algebra morphism we do not require that
all equations in $E$ be \emph{valid} in $\EK(X)$, but we do require that
every operation in $\Sigma$ be \emph{interpretable} in $\EK(X)$. 

In order to endow $\EK(X)$ with a suitable algebraic structure from a
signature $\Sigma$, we proceed in the footsteps of
\cite{2001:PlotkinPowerAEAdequacy,2001:PlotkinPowerSemanticsAE,2003:PlotkinPowerEffects}:
for every $\sigma\in \Phi$ consider the set of natural transformations
$[\cat,\cat](T^{\ari(\sigma)},T)$ (as usual we take $T^0=\one$, the
constant functor on the final object 1). For each $\sigma\in \Phi$ we
choose an element
$\alpha^\sigma\in \left[\cat,\cat\right](T^{\ari(\sigma)},T)$ and
define the operation
$\lsem\sigma\rsem: \EK(X)^{\ari(\sigma)}\to \EK(X)$ by
\begin{equation}\label{eq:OperationDefinition}
  \lsem \sigma\rsem (a_1,\ldots,a_{\ari(\sigma)})=\alpha^\sigma_X\circ \langle a_1,\ldots,a_{\ari(\sigma)}\rangle
\end{equation}
We can now define a generic Kleisli representation as follows: a
  Kleisli $T$-representation of $A$ in $\cat_T$ is an assignment to
  every $\sigma\in\Phi$ of a natural transformation
  $\alpha^\sigma: T^{\ari(\sigma)}\to T$ together with a
  $\Sigma$-algebra morphism
\[
  \rho: (A,\prog[skip],\prog[\sComp],\sigma \in \Phi)
  \longrightarrow(\EK(X),\eta^T_X,\circ_T,
  (\lsem\sigma\rsem)_{\sigma\in\Phi})
\]


\paragraph{Naturality and abstraction} The requirement that operations be interpreted by natural transformations could be seen as either too strong or too weak. Too strong because it is a very restrictive condition whose justification is not immediately obvious. Too weak because it is strictly weaker than the requirement of \cite{2001:PlotkinPowerAEAdequacy,2001:PlotkinPowerSemanticsAE} defining algebraic operations where compatibility conditions with the strength and multiplication of the monad are assumed. So why have we chosen to focus on naturality?

First, algebraic operations are in general too restrictive for our
purpose. Examples of non-algebraic operations include
exception-handling operators \cite{2003:PlotkinPowerEffects} (which we
will address in Section~\ref{sec:combining}), `true' parallel
composition axiomatised by the \emph{exchange law}
\cite{1988:GischerEquational} and, more recently, examples from Game
Logic
\cite{2014:HansenPDL}. 
Actually, our work is to a large degree a systematic investigation of
what can be said about the semantics of programs with operations that
are \emph{not necessarily algebraic} in the sense of Plotkin and Power
-- a research direction already mentioned in
\cite{2001:PlotkinPowerSemanticsAE} -- for a wide range of monads.

Second, it is important to be able to consider \emph{sub-representations} and
\emph{quotient representations}, and, as we will show, naturality plays a key
role in allowing these to be defined.  
Often we need to abstract away details of a representation in a
large, fine-grained state space and build a representation in a
coarser one, but in such a way that both representations `agree' with
each other. Formally, for a quotient map $q : X \to Q$ and two Kleisli representations
$\rho : A \to \EK[T](X)$, $\rho' : A \to \EK[T](Q)$ we need that the equation
\begin{align*}
 T q \comp \rho(a) = \rho'(a) \comp q 
\end{align*}
holds for all programs $a \in A$.  Abstracting and
then interpreting should be the same as interpreting and then
abstracting. The naturality of program operations allows a \emph{compositional} construction of quotient representations
because it allows to prove that the equation
$T q \comp \rho(a) = \rho'(a) \comp q$ holds just by showing that it
holds for atomic programs. For sequential composition, this follows from the naturality of $\mu$: 
\begin{equation*}\label{diag:abstraction}
  \xymatrix@R=4ex {
    X\ar[r]^{\rho(a)}\ar[d]_{q} & TX\ar[r]^{T\rho(b)}\ar[d]^{Tq} & T^2X\ar[r]^{\mu_X}\ar[d]^{T^2q} & TX\ar[d]^{Tq}\\
    Q\ar[r]_{\rho'(a)} & TQ\ar[r]_{T\rho'(b)} & T^2Q\ar[r]_{\mu_Q} &
    TQ }
\end{equation*}
and for other operations the naturality requirement on $\alpha : T \times T \to T$
makes the following diagram also commute.
\[
  \xymatrix@R=4ex@C=12ex {
    X\ar[r]^-{\langle \rho(a),  \rho(b)\rangle}\ar[d]_{q} & TX\times TX\ar[r]^{\alpha_X}\ar[d]^{Tq \times  Tq} & TX\ar[d]^{Tq}\\
    Q\ar[r]_-{\langle \rho'(a), \rho'(b)\rangle} & TQ\times
    TQ\ar[r]_{\alpha_Q} & TQ }
\]

\noindent
So naturality allows to freely extend an abstraction from atomic
programs to all programs in the language.

\vspace{-1ex}
\section{Interpreting constants and operations}\label{sec:operations}

\subsection{Constants}
The following result, despite its simplicity, gives a very
general characterisation of natural transformations
$\underline{1}\to T$.

\begin{theorem}
  \label{thm:TNfromInitCriterion}
  Let $\cat$ be a category with an initial object $\emptyset$ and an
  object $1$ such that $ \cat(1, \_) \cong \Id$, then we have the sequence of bijections below.
  \[
    \left[\cat,\cat\right](\underline{1},F)\cong
    \cat(1,F\emptyset)\cong F\emptyset
  \]
\end{theorem}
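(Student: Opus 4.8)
The plan is to dispatch the second bijection immediately and concentrate all the work on the first. Indeed, $\cat(1,F\emptyset)\cong F\emptyset$ is nothing but the component at the object $F\emptyset$ of the assumed natural isomorphism $\cat(1,\_)\cong\Id$, so it requires no separate argument. Everything therefore reduces to exhibiting a bijection
\[
  [\cat,\cat](\one,F)\;\cong\;\cat(1,F\emptyset),
\]
and my strategy is to show that a natural transformation out of the constant functor $\one$ is rigidly determined by its single component at the initial object $\emptyset$.

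For the forward map, I would first unwind what naturality of a $\tau:\one\Rightarrow F$ actually says. Its components are morphisms $\tau_X: 1\to FX$, and since the constant functor sends every arrow to $\id_1$, the naturality square for an arbitrary $f:X\to Y$ collapses to $F(f)\comp\tau_X=\tau_Y$. Specialising $f$ to the unique arrow $!_Y:\emptyset\to Y$ out of the initial object yields the key identity $\tau_Y=F(!_Y)\comp\tau_\emptyset$, which shows that the whole transformation is recovered from $\tau_\emptyset$. I would therefore define the candidate bijection by $\Theta:\tau\mapsto\tau_\emptyset$.

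To see that $\Theta$ is a bijection I would construct its inverse explicitly. Given $t:1\to F\emptyset$, set $\hat\tau_Y:=F(!_Y)\comp t$. The one thing to check is naturality of $\hat\tau$, and this is exactly where initiality does the work: for $g:Y\to Z$ we have $F(g)\comp\hat\tau_Y=F(g\comp{!_Y})\comp t$, and $g\comp{!_Y}:\emptyset\to Z$ must equal $!_Z$ by uniqueness of arrows out of $\emptyset$, so $F(g)\comp\hat\tau_Y=F(!_Z)\comp t=\hat\tau_Z$. Since $!_\emptyset=\id_\emptyset$ we get $\hat\tau_\emptyset=t$, whence $\Theta(\hat\tau)=t$; conversely the identity $\tau_Y=F(!_Y)\comp\tau_\emptyset$ established above gives $\widehat{\Theta(\tau)}=\tau$. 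Composing this bijection with the hypothesis applied at $F\emptyset$ yields the full chain.

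I do not anticipate a genuine obstacle here: the argument is a short Yoneda-style computation, and the only points demanding care are purely bookkeeping --- correctly using that $\one$ acts as $\id_1$ on morphisms (so naturality degenerates), and invoking initiality twice, once to collapse $\tau_Y$ to $\tau_\emptyset$ and once to verify that the reconstructed family is natural. It is worth noting that the hypothesis $\cat(1,\_)\cong\Id$ is used only for the final bijection; the first holds for the constant functor at any object whatsoever.
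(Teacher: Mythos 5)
Your proof is correct and follows essentially the same route as the paper's: the bijection is given by evaluation at $\emptyset$ in one direction and by $t\mapsto F(!_Y)\comp t$ in the other, with naturality and initiality doing exactly the work you describe. You supply two details the paper leaves implicit (the naturality check for the reconstructed transformation and the observation that $\cat(1,\_)\cong\Id$ is needed only for the second bijection), but these are elaborations of the same argument, not a different one.
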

\noindent Note that in $\Set$ the result above is a trivial consequence of the
Yoneda lemma, since $\underline{1}$ is representable as
$\hom(\emptyset,-)$.  

\begin{example}
\begin{enumerate}
\item The $\maybe$ monad $\Maybe$ has exactly one natural transformation
  $\left[\Set,\Set\right](\underline{1},\Maybe)=\Maybe \emptyset = 
  \{\lambda x. \ast\}$.
\item
  $\left[\Set,\Set\right](\underline{1},\Dist)=\Dist\emptyset=\emptyset$,
  and since the category $\Pol$ of Polish spaces satisfies the
  assumptions of Theorem \ref{thm:TNfromInitCriterion}, it is also the
  case that $\left[\Pol,\Pol\right](\underline{1},\Giry)=\emptyset$. Probabilistic programs do not support partial computations.
\item
  $\left[\Set,\Set\right](\underline{1},\HybM)=\HybM\emptyset=\emptyset$
  and thus hybrid programs also do not support an interpretation
  of failure.
\end{enumerate}
\end{example}

\subsection{Operations}
\paragraph{4.2.1. Coproducts of $\hom$ functors} We start with those
functors that can be written as coproducts of $\hom$ functors, since
they can be treated completely straightforwardly.  Actually, using the
notions of container and fibration, \cite{abbott03} already provides a
powerful representation theorem for natural transformations
$T \times T \to T$ when $T$ is one such functor.  In order to keep
this paper self-contained, however, we introduce a direct, equivalent
result that does not need the notion of fibration nor the notion of
container.

\begin{theorem}\label{thm:PlusCoprodHom} 
  If $F: \Set\to\Set$ is a functor expressible as a coproduct of
  $\hom$ functors, i.e. if there exists a non-empty family
  $(X_i)_{i\in I}$ of sets such that $F=\coprod_{i\in I}\hom(X_i,-)$,
  then
  \[
    \left[\Set,\Set\right](F\times F,F)\cong \prod_{i,j\in I}
    F(X_i+X_j)
  \]
\end{theorem}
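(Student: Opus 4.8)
The plan is to show that $F\times F$ is itself a coproduct of $\hom$ functors, after which the Yoneda lemma and the interaction of the contravariant hom-functor with colimits deliver the result almost immediately.

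First I would use that in $\Set$ binary products distribute over coproducts, so that for every set $Y$ there is a natural isomorphism
\[
(F\times F)(Y)=\Big(\coprod_{i\in I}\hom(X_i,Y)\Big)\times\Big(\coprod_{j\in I}\hom(X_j,Y)\Big)\cong\coprod_{i,j\in I}\big(\hom(X_i,Y)\times\hom(X_j,Y)\big).
\]
Then, by the universal property of the coproduct $X_i+X_j$, each factor $\hom(X_i,Y)\times\hom(X_j,Y)$ is naturally isomorphic to $\hom(X_i+X_j,Y)$, since a pair of maps out of $X_i$ and out of $X_j$ is exactly one map out of $X_i+X_j$. As all of these isomorphisms are natural in $Y$, I would conclude that, as endofunctors of $\Set$,
\[
F\times F\cong\coprod_{i,j\in I}\hom(X_i+X_j,-).
\]

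Next I would exploit that a natural transformation out of a coproduct is precisely an independent family of natural transformations out of the summands (the contravariant hom-functor $\left[\Set,\Set\right](-,F)$ sends the colimit to a limit):
\[
\left[\Set,\Set\right]\Big(\coprod_{i,j\in I}\hom(X_i+X_j,-),F\Big)\cong\prod_{i,j\in I}\left[\Set,\Set\right]\big(\hom(X_i+X_j,-),F\big).
\]
Finally, the Yoneda lemma identifies each term $\left[\Set,\Set\right](\hom(X_i+X_j,-),F)$ with $F(X_i+X_j)$, and composing the three displayed bijections yields $\left[\Set,\Set\right](F\times F,F)\cong\prod_{i,j\in I}F(X_i+X_j)$, as required.

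The one point that needs care is the first step: I must check that the distributivity isomorphism is natural in $Y$, so that the rewriting of $F\times F$ holds at the level of functors rather than merely objectwise; this is exactly where the hypothesis of working in the distributive category $\Set$ is used. The non-emptiness of the family $(X_i)_{i\in I}$ ensures $F$ is non-degenerate but is otherwise inessential to the argument. Everything else reduces to the Yoneda lemma and the standard behaviour of the hom-functor with respect to (co)products, so I anticipate no genuine obstacle once the first step is laid out carefully.
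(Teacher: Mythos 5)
Your proposal is correct and follows essentially the same route as the paper's proof: distributivity of binary products over coproducts in $\Set$, the identification $\hom(X_i,-)\times\hom(X_j,-)\cong\hom(X_i+X_j,-)$ (contravariant hom turning colimits into limits), the fact that natural transformations out of a coproduct form a product, and finally the Yoneda lemma. The only cosmetic difference is that the paper additionally writes out the explicit bijection, sending $s=(s_{ij})$ to the transformation $(a,b)\mapsto F[a,b](s_{ij})$, which your argument implies but does not display.
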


\begin{example}
  \label{examples_bin}
  The maybe monad $\Maybe$ can be written as the coproduct
  $\hom(\emptyset,-)+\hom(1,-)$. It follows from Theorem
  \ref{thm:PlusCoprodHom} that the possible interpretations of a
  binary operation $\Maybe^2 \to \Maybe$ are in bijective
  correspondence with the set
  $\Maybe(2)\times \Maybe(1)\times \Maybe(1)$, in particular there are
  exactly 12 natural transformations $\Maybe^2\to\Maybe$. The
  `$\Maybe(2)$-coordinate' specifies what a transformation does on pairs
  $(x,y)$ with $x,y\neq \ast$, viz. projecting to the left, to the
  right or mapping to $\ast$, the first $\Maybe(1)$-coordinate
  specifies what happens to pairs $(x,\ast), x\neq \ast$,
  viz. projecting to the left or the right, and similarly for the last
  coordinate and pairs $(\ast,y), y\neq \ast$. 
  
\end{example}

\begin{example}
  Since $\HybM = \coprod_{ d \in [0,\infty)} \hom([0,d], -)$,
  the natural transformations $\HybM \times \HybM \to \HybM$
  are in bijective correspondence with the set
  \begin{flalign*}
   \prod_{i,j\in[0,\infty)} \HybM([0,i]+[0,j]) 
 \end{flalign*}
 For an element $s$ of this set,  each $(i,j)$-coordinate
 $s_{ij}$ dictates what the transformation $\alpha^s$ does to pairs of
 evolutions with duration $[0,i]$ and $[0,j]$. In particular, it tells
 how the values in a given pair of evolutions $(f,g)$ of duration
 $[0,i]$ and $[0,j]$ are distributed in the new evolution: 
 one has the composition
    \[
      \xymatrix{ [0,k] \ar[r]^(0.40){s_{ij}}
        \ar@/_1.2pc/[rr]_{\alpha^s_X(f,g)} & [0,i] + [0,j]
        \ar[r]^(0.67){[f,g]} & X }
    \]
    which makes clear that for every element $a \in [0,k]$ the value
    $\alpha^s_X(f,g) \left (a \right )$ arises from one of the two
    starting functions ($f$ or $g$) and an element in their respective
    domain ($[0,i]$ or $[0,j]$). 
\end{example}


\noindent
To cover a broader spectrum of monads
we need to introduce some more sophisticated mathematics.

\paragraph{4.2.2. The presentation of $\Set$-valued functors.}

Let $\cat$ be a \emph{small} category and $F:\cat\to\Set$ be a
functor. We define the \emph{category of elements of $F$}, denoted
$\El[F]$, as the category whose objects are pairs $(C,\alpha)$ where
$C$ is an object in $\cat$ and $\alpha\in FC$. There exists a morphism
$\hat{f}:(C,\alpha)\to(D,\beta)$ in $\El[F]$ whenever there exists a
morphism $f:C\to D$ such that $Ff(\alpha)=\beta$. For every object
$(C,\alpha)$ in $\El[F]$, we will define the \emph{orbit} of
$(C,\alpha)$ as all the objects $(D,\beta)$ which can be reached from
$(C,\alpha)$ by a zigzag of morphisms in $\El[F]$. The decomposition
of $\El[F]$ in orbits is key to understanding natural transformations
involving $F$, since naturality is only a constraint on objects in the
same orbit. The category $\El[F]$ allows us to completely reconstruct $F$. Moreover, this reconstruction
process provides us with a presentation of $F$ as a colimit of
covariant $\hom$ functors.

\begin{theorem}[\cite{2012:MaclaneSheaves} I.5] \label{thm:FuncRep}
  Let $\Yoneda: \cat\op\to[\cat,\Set]$ denote the Yoneda embedding,
  and $\Forg_F:\El[F]\to\cat$ be the forgetful functor sending each
  pair $(A,\alpha)$ to the object $A$, then
  \[
    F\cong \colim
    \left(\El[F]\op\stackrel{\Forg_F\op}{\longrightarrow}\cat\op\stackrel{\Yoneda}{\longrightarrow}[\cat,\Set]\right)
  \]
\end{theorem}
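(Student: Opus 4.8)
The plan is to prove the isomorphism directly from the universal property of the colimit: I will exhibit an explicit cocone from the diagram $D := \Yoneda\comp\Forg_F\op : \El[F]\op \to [\cat,\Set]$ to the functor $F$, and then show that this cocone is initial among all cocones with apex in $[\cat,\Set]$. Recall that $D$ sends an object $(A,\alpha)$ of $\El[F]\op$ to the representable $\hom(A,-)$, and sends the opposite of a morphism $\hat f : (A,\alpha)\to(B,\beta)$ of $\El[F]$ (so $Ff(\alpha)=\beta$) to the precomposition map $\hom(B,-)\to\hom(A,-)$, $h\mapsto h\comp f$.

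First I would build the cocone. By the Yoneda lemma, natural transformations $\hom(A,-)\to F$ are in bijection with elements of $FA$, so for each object $(A,\alpha)$ I take $\lambda_{(A,\alpha)}:\hom(A,-)\to F$ to be the transformation corresponding to $\alpha$, whose component at $C$ sends $f:A\to C$ to $Ff(\alpha)\in FC$. Checking that the family $(\lambda_{(A,\alpha)})$ is a cocone over $D$ reduces, for a morphism $\hat f:(A,\alpha)\to(B,\beta)$, to the identity $F(h\comp f)(\alpha)=Fh(\beta)$, which is immediate from functoriality of $F$ and $\beta=Ff(\alpha)$.

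The substantive part is universality. Given any other cocone $(\mu_{(A,\alpha)}:\hom(A,-)\to G)$ with apex $G$, I would again use Yoneda to encode each leg by the element $g_{(A,\alpha)} := \mu_{(A,\alpha),A}(\id_A)\in GA$, and define the candidate mediating transformation by $u_C(\alpha) := g_{(C,\alpha)}$ for $\alpha\in FC$. Three things then have to be verified, each forced by the cocone equations for $\mu$ together with naturality of its legs: that $u$ is itself natural in $C$; that $u\comp\lambda_{(A,\alpha)}=\mu_{(A,\alpha)}$ for every object of $\El[F]\op$ (which follows because a morphism $\hat f:(A,\alpha)\to(C,Ff(\alpha))$ always exists, and the cocone condition then pins down $\mu_{(A,\alpha),C}(f)=g_{(C,Ff(\alpha))}$); and that $u$ is the unique such transformation (obtained by evaluating the triangle for $(C,\alpha)$ at $\id_C$ and using $\lambda_{(C,\alpha),C}(\id_C)=\alpha$). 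None of these computations is hard once the correspondence is set up.

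I expect the main obstacle to be purely bookkeeping rather than conceptual: one must keep careful track of the two occurrences of $(-)\op$, so that the representables $\hom(A,-)$ remain covariant functors in $[\cat,\Set]$ while the diagram is indexed by $\El[F]\op$, and so that a morphism of the category of elements translates into precomposition in the correct direction. Getting these variances straight is exactly what makes the cocone condition and the universal property line up; once they do, every verification is a one-line application of Yoneda and functoriality. The result is the classical density theorem, so one could alternatively just invoke \cite{2012:MaclaneSheaves} I.5, but the argument above keeps the exposition self-contained.
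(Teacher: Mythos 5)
Your proof is correct: the canonical cocone $\lambda_{(A,\alpha)}:\hom(A,-)\to F$ obtained from Yoneda, the verification of the cocone identity via $F(h\comp f)(\alpha)=Fh(\beta)$, and the universality argument extracting $g_{(A,\alpha)}=\mu_{(A,\alpha),A}(\id_A)$ and checking naturality, factorisation and uniqueness are all sound, and you handle the $(-)\op$ variance bookkeeping correctly. The paper itself offers no proof of this statement --- it simply cites Mac Lane--Moerdijk I.5 --- and your argument is exactly the standard density-theorem proof given in that reference, so your proposal matches the intended (cited) approach while having the merit of being self-contained.
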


\noindent
Theorem \ref{thm:FuncRep} gives us a way of presenting functors from a
\emph{small} category $\cat$ to $\Set$ as a colimit of $\hom$
functors, but what we really need are presentations of functors
$\Set\to\Set$. To move from $\cat\to\Set$ to $\Set\to\Set$ we need a
few relatively well-known definitions. Recall that an
object $A$ in a category $\cat$ is \emph{finitely presentable} if the
functor $\hom(A,-)$ preserves filtered colimits. In $\Set$ the
finitely presentable objects are precisely the finite sets. A category
$\cat$ is called \emph{locally finitely presentable} if it is
cocomplete and contains a small subcategory $\cat_\omega$ of finitely
presentable objects such that every object $A$ in $\cat$ is the
filtered colimit of the canonical diagram
$\Diag_A:\cat_\omega\downarrow A\to \cat$ sending each arrow of
$\cat_\omega\downarrow A$ to its domain. The category $\Set$ is
finitely presentable, since every set $X$ can written as
$\colim\Diag_X$ with $\Diag_X:\omega\downarrow X \to \Set$ and
$\omega$ the subcategory of $\Set$ consisting of elements $n\in\omega$
(this simply says that every set is the union of its finite subsets).
Finally, a functor is called \emph{finitary} if it preserves filtered
colimits. Finitary functors are
entirely determined by their restriction to $\cat_\omega$. In fact if
$F:\cat\to\cat$ is a finitary functor on a locally finitely
presentable category and $\Inc:\cat_\omega\inc\cat$ is the inclusion
functor, then $F$ can be written as the left Kan
extension $F= \Lan (F_f)$, where $F_f = F \comp \Inc$.
We refer the interested reader to the classic \cite{1994:AdamekLocally} for a full account of the theory of locally finitely presentable categories. 

\begin{proposition}\label{prop:FinFuncRep}
  Let $F:\Set\to\Set$ be finitary and let $\Inc:\omega\inc\Set$ be the
  inclusion functor, then
  \[
    F\cong \colim \left(\El[F_f]\op\stackrel{\Forg_{F_f}\op}{\longrightarrow}\omega\op
      \stackrel{\Inc\op}{\longrightarrow}\Set\op\stackrel{\Yoneda}{\longrightarrow}[\Set,\Set]\right)
  \]
\end{proposition}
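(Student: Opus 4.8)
The plan is to bootstrap from Theorem~\ref{thm:FuncRep}, which already supplies the desired colimit presentation for functors out of a \emph{small} category, and transport it through the left Kan extension using the fact that a finitary $F$ satisfies $F\cong\Lan(F_f)$ with $F_f=F\comp\Inc$. Since $\omega$ is small and $F_f:\omega\to\Set$, Theorem~\ref{thm:FuncRep} applies directly and gives
\[
F_f\cong\colim\left(\El[F_f]\op\stackrel{\Forg_{F_f}\op}{\longrightarrow}\omega\op\stackrel{\Yoneda}{\longrightarrow}[\omega,\Set]\right),
\]
where $\Yoneda:\omega\op\to[\omega,\Set]$ sends $A\mapsto\omega(A,-)$. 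I would then apply $\Lan:[\omega,\Set]\to[\Set,\Set]$ to both sides. This Kan extension exists pointwise because $\Set$ is cocomplete, and it is left adjoint to restriction $(-)\comp\Inc$; being a left adjoint it preserves all colimits, so
\[
F\cong\Lan(F_f)\cong\colim\left(\El[F_f]\op\stackrel{\Forg_{F_f}\op}{\longrightarrow}\omega\op\stackrel{\Yoneda}{\longrightarrow}[\omega,\Set]\stackrel{\Lan}{\longrightarrow}[\Set,\Set]\right).
\]

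The crux is to evaluate $\Lan$ on representables and show $\Lan(\omega(A,-))\cong\Set(A,-)$ naturally in $A$. I would prove this by comparing hom-sets: for every $H:\Set\to\Set$, the adjunction $\Lan\dashv(-)\comp\Inc$ and the Yoneda lemma in $[\omega,\Set]$ give
\[
[\Set,\Set]\bigl(\Lan(\omega(A,-)),H\bigr)\cong[\omega,\Set]\bigl(\omega(A,-),H\comp\Inc\bigr)\cong(H\comp\Inc)(A)=H(A),
\]
whereas the Yoneda lemma in $[\Set,\Set]$ gives $[\Set,\Set](\Set(A,-),H)\cong H(A)$. Both isomorphisms are natural in $H$, so a final application of Yoneda identifies $\Lan(\omega(A,-))$ with $\Set(A,-)$, and naturality in $A$ upgrades this to a functor isomorphism $\Lan\comp\Yoneda\cong\Yoneda\comp\Inc\op$, now with $\Yoneda:\Set\op\to[\Set,\Set]$ on the right. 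Substituting this into the second display collapses the trailing $\Yoneda$-then-$\Lan$ into $\Inc\op$-then-$\Yoneda$, which is exactly the presentation in the statement.

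The point that requires care is size: $\Set$ is not small, so Theorem~\ref{thm:FuncRep} cannot be invoked for $F$ itself, and this is precisely what forces the detour through the small category $\omega$ together with the Kan extension. I expect the representable computation of the second paragraph to be the only substantive step; that $\Lan$ preserves the colimit and that Theorem~\ref{thm:FuncRep} applies to $F_f$ are formal, so the whole argument reduces to verifying $\Lan(\omega(A,-))\cong\Set(A,-)$ and its naturality in $A$.
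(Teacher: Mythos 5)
Your proof is correct and follows essentially the same route as the paper: write $F\cong\Lan(F_f)$, present $F_f$ via Theorem~\ref{thm:FuncRep} on the small category $\omega$, use that $\Lan:[\omega,\Set]\to[\Set,\Set]$ is a left adjoint to push the colimit through, and identify $\Lan$ of a representable on $\omega$ with the corresponding representable on $\Set$. The only difference is that you actually verify the key identity $\Lan(\omega(A,-))\cong\Set(A,-)$ via the adjunction-plus-Yoneda hom-set computation, whereas the paper simply cites it as a known property.
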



\noindent The following result is a simple application of Proposition
\ref{prop:FinFuncRep} and the Yoneda lemma.

\begin{theorem}\label{thm:BinaryTNatFinitary}
  Let $F:\Set\to\Set$ be a finitary functor and let $F_f$ denote its
  restriction to $\omega$, then the set $[\Set,\Set](F\times F,F)$ is
  in one-to-one correspondence with the limit
  \begin{align}\label{eq:TNatProdLim}
    \lim \left(\El[F_f\times F_f] \stackrel{F_f \circ \Forg}{\longrightarrow}\Set\right)
  \end{align}
\end{theorem}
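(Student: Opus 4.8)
The plan is to apply Proposition~\ref{prop:FinFuncRep} not to $F$ itself but to the product functor $F \times F$, and then to convert the resulting colimit presentation into a limit by exploiting that the contravariant hom-functor $[\Set,\Set](-,F)$ carries colimits to limits.

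First I would check that $F \times F$ is again finitary, so that Proposition~\ref{prop:FinFuncRep} is applicable to it. This rests on the standard fact that in $\Set$ filtered colimits commute with finite limits, in particular with binary products; since $F$ preserves filtered colimits and the product functor does too, the composite $F \times F$ preserves them as well. Its restriction to $\omega$ is plainly $(F \times F)_f = F_f \times F_f$. Consequently Proposition~\ref{prop:FinFuncRep} yields the presentation
\[
  F \times F \cong \colim\left(\El[F_f\times F_f]\op \stackrel{\Forg\op}{\longrightarrow} \omega\op \stackrel{\Inc\op}{\longrightarrow} \Set\op \stackrel{\Yoneda}{\longrightarrow} [\Set,\Set]\right),
\]
exhibiting $F \times F$ as a colimit of representable functors $\hom(C,-)$ indexed by the objects $(C,\alpha)$ of $\El[F_f\times F_f]\op$.

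Next I would apply $[\Set,\Set](-,F)$ to this colimit. Because a contravariant hom-functor sends colimits in its first argument to limits, this gives
\[
  [\Set,\Set](F\times F, F) \cong \lim_{(C,\alpha)\in\El[F_f\times F_f]} [\Set,\Set](\hom(C,-),F),
\]
where the indexing category of the limit is $\El[F_f\times F_f]$, the opposite of the indexing category of the colimit. The final step is the Yoneda lemma, $[\Set,\Set](\hom(C,-),F)\cong FC$, naturally in $(C,\alpha)$. Since every $C$ occurring here is a finite set, $FC = F_f C = (F_f\comp\Forg)(C,\alpha)$, so the limit is exactly $\lim\left(\El[F_f\times F_f] \stackrel{F_f\comp\Forg}{\longrightarrow}\Set\right)$, which is the right-hand side of \eqref{eq:TNatProdLim}.

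The Yoneda identification and the colimit-to-limit conversion are routine; the one genuinely load-bearing step is the first, namely that $F\times F$ is finitary, since this is precisely what licenses the use of Proposition~\ref{prop:FinFuncRep} for the product. The subtle bookkeeping lies in tracking variances: the colimit presentation is over $\El[F_f\times F_f]\op$, so after applying the contravariant hom the limit reverts to $\El[F_f\times F_f]$, which explains why the index category in \eqref{eq:TNatProdLim} carries no ${}\op$. One should also confirm that the transition maps of the limit diagram agree with those induced by $F_f\comp\Forg$, which follows from the naturality of the Yoneda isomorphism.
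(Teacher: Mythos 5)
Your proposal is correct and follows essentially the same route as the paper's own proof: apply Proposition~\ref{prop:FinFuncRep} to the (finitary) product $F\times F$, convert the colimit of representables into a limit via the contravariant hom $[\Set,\Set](-,F)$, and finish with the Yoneda lemma to identify each $[\Set,\Set](\hom(C,-),F)$ with $F_fC$. The only difference is cosmetic: you explicitly verify that $F\times F$ is finitary (via commutation of filtered colimits with finite limits in $\Set$) and track the variance of the index category, points the paper leaves implicit.
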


\noindent The hard work consists in computing the limit
(\ref{eq:TNatProdLim}) above.

\paragraph{4.2.3. A classification result for some multiset functors.}
Computing the limit (\ref{eq:TNatProdLim}) of Theorem
\ref{thm:BinaryTNatFinitary} for general multiset-type monads (see their definition in the
appendix) depends
heavily on the choice of semiring and may prove extremely
difficult. However, we do have an explicit characterisation in the
following useful case. We say that a semiring $S$ has \emph{the common
  integer divisor property} if for any $x,y\in S$ there exist an
\emph{invertible} element $r\in S$ and $m,n\in\N$ such that
\[
x=m. r:=\underbrace{r+\ldots+r}_{m\text{ times}}\hspace{5em}y=n .r
\]
We will refer to $r$ as a \emph{common integer divisor of $x$ and $y$}. Clearly the semiring $\N$ has this property since we can always pick $r=1$ (which is trivially invertible) and $m=x, n=y$. Similarly the semiring $\Q$ has this property: given two rationals $x=\frac{m_1}{n_1}$, $y=\frac{m_2}{n_2}$ we can choose $r=\frac{1}{n_1n_2}$ (which is invertible) and $m=m_1n_2, n=m_2n_1$. The semiring of real $\R$ does not have this property: if $\frac{x}{y}$ is irrational then there doesn't exist an $r\in \R$ with the desired property.

\begin{theorem}\label{thm:GMsetTNat}
Let $S$ be a semiring with the common  integer divisor property and let $\GMset[S]$ be the multiset monad for $S$, then $\left[\Set,\Set\right]((\GMset[S])^n,\GMset[S])$ is in
  one-to-one correspondence with the set of functions
  $\phi:S^n\to S^n$.
\end{theorem}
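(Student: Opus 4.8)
The plan is to feed the multiset monad into the $n$-ary analogue of Theorem~\ref{thm:BinaryTNatFinitary} (which follows from Proposition~\ref{prop:FinFuncRep} and the Yoneda lemma in exactly the same way as the binary case) and then to compute the resulting limit by hand. So the first step is to record the isomorphism
$[\Set,\Set]((\GMset[S])^n,\GMset[S]) \cong \lim\left(\El[((\GMset[S])_f)^{n}] \xrightarrow{(\GMset[S])_f\circ\Forg}\Set\right)$,
which reduces everything to an understanding of $\El[((\GMset[S])_f)^{n}]$. I would first unpack this category concretely: since $(\GMset[S])^n(m)\cong(S^m)^n\cong(S^n)^m$ for a finite set $m$, an object is a finite set $m$ together with a labelling of each of its points by a \emph{profile} in $S^n$, and a morphism $(m,\vec\mu)\to(m',\vec\mu')$ is a function $f:m\to m'$ that pushes profiles forward by summing them over fibres.

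The second step is the orbit decomposition that the discussion after Theorem~\ref{thm:FuncRep} flags as the key to such limits. The total profile $\sum_{x\in m}\vec\mu(x)\in S^n$ is invariant under morphisms, and any two configurations with the same total profile are joined by the span collapsing each of them to the one-point configuration carrying that profile; hence the orbits of $\El[((\GMset[S])_f)^{n}]$ are in bijection with $S^n$. Since naturality only constrains objects lying in the same orbit, the limit splits as a product $\prod_{\vec s\in S^n}L_{\vec s}$ of the limits over the individual orbits, and it suffices to prove $L_{\vec s}\cong S^n$ for every $\vec s$; the product is then $(S^n)^{S^n}$, i.e.\ the set of functions $\phi:S^n\to S^n$.

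The heart of the proof, and the only place the common integer divisor property enters, is the computation of $L_{\vec s}$. Fixing $\vec s$, I would use the property to choose an \emph{invertible} common divisor $r$ with $s_i=N_i\cdot r$, and form the \emph{atomic} configuration $A_{\vec s}$ consisting, for each coordinate $i$, of $N_i$ points of profile $r\,e_i$. The claim is that every configuration in the orbit arises from $A_{\vec s}$ by a refinement/merge zig-zag, so that by naturality a cone is determined by its single value at $A_{\vec s}$; that value is forced to be $\prod_i S_{N_i}$-symmetric, hence collapses to an element $(w_1,\dots,w_n)\in S^n$. Conversely I would realise the natural transformation attached to an arbitrary $\phi$ explicitly by the formula $\alpha_X(\mu_1,\dots,\mu_n)(x)=\sum_{i=1}^n \mu_i(x)\cdot\phi_i(\vec s)$, with $\vec s$ the total profile, and verify naturality directly, which shows that every element of $S^n$ is attained and closes the bijection. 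The main obstacle is the rigidity argument for $L_{\vec s}$: one must show that the symmetric value on $A_{\vec s}$ both suffices to reconstruct the entire compatible family and is genuinely unconstrained, and it is precisely the \emph{invertibility} of $r$ that makes the atomic configuration a legitimate finest representative and forces the inverse system over the various admissible divisors to collapse to a single copy of $S^n$. Without this hypothesis — for instance over $\R$ with an irrational ratio $s_1/s_2$ — no common atomic refinement exists and the reconstruction fails, which is exactly why the property is assumed.
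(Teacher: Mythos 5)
Your proposal is correct and takes essentially the same route as the paper's proof: the $n$-ary version of Theorem~\ref{thm:BinaryTNatFinitary}, the orbit decomposition of the category of elements by total masses in $S^n$, an atomic configuration built from an \emph{invertible} common integer divisor whose permutation-invariance forces block-constant (i.e.\ weighted-sum) values, and the explicit formula $\alpha_X(\vec\mu)(x)=\sum_i \mu_i(x)\cdot\phi_i(\vec s)$ for the converse. The one point to be careful about---which you yourself flag as the main obstacle---is that zig-zag connectivity alone does not let the value at $A_{\vec s}$ propagate, since $A_{\vec s}$ is not a finest representative (over $\Q$, say, there are strictly finer configurations, and no morphism out of $A_{\vec s}$ reaches them); determining the whole thread requires, for each configuration $B$ in the orbit, a common refinement $A^{(q)}$ mapping onto both $A_{\vec s}$ and $B$, which is exactly the second invocation of the common integer divisor property in the paper's consistency step.
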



\paragraph{4.2.4. Some results for the Giry monad}\label{sec:Giry}
Due to its importance, we provide some detailed results for the Giry
monad $\Giry$, which clarifies what can be
expected of purely probabilistic program semantics. Before we turn to the full Giry monad, let us consider the rational
distribution monad $\Dist_r$. The following can be shown using the
same ideas as in the proof of Theorem \ref{thm:GMsetTNat}.
\begin{theorem}\label{thm:TNatRatDist}
The only natural transformations $\Dist_r\times\Dist_r\to \Dist_r$ are the
  convex combinations $+^\lambda, \lambda\in \unit\cap\Q$.
\end{theorem}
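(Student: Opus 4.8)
The plan is to apply Theorem~\ref{thm:BinaryTNatFinitary}, which identifies $[\Set,\Set](\Dist_r\times\Dist_r,\Dist_r)$ with the limit $\lim\left(\El[(\Dist_r)_f\times(\Dist_r)_f]\stackrel{(\Dist_r)_f\comp\Forg}{\longrightarrow}\Set\right)$. First I would unpack what this limit means concretely. An object of $\El[(\Dist_r)_f\times(\Dist_r)_f]$ is a triple $(n,\mu,\nu)$ where $n\in\omega$ and $\mu,\nu\in\Dist_r(n)$ are rational distributions on the finite set $n$; a natural transformation $\alpha$ then assigns to each such triple an element $\alpha_n(\mu,\nu)\in\Dist_r(n)$, subject to the naturality constraint that for every function $f:n\to m$ one has $\Dist_r f(\alpha_n(\mu,\nu))=\alpha_m(\Dist_r f(\mu),\Dist_r f(\nu))$. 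The goal is to show that every such family is of the form $\alpha_n(\mu,\nu)=\lambda\mu+(1-\lambda)\nu$ for a single fixed $\lambda\in\unit\cap\Q$ independent of $n$.

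Next I would pin down $\lambda$ using the smallest nontrivial case and then bootstrap. Working on the one-point set $n=1$, there is a unique distribution, so naturality there is vacuous; the genuine information starts at $n=2$. I would take the two Dirac distributions $\delta_0,\delta_1\in\Dist_r(2)$ and set $\alpha_2(\delta_0,\delta_1)=:\mu_0$, a rational distribution on $\{0,1\}$, which I write as $\lambda\delta_0+(1-\lambda)\delta_1$ for some rational $\lambda\in\unit$; this \emph{defines} the candidate parameter. The heart of the argument is then to propagate this single choice to all inputs by exploiting naturality along cleverly chosen maps. Collapsing maps $f:n\to 1$ force $\alpha_n(\mu,\nu)$ to be a genuine distribution (this is automatic), while injections and, crucially, the Dirac-valued maps $n\to\Dist_r(n)$ let me reduce an arbitrary pair $(\mu,\nu)$ to combinations of the two-point base case. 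Concretely, since $\Dist_r$ is finitary each $\mu$ is supported on finitely many points, and I would use naturality along the maps picking out support points to express $\alpha_n(\mu,\nu)$ as a convex-affine combination governed by the same $\lambda$.

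The main obstacle — and the reason the theorem restricts to \emph{rational} distributions, invoking ``the same ideas as in the proof of Theorem~\ref{thm:GMsetTNat}'' — is showing that the operation must be \emph{affine} in each argument with a \emph{uniform} coefficient, rather than some wilder function of $\mu$ and $\nu$. This is precisely where the common integer divisor property of $\Q$ (verified in the text) does the work: given arbitrary rational weights appearing in $\mu$ and $\nu$, one can find a common integer divisor $r$ and integers $m,n$ so that every weight is an integer multiple of a single invertible rational, which permits splitting $\mu$ and $\nu$ into $r$-sized ``atoms'' via a naturality square over a map that duplicates points. Naturality across this refinement forces $\alpha$ to distribute over the decomposition additively, and combined with the $n=2$ normalisation this pins the coefficient to the fixed $\lambda$ on all of $S^n=\Q^2$. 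I would therefore expect the bulk of the effort to lie in this refinement-and-recombination step — constructing the right family of finite maps and checking the naturality squares commute — while the final identification of the resulting affine map with $+^\lambda$ is then routine.
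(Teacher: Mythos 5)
Your overall route is exactly the paper's: Theorem~\ref{thm:TNatRatDist} is indeed proved ``using the same ideas as in the proof of Theorem~\ref{thm:GMsetTNat}'', i.e.\ apply Theorem~\ref{thm:BinaryTNatFinitary}, observe that the category of elements $\El[(\Dist_r)_f\times(\Dist_r)_f]$ has a single orbit (every pair of distributions has total masses $(1,1)$), use the common integer divisor property of $\Q$ to refine an arbitrary pair $(\mu,\nu)$ on $n$ into a pair of block-uniform ``atomic'' distributions $(\mu_1,\mu_2)$ on a larger finite set $M+N$, and push forward along the collapse map $f: M+N\to n$. You identify all of these ingredients, and you are right that the final identification with $+^\lambda$, $\lambda\in\unit\cap\Q$, is routine because the output must again be a probability distribution, which forces the weights to be $(\lambda,1-\lambda)$ with $\lambda$ a non-negative rational.

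The genuine gap is in the sentence ``Naturality across this refinement forces $\alpha$ to distribute over the decomposition additively.'' It does not. Naturality along the collapse map $f$ only yields $\alpha_n(\mu,\nu)=\Dist_r f\left(\alpha_{M+N}(\mu_1,\mu_2)\right)$; it places no constraint whatsoever on the \emph{shape} of $\alpha_{M+N}(\mu_1,\mu_2)$, so the ``wilder functions'' you correctly single out as the enemy have not been excluded at this point. The missing idea --- the actual crux of the proof of Theorem~\ref{thm:GMsetTNat} --- is a symmetry argument: the refined pair $(\mu_1,\mu_2)$ is invariant under every permutation of $M+N$ that preserves the two blocks, so naturality with respect to these permutations (endomorphisms of the refined object in the category of elements) forces $\alpha_{M+N}(\mu_1,\mu_2)$ to be constant on each block, i.e.\ of the form $(s,\ldots,s,t,\ldots,t)$; only after this step does pushing forward along $f$ produce a weighted sum of $\mu$ and $\nu$. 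Two smaller corrections: (i) your ``Dirac-valued maps $n\to\Dist_r(n)$'' play no role, since naturality in $[\Set,\Set]$ quantifies only over set maps $n\to m$; (ii) pinning $\lambda$ at $(\delta_0,\delta_1)$ and ``propagating'' from $n=2$ is not how uniformity of the coefficient is obtained --- in the paper it comes from taking a common integer divisor for \emph{two pairs of distributions simultaneously}, so that both are images of one common refinement, which is what makes the weighting scheme constant across the single orbit.
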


\noindent
Theorem \ref{thm:TNatRatDist} can be used to provide a full classification result for the Giry monad on $\Pol$. For this we use a set of criteria for functors $F, G: \Pol\to\Pol$ developed in \cite{2016:Machine} and \cite{2016:Concur} under which it can be shown that
\[
  \left[\Pol,\Pol\right](F,G)\cong \left[\Pol_f,\Pol_f\right](F_f,
  G_f)
\]
where $F_f$ is the restriction of $F$ to the category $\Pol_f$, the
category of finite Polish spaces (and similarly for $G_f$). These
criteria restrict both the domain and the codomain functors. We refer
the reader to \cite{2016:Machine} for more details; for our purpose it
will be enough to say that the Giry monad $\Giry$ always satisfies the
domain and codomain criteria (see \cite[Prop. 5.1]{2016:Machine}) and
that finite products of $\Giry$ satisfy the domain criteria (see
\cite[Prop. 17]{2016:Concur}). It follows that
\begin{equation}\label{eq:MachineGiry}
  \left[\Pol,\Pol\right](\Giry\times\Giry,\Giry)\cong \left[\Pol_f,\Pol_f\right](\Giry_f\times\Giry_f, \Giry_f)
\end{equation}
The isomorphism \eqref{eq:MachineGiry} allows the following
result to be established at the level of finite Polish sets, and then
lifted to the entire category $\Pol$.

\begin{theorem}\label{thm:PlusGiry}
  The only natural transformations $\Giry\times\Giry\to \Giry$ are the
  convex combinations $+^\lambda$ defined by the maps
  $+^\lambda_X: \Giry X\times \Giry X\to \Giry X,
  \lambda\in\left[0,1\right]$ such that
  \[
    (\mu +^\lambda_X \nu)(A)=\lambda\mu(A)+(1-\lambda)\nu(A)
  \]
  for any Borel subset $A$.
\end{theorem}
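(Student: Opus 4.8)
The plan is to reduce the problem to finite Polish spaces and then import the combinatorial argument behind Theorem~\ref{thm:TNatRatDist}, using continuity to pass from rational to real weights. By the isomorphism~\eqref{eq:MachineGiry} it suffices to classify the natural transformations $\Giry_f\times\Giry_f\to\Giry_f$ on the category $\Pol_f$ of finite Polish spaces. Two features of this setting will drive the argument: first, each component $\alpha_X$ is by definition a morphism of $\Pol_f$, hence a \emph{continuous} map $\Giry X\times\Giry X\to\Giry X$; second, for a finite discrete space $X$ the object $\Giry X$ is the simplex of probability distributions on $X$, inside which the \emph{rational} distributions are dense. I first fix the candidate weight: on the two-point space $2=\{0,1\}$, identifying $\Giry 2$ with $\unit$ via $p\mapsto p\delta_0+(1-p)\delta_1$, the value $\alpha_2(\delta_0,\delta_1)$ is some $\lambda\delta_0+(1-\lambda)\delta_1$, and this defines $\lambda\in\unit$ (a priori an arbitrary real).

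Next I would establish the formula $\alpha_X(\mu,\nu)=\lambda\mu+(1-\lambda)\nu$ for all \emph{rational} $\mu,\nu$, exactly as in Theorem~\ref{thm:TNatRatDist}. The key observation is that every rational distribution with denominator $N$ is a pushforward of a uniform distribution: concretely, take $Y$ to be the disjoint union of two copies of an $N$-element set, let $a$ be uniform on the first copy and $b$ uniform on the second. Collapsing the two copies onto $2$ and using naturality with respect to this map shows that $\alpha_Y(a,b)$ places total mass $\lambda$ on the first copy and $1-\lambda$ on the second; invariance under the permutations of each copy (which fix $a$ and $b$) then forces $\alpha_Y(a,b)=\lambda a+(1-\lambda)b$. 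Pushing this identity forward along an arbitrary map $h\colon Y\to X$, and using that $\Giry h$ is affine together with the naturality square for $h$, yields $\alpha_X(\Giry h(a),\Giry h(b))=\lambda\,\Giry h(a)+(1-\lambda)\,\Giry h(b)$. As $h$ varies, $\Giry h(a)$ and $\Giry h(b)$ range \emph{independently} over all distributions of denominator $N$, so the formula holds for every pair of rational distributions of that denominator; letting $N$ vary covers all rational pairs.

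Finally I would extend the formula from rational to arbitrary distributions and lift the result back to $\Pol$. Since the rational distributions are dense in the simplex $\Giry X$ and both $\alpha_X$ and $(\mu,\nu)\mapsto\lambda\mu+(1-\lambda)\nu$ are continuous, the identity $\alpha_X(\mu,\nu)=\lambda\mu+(1-\lambda)\nu$ holds on all of $\Giry X\times\Giry X$; this is precisely $+^\lambda_X$. The isomorphism~\eqref{eq:MachineGiry} then identifies $\alpha$ with the unique natural transformation on $\Pol$ restricting to $+^\lambda$ on finite spaces, namely the convex combination $(\mu+^\lambda_X\nu)(A)=\lambda\mu(A)+(1-\lambda)\nu(A)$, whose well-definedness and naturality on all Polish spaces are routine. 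I expect the main obstacle to be exactly the step that distinguishes this theorem from the rational case: because $\R$ lacks the common integer divisor property, the purely algebraic reduction used for Theorem~\ref{thm:GMsetTNat} cannot reach the irrational weights, and one must instead lean on the continuity of the components on $\Pol_f$ (supplied by the reduction of \cite{2016:Machine}) together with density of the rationals. Verifying carefully that \eqref{eq:MachineGiry} applies—i.e.\ that $\Giry$ and $\Giry\times\Giry$ meet the codomain and domain criteria—is the other point requiring care, though it is quoted from the cited work.
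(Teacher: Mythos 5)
Your proposal is correct and follows essentially the same route as the paper: reduce to finite Polish spaces via \eqref{eq:MachineGiry}, classify the transformations on the dense subset of rational distributions by adapting the naturality/permutation argument behind Theorems \ref{thm:GMsetTNat} and \ref{thm:TNatRatDist} (with the weight $\lambda$ now an arbitrary real), and extend by continuity and compactness of the simplex before lifting back to $\Pol$. The only difference is expository: you spell out the combinatorial adaptation (uniform distributions on two copies, permutation invariance, collapse to $2$, pushforward along arbitrary maps) that the paper dismisses as ``trivial to adapt.''
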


\noindent
The combination $\Giry \Maybe:\Pol\to\Pol$ (with $\Maybe$
topologised in the obvious way) is a monad -- by a
  straightforward generalisation of Theorem~\ref{theo_TM} -- which  is called the \emph{subdistribution monad}. It
behaves very similarly to $\Giry$, but allows failure since
$\Giry \Maybe\emptyset = 1 = \{\delta_\ast\}$, and thus by Corollary
\ref{thm:TNfromInitCriterion} there exists a natural transformation
$\underline{1}\to \Giry \Maybe$. Moreover, whilst $\El[\Giry]$ has
a single orbit, $\El[\Giry\Maybe]$ has a collection of orbits
labelled by $\lambda\in\unit$. This makes the set of natural
transformations $(\Giry\Maybe)^2\to\Giry\Maybe$ rather large, but
using the same technique as in the proofs of Theorems
\ref{thm:GMsetTNat} and \ref{thm:PlusGiry} we can describe them
concisely as follows.
\begin{theorem}\label{thm:plusSubDist}
 The set $\left[\Pol,\Pol\right](\Giry\Maybe\times \Giry\Maybe,\Giry\Maybe)$ is
 in one-to-one correspondence with continuous maps
 \[
   \phi: \unit\times \unit \to \{(r_1,r_2)\in\unit^2\mid r_1+r_2\leq 1\}
 \]
\end{theorem}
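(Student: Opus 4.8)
The plan is to follow the same route as Theorems~\ref{thm:PlusGiry} and \ref{thm:GMsetTNat}, the new ingredients being the richer orbit structure of $\El[\Giry\Maybe\times\Giry\Maybe]$ and the topological bookkeeping forced by the varying total masses. First I would reduce to finite Polish spaces. Exactly as in the derivation of \eqref{eq:MachineGiry}, I would check that $\Giry\Maybe$ and its square satisfy the domain and codomain criteria of \cite{2016:Machine,2016:Concur}; since $\Maybe(-)=(-)+1$ is a coproduct with a constant and $\Giry$ already meets these criteria, this should follow by re-running the verifications of \cite[Prop. 5.1]{2016:Machine} and \cite[Prop. 17]{2016:Concur}. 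This yields
\[
[\Pol,\Pol]((\Giry\Maybe)^2,\Giry\Maybe)\cong[\Pol_f,\Pol_f]((\Giry\Maybe)_f\times(\Giry\Maybe)_f,(\Giry\Maybe)_f),
\]
and since finite Polish spaces are just finite discrete spaces, $\Pol_f\simeq\omega$ and $(\Giry\Maybe)_f$ is the finitely supported subdistribution functor, so Theorem~\ref{thm:BinaryTNatFinitary} identifies a natural transformation with a family $(\xi_{X,\mu,\nu})$, $\xi_{X,\mu,\nu}\in\Giry\Maybe X$, compatible with every morphism $\hat f\colon(X,\mu,\nu)\to(Y,\mu',\nu')$ of $\El[(\Giry\Maybe)_f\times(\Giry\Maybe)_f]$.

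Next I would decompose $\El[(\Giry\Maybe)_f\times(\Giry\Maybe)_f]$ into orbits. Because pushforward preserves total mass, the pair $(\lambda_1,\lambda_2)=(|\mu|,|\nu|)\in\unit\times\unit$ is constant along each orbit, and a short zigzag argument --- collapse the supports of $\mu$ and $\nu$, then re-separate them through a two-element space --- shows that two objects share an orbit exactly when their mass pairs coincide. Thus the orbits are indexed by $\unit\times\unit$, which will be the domain of $\phi$. Fixing an orbit $(\lambda_1,\lambda_2)$ with $\lambda_1,\lambda_2>0$, I would compare a generic object $(X,\mu,\nu)$ with the representative $(\{L,R\},\lambda_1\delta_L,\lambda_2\delta_R)$ via the \emph{span} built from $Z=X\sqcup X$ carrying $\mu$ on the left copy and $\nu$ on the right: the merge $Z\to X$ is a morphism to $(X,\mu,\nu)$, while the crush $Z\to\{L,R\}$ is a morphism to the representative. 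Reading off the total mass of $\xi_Z$ on each copy from the representative gives a pair $(r_1,r_2)$, and the same rigidity argument as in Theorem~\ref{thm:PlusGiry} --- that naturality forces the mass sitting over the left copy to be distributed proportionally to $\mu$, and likewise on the right --- yields
\[
\xi_{X,\mu,\nu}=\tfrac{r_1}{\lambda_1}\mu+\tfrac{r_2}{\lambda_2}\nu,
\]
where the membership $\xi_{X,\mu,\nu}\in\Giry\Maybe X$ is precisely the constraint $r_1+r_2\le1$. Collecting these pairs over all orbits defines $\phi\colon\unit\times\unit\to\{(r_1,r_2)\mid r_1+r_2\le1\}$, and conversely any such $\phi$ reassembles into a compatible family, giving the bijection at the level of sets.

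The hard part is the topology and the vanishing-mass boundary, and this is where I expect the real effort. When $\lambda_1=0$ one has $\mu=0$ (all mass at $\ast$), the left copy carries no distinguished distribution, and an endomorphism argument (collapsing $L$ onto $R$) forces $r_1=0$; symmetrically $r_2=0$ when $\lambda_2=0$. I would therefore have to show that these degeneracies are compatible with --- and in fact recorded by --- the continuity requirement on $\phi$, proving that weak-continuity of each $\alpha_X$ on $\Giry\Maybe X\times\Giry\Maybe X$ is equivalent to continuity of $(\lambda_1,\lambda_2)\mapsto(r_1,r_2)$, and that the reduction isomorphism transports the weak topology faithfully. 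Checking this topological and boundary matching, rather than the orbit-wise algebra (which is routine given Theorems~\ref{thm:PlusGiry} and \ref{thm:GMsetTNat}), is what I expect to be the main obstacle.
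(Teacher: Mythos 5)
Your proposal follows the paper's own proof essentially step for step: reduce to finite Polish spaces via the criteria of \cite{2016:Machine} and \cite{2016:Concur}, decompose $\El[(\Giry\Maybe)_f\times(\Giry\Maybe)_f]$ into orbits indexed by the pair of total masses (the paper equivalently labels them by the weights on the $+1$ summand), apply the rigidity argument of Theorem~\ref{thm:GMsetTNat} orbit-wise to conclude that each orbit's assignment is a weighted sum constrained by $r_1+r_2\leq 1$, and then recover $\phi$ together with its continuity from the continuity of the components $\alpha_X$. Your explicit handling of the vanishing-mass boundary, where continuity of the components forces $r_i\to 0$ as the corresponding mass tends to $0$, is if anything more careful than the paper's one-line appeal to the transformations ``varying continuously across the orbits.''
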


\paragraph{4.2.5. The case of the powerset monad.}
A functor is called $\kappa$-accessible, for a regular cardinal $\kappa$, if it preserves $\kappa$-filtered colimits. In particular $\aleph_0$-accessible is synonymous with finitary. Theorem \ref{thm:BinaryTNatFinitary} generalizes completely straightforwardly to $\kappa$-accessible functors. Using this, we can give a complete classification result for the set of natural transformations $(\Pow)^\lambda\to\Pow$ for the full powerset monad and any cardinal $\lambda$.

\begin{theorem}\label{thm:PlusPow}
  For a cardinal $\lambda$ the set of natural transformations
  $(\Pow)^\lambda\to\Pow$, where $(\Pow)^\lambda$ denotes the
  $\lambda$-fold product of $\Pow$ with itself, is in one-to-one
  correspondence with the set of non-increasing maps
  $2^\lambda\to 2^\lambda$ (for the product order on $2^\lambda$).
\end{theorem}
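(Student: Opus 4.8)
The plan is to follow the route indicated just before the statement: invoke the $\kappa$-accessible generalisation of Theorem~\ref{thm:BinaryTNatFinitary}, which presents $\left[\Set,\Set\right]((\Pow)^\lambda,\Pow)$ as a limit over a category of elements, and then compute that limit by exploiting the very high symmetry of the powerset functor. The first move I would make is the natural isomorphism $(\Pow)^\lambda\cong\Pow(-\times\lambda)$: a $\lambda$-tuple $(S_i)_{i<\lambda}$ of subsets of $X$ is the same datum as a subset $\hat S\subseteq X\times\lambda$, i.e.\ a labelling of each $x\in X$ by its \emph{membership profile} $p_x\in 2^\lambda$, where $p_x(i)=1$ iff $x\in S_i$. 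Under this identification an object of $\El[\Pow(-\times\lambda)]$ is a set whose points are coloured by elements of $2^\lambda$, and a natural transformation is exactly a family $(\xi_{(X,\hat S)})$ of output subsets $\xi_{(X,\hat S)}\subseteq X$, one per coloured set, compatible along every map of $\Set$.

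I would then cut this family down by three successive naturality arguments. First, naturality along the inclusion of the support $\bigcup_{i}S_i\hookrightarrow X$ forces $\xi_{(X,\hat S)}\subseteq\bigcup_i S_i$, so the all-zero profile is never selected and one may restrict to full-support configurations. Second, every permutation of $X$ that preserves profiles is an automorphism of the corresponding object of $\El[\Pow(-\times\lambda)]$, and since such permutations act transitively on each profile-class, $\xi_{(X,\hat S)}$ must be a union of \emph{whole} profile-classes; hence the datum at a configuration reduces to a \emph{set of selected profiles}. Third, naturality along the profile-collapsing surjection $X\twoheadrightarrow 2^\lambda$ shows this selection depends only on the profiles present, and that under an arbitrary $f$ it transforms by the induced action on profiles, which sends a profile to the \emph{join of its fibre}. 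Packaging the selection as a single map $\phi\colon 2^\lambda\to 2^\lambda$ and reading off this fibre-join compatibility is what delivers the order condition: the induced map can never select a coordinate that is not already present, so $\phi$ is non-increasing for the product order, and conversely every non-increasing $\phi$ should reassemble into a compatible family.

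The main obstacle is exactly this last step, the computation of the limit: proving that the web of naturality squares coming from \emph{all} maps of $\Set$—in particular the non-injective ones, which merge profiles by taking joins over fibres—collapses to precisely the single requirement that $\phi$ be non-increasing, neither imposing extra constraints nor leaving extra freedom, so that the data genuinely reduces to one self-map of $2^\lambda$. The delicate half is surjectivity of the correspondence: given a non-increasing $\phi$ one must define the putative transformation on every coloured set and verify commutativity of every square uniformly, the verification being controlled by the decomposition of $\El[\Pow(-\times\lambda)]$ into profile-orbits, so that (as in Theorems~\ref{thm:GMsetTNat} and~\ref{thm:BinaryTNatFinitary}) naturality need only be checked orbit by orbit. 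A secondary technical point is that $\Pow$ is not finitary, so unlike Proposition~\ref{prop:FinFuncRep} one cannot restrict to finite sets; I would handle this through the $\kappa$-accessible version of the presentation with $\kappa$ chosen larger than $2^\lambda$, since every profile configuration is already realised on a set of size at most $2^\lambda$ and the collapse arguments reduce all of naturality to such sets.
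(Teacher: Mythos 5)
Your overall strategy coincides with the paper's --- pass to a $\kappa$-accessible variant of Theorem~\ref{thm:BinaryTNatFinitary} and compute the resulting limit by exploiting symmetry --- and your reduction steps are all correct: the identification $(\Pow)^\lambda\cong\Pow(-\times\lambda)$, naturality along the support inclusion, invariance under profile-preserving permutations, and naturality along the profile collapse $c\colon x\mapsto p_x$ together show that the thread component at a coloured set $(X,c)$ equals $c^{-1}[\Sigma(P)]$, where $P=\mathrm{im}(c)\subseteq 2^\lambda$ is the set of realised profiles and $\Sigma(P)\subseteq P\setminus\{0\}$ depends only on $P$. (Your choice $\kappa>2^\lambda$ is also the right repair of the size issue.) The gap is what comes next: after your third step the thread data is a \emph{selection function} $\Sigma$, one subset of $P$ for \emph{every} $P\subseteq 2^\lambda$ --- a priori an object of size comparable to $2^{2^\lambda}$, not a single map $2^\lambda\to 2^\lambda$. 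The phrase ``packaging the selection as a single map $\phi$'' is precisely the content of the theorem, and your proposal explicitly defers it as ``the main obstacle'' instead of proving it.

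This deferral is not harmless, because the constraints you have left yourself --- naturality among the collapsed objects $(P,\mathrm{id})$, where maps act by fibre-joins --- are genuinely too weak. Any morphism out of the atomic object $A_J=\{e_j\mid j\in J\}$ (with $e_j$ the profile supported exactly at $j$ and $J=\bigvee P$) sends distinct fibres to joins of disjoint sets of atoms, so it only reaches profile sets with pairwise disjoint supports; dually, no morphism from a non-atomic $P$ can land in $A_J$. Hence for $P=\{(1,1,0),(0,1,1)\}$, say, there is no morphism between $(P,\mathrm{id})$ and $A_J$ in either direction, and cospans through coarser objects only yield weak conditions such as ``$\Sigma(P)\neq\emptyset$ iff $\Sigma(A_J)\neq\emptyset$''. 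What pins $\Sigma(P)$ down is a \emph{span}: a covering object with repeated atomic colours, e.g.\ the incidence set $W=\{(p,j)\mid p\in P,\ j\in p\}$ coloured by $c(p,j)=e_j$ --- exactly the role played in the paper by $\biguplus_i U_i\times\lambda$ with components $\biguplus_i U_i\times\{\epsilon_i\}$. The realised profiles of $W$ form $A_J$, so by your own third step its thread component is $\{(p,j)\mid e_j\in\Sigma(A_J)\}$; pushing forward along $W\to P$, $(p,j)\mapsto p$ (a legitimate map of coloured sets, since $\bigvee_{j\in p}e_j=p$) forces $\Sigma(P)=\{p\in P\mid p\wedge\phi(J)\neq 0\}$ with $\phi(J):=\bigvee\Sigma(A_J)\leq J$. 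This is the non-increasing map: the output at $(U_i)_i$ is $\bigcup_{k\in\phi(J)}U_k$, and consistency across each orbit is automatic because every configuration is now pinned to the same $\Sigma(A_J)$ (the paper obtains this consistency from its product object $(X^2,(U_i\times V_i))$). With this one construction inserted, your outline closes up into the paper's proof; without it, the limit is not actually computed. The converse direction --- that each non-increasing $\phi$ defines a natural transformation --- is, as you say, a routine verification.
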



\section{Interpreting program axiomatics}\label{sec:axioms}
We explore whether it is possible for a given monad
$T$ to define binary operations via natural transformations $T\times T\to T$
is such a way that important axioms hold.

Let us first fix some terminology: consider a set $V$ of variables, a
signature $\Sigma$, and the corresponding set of terms
$\mathrm{Trm}_\Sigma(V)$. Now take two terms
$s,t \in \mathrm{Trm}_\Sigma(V)$. We say that $s = t$ is
\emph{satisfiable in Kleisli $T$-representations} if there exists an
assignment $\alpha^\sigma: T^{\ari(\sigma)}\to T$ for each operation
$\sigma\in\Sigma$ such that for every set $X$ and every map
$i : V \to \EK(X)$, $\lsem s\rsem_i =\lsem t\rsem_i$ holds for the
interpretation defined by Eq. \eqref{eq:OperationDefinition} in
$\EK(X)$. We will permit ourselves the slight abuse of language
consisting in saying that the choice of natural transformation itself
satisfies $s=t$, for example we will say that a natural transformation
$\alpha^\sigma: T^2\to T$ is commutative if it makes
$\sigma(x,y)=\sigma(y,x)$ satisfiable in Kleisli
$T$-representations. Finally, we will say that $s=t$ is \emph{valid in
  Kleisli $T$-representations} if it satisfied for \emph{any}
assignment $\alpha^\sigma: T^{\ari(\sigma)}\to T$.


\paragraph{5.1. Commutativity.} We start with the case of coproducts of $\hom$ functors.

\begin{proposition}\label{prop:commutativity}
  If $F=\coprod_{i\in I}\hom(X_i,-)$, then a natural transformation
  $\alpha: F^2\to F$ given by an element
  $(s_{ij})_{i,j\in I}\in\prod_{i,j\in I}F(X_i+X_j)$ via Thm
  \ref{thm:PlusCoprodHom} \emph{is commutative} iff for all
  $i,j\in I$, the equation below holds.
  \begin{flalign*}
    [i_2,i_1] \comp s_{ij} = s_{ji} : X_k \to X_j + X_i
  \end{flalign*}
  In particular, for all $i \in I$, $s_{ii}$ must be the map with empty
  domain.
\end{proposition}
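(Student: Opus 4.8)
The plan is to unfold the bijection of Theorem~\ref{thm:PlusCoprodHom} into an explicit action formula for $\alpha$ and then to test commutativity against a single, well-chosen family of ``generic'' inputs, namely the coproduct injections. Recall that an element of $F(X)=\coprod_{i\in I}\hom(X_i,X)$ is a pair $(i,f)$ with $f\colon X_i\to X$, and that the transformation attached to $(s_{ij})$ acts by $\alpha_X\bigl((i,f),(j,g)\bigr)=F([f,g])(s_{ij})$: concretely, if $s_{ij}$ sits in component $k$ as a map $h\colon X_k\to X_i+X_j$, then $\alpha_X\bigl((i,f),(j,g)\bigr)=\bigl(k,\,[f,g]\comp h\bigr)$. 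This is exactly the composite displayed in the hybrid example, and it is the only consequence of Theorem~\ref{thm:PlusCoprodHom} I will need.

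First I would dispatch the ($\Leftarrow$) direction by direct computation. Assuming $[i_2,i_1]\comp s_{ij}=s_{ji}$, write $\sw=[i_2,i_1]\colon X_i+X_j\to X_j+X_i$ for the swap and record the copairing identity $[g,f]\comp\sw=[f,g]$, immediate from the universal property of the coproduct since $\sw$ reroutes each summand to the opposite side. Then for arbitrary $u=(i,f)$, $v=(j,g)$ over any $X$, I compute $\alpha_X(v,u)=F([g,f])(s_{ji})=F([g,f])\bigl(F(\sw)(s_{ij})\bigr)=F([g,f]\comp\sw)(s_{ij})=F([f,g])(s_{ij})=\alpha_X(u,v)$, using functoriality of $F$ together with the copairing identity. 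This yields commutativity uniformly in $X$ and in the inputs.

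Next I would prove ($\Rightarrow$) by specialising to the universal instance. Take $X=X_j+X_i$ with its two injections, and set $u=(i,i_2)$ and $v=(j,i_1)$. Since $[i_1,i_2]=\id_{X_j+X_i}$, the action formula gives $\alpha_X(v,u)=F(\id)(s_{ji})=s_{ji}$, whereas $\alpha_X(u,v)=F([i_2,i_1])(s_{ij})=[i_2,i_1]\comp s_{ij}$. Commutativity forces these two elements of $F(X_j+X_i)$ to agree, which is exactly the displayed equation (equality in the coproduct simultaneously pins down the common component $k$ and the underlying map $X_k\to X_j+X_i$). This generic-input step is the crux of the argument: the whole point is to recognise that the injections make $[f,g]$ the identity, so that testing commutativity on them recovers the full condition with no loss of information. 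I expect this to be the only genuinely delicate step, and the difficulty is purely bookkeeping — keeping the two coproduct orientations $X_i+X_j$ and $X_j+X_i$ straight and remembering that equality in $\coprod_i\hom(X_i,-)$ means equality of both the component index and the map.

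Finally, the ``in particular'' clause follows by setting $i=j$: the condition becomes $\sw\comp s_{ii}=s_{ii}$ for the swap $\sw\colon X_i+X_i\to X_i+X_i$, which exchanges the two summands and hence has no fixed points. Writing $s_{ii}$ as a map $h\colon X_k\to X_i+X_i$, any element $x\in X_k$ would have to satisfy $\sw(h(x))=h(x)$, contradicting fixed-point freeness; therefore $X_k=\emptyset$ and $s_{ii}$ is the empty map. In passing this records precisely why $\HybM=\coprod_{d\in[0,\infty)}\hom([0,d],-)$ admits no commutative binary operation: none of its index spaces $[0,d]$ is empty.
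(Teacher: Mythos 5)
Your proof is correct and follows essentially the same route as the paper's: the forward direction tests commutativity on the coproduct injections into $X_j+X_i$ (where the copairing $[i_1,i_2]$ becomes the identity, so one side is literally $s_{ji}$), and the converse is the one-line swap/copairing computation $[b,a]\comp[i_2,i_1]=[a,b]$ applied to arbitrary inputs. The only difference is that you also spell out the fixed-point-freeness argument establishing the ``in particular'' clause about $s_{ii}$, which the paper leaves implicit.
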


\noindent
Proposition~\ref{prop:commutativity} tells us that commutative natural
transformations $T^2 \to T$ for coproducts of hom functors $T$ must
adhere to rather harsh conditions. This is illustrated by the
following examples.

\begin{example}
\begin{enumerate}
\item A natural transformation $\alpha: \Maybe^2\to \Maybe$ can only
  be commutative if its `$\Maybe(2)$ coordinate' (defined by Thm
  \ref{thm:PlusCoprodHom}) lies in the second summand of $\Maybe(2)$, i.e.\@ if two elements different than failure are mapped to
  failure.
\item There is no commutative natural transformation for hybrid
  programs because 
  there exists no real number $d \in [0,\infty)$ such that
  $[0,d] = \emptyset$.
\end{enumerate}
\end{example}


\noindent
For the other monads presented thus far we can look directly at the
classification results provided in Section \ref{sec:operations}. We
have for example:

\begin{enumerate}
\item The commutative natural transformations $\GMset[S]^2\to \GMset[S]$ are precisely given by the  maps $S^2\to S$, i.e. the transformations choosing an equally weighted sum of multisets for each orbit.
\item The unique commutative natural transformation $\Dist_r^2\to\Dist_r$ is the average transformation $+^{\frac{1}{2}}$, and by the same argument as in Section \ref{sec:Giry} this is also the case for the Giry monad $\Giry$.
\end{enumerate}

\paragraph{5.2. Idempotence.} For coproducts of $\hom$ functors we have:

\begin{proposition}\label{prop:idempotence}
  If $F=\coprod_{i\in I}\hom(X_i,-)$, then a natural transformation
  $\alpha: F^2\to F$ given by an element
  $(s_{ij})_{i,j\in I}\in\prod_{i,j\in I}F(X_i+X_j)$ via
  Thm. \ref{thm:PlusCoprodHom} \emph{is idempotent} iff for each
  $i\in I$, $s_{ii}\in F(X_i+X_i)$ is a map $s_{ii}: X_i\to X_i+X_i$
  such that  $\triangledown \comp s_{ii} = \id$ where
  $\triangledown : X_i + X_i \to X_i$ is the codiagonal map.
\end{proposition}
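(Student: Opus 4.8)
The plan is to reduce idempotence to a single identity of natural transformations $F\to F$ and then read it off via the Yoneda lemma, exactly as one does for Proposition~\ref{prop:commutativity}. Throughout I write an element of $FX=\coprod_{i\in I}\hom(X_i,X)$ as a pair $(i,f)$ with $f\colon X_i\to X$, and I recall the explicit form of the bijection of Theorem~\ref{thm:PlusCoprodHom}: under the isomorphism $F\times F\cong\coprod_{i,j}\hom(X_i+X_j,-)$ and the Yoneda lemma, the transformation $\alpha$ attached to $(s_{ij})$ acts by
\begin{align*}
  \alpha_X\big((i,f),(j,g)\big)=F([f,g])(s_{ij}),
\end{align*}
where $[f,g]\colon X_i+X_j\to X$ is the copairing; in particular $s_{ij}=\alpha_{X_i+X_j}\big((i,\iota_1),(j,\iota_2)\big)$ for the coproduct injections $\iota_1,\iota_2$.

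First I would unwind what idempotence means for $\alpha$. By Eq.~\eqref{eq:OperationDefinition} we have $\lsem\sigma\rsem(a,a)=\alpha_X\comp\langle a,a\rangle=\alpha_X\comp\Delta_{FX}\comp a$, where $\Delta_{FX}\colon FX\to FX\times FX$ is the diagonal. Hence $\alpha$ is idempotent iff $\alpha_X\comp\Delta_{FX}\comp a=a$ for every $X$ and every $a\colon X\to FX$. For nonempty $X$, feeding in the constant map with value $u\in FX$ shows this is equivalent to the pointwise condition $\alpha_X\comp\Delta_{FX}=\id_{FX}$; the value at $\emptyset$ is then forced by naturality along the unique map $\emptyset\to X$. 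Writing $\Delta_F\colon F\to F\times F$ for the diagonal natural transformation and $\beta:=\alpha\comp\Delta_F$, idempotence thus becomes the single identity $\beta=\id_F\colon F\to F$. Note that $\Delta_{FX}$ only ever produces diagonal pairs, so a priori only the components $s_{ii}$ can matter, which already matches the shape of the statement.

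The last step is to compare $\beta$ and $\id_F$ through Yoneda. Since $F=\coprod_i\hom(X_i,-)$, a natural transformation $F\to F$ is determined by its tuple of values on the universal elements $(i,\id_{X_i})\in F(X_i)$, giving $[\Set,\Set](F,F)\cong\prod_i F(X_i)$. The identity corresponds to $\big((i,\id_{X_i})\big)_{i}$, while the copairing formula, together with $[\id_{X_i},\id_{X_i}]=\triangledown$, gives
\begin{align*}
  \beta_{X_i}(i,\id_{X_i})=\alpha_{X_i}\big((i,\id_{X_i}),(i,\id_{X_i})\big)=F(\triangledown)(s_{ii}).
\end{align*}
So $\beta=\id_F$ iff $F(\triangledown)(s_{ii})=(i,\id_{X_i})$ in $F(X_i)$ for every $i$. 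Writing $s_{ii}=(k,h)$ with $h\colon X_k\to X_i+X_i$, one has $F(\triangledown)(s_{ii})=(k,\triangledown\comp h)$, and equality with $(i,\id_{X_i})$ forces both $k=i$ and $\triangledown\comp h=\id_{X_i}$; conversely these two conditions clearly suffice. This says exactly that $s_{ii}$ lies in the $i$-th summand, i.e.\ is a map $s_{ii}\colon X_i\to X_i+X_i$, and satisfies $\triangledown\comp s_{ii}=\id$, which is the claim.

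The only genuinely delicate points are the two reductions in the third paragraph: passing from ``for all $a$'' to the pointwise identity $\alpha_X\comp\Delta_{FX}=\id_{FX}$ (and disposing of $X=\emptyset$ by naturality), and, in the final paragraph, extracting the index condition $k=i$ --- the bookkeeping of which summand of the hom-functor coproduct the element $s_{ii}$ inhabits. Everything else is a routine application of the Yoneda lemma and of the copairing description of $\alpha$ inherited from Theorem~\ref{thm:PlusCoprodHom}.
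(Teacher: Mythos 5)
Your proof is correct and follows essentially the same route as the paper's: both reduce idempotence to its effect on the universal elements $(i,\id_{X_i})$ (the paper instantiates $a=\id_{X_i}$ directly, you phrase it as checking $\alpha\comp\Delta_F=\id_F$ on Yoneda generators), and both use the copairing formula $\alpha_X\big((i,f),(j,g)\big)=F[f,g](s_{ij})$ for the converse direction. The additional care you take --- reducing the Kleisli-representation notion of idempotence to the pointwise condition via constant maps (with the $\emptyset$ case handled by naturality), and tracking the coproduct index to force $s_{ii}$ to lie in the summand $\hom(X_i,X_i+X_i)$ --- fills in details that the paper's terser proof leaves implicit.
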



\begin{example} 
\begin{enumerate}
\item A natural transformation $\alpha: \Maybe^2\to \Maybe$
  is idempotent iff its `$\Maybe(2)$
  coordinate' is in the first summand of $\Maybe(2)$. In other words,
  if two elements different than failure are projected either to the
  left or to the right.
\item A natural transformation $\alpha:\HybM^2\to\HybM$ is idempotent
  iff for every two evolutions $(f,g)$ with the same duration $[0,d]$,
  $\alpha^s(f,g)$ has domain $[0,d]$, and for every element
  $a \in [0,d]$, $\alpha^s(f,g)(a) = f(a)$ or
  $\alpha^s(f,g)(a) = g(a)$.
\end{enumerate}
\end{example}

\noindent
It follows from Propositions \ref{prop:commutativity} and
\ref{prop:idempotence} that the combination of commutativity and
idempotence is \emph{never satisfiable for Kleisli
  $T$-representations} when $T$ is a non-constant coproduct of $\hom$
functors since there will then exist an $X_i\neq\emptyset$ in the
coproduct presentation of $T$ for which Proposition
\ref{prop:commutativity} requires that $s_{ii}$ be of type
$\emptyset\to X_i+X_i$ but for which Proposition
\ref{prop:idempotence} requires that $s_{ii}$ be of type
$X_i\to X_i+ X_i$. Thus no monad whose functor is a coproduct of
$\hom$ functors can support a programming language with a commutative
and idempotent binary operation on programs.

We summarise the characterisation of idempotent natural transformations for the other monads as follows:

\begin{enumerate}
\item The idempotent natural transformations $\GMset[\N]^2\to\GMset[\N]$ are those which are defined as projections on pairs of multisets of equal weights. Note that as in the case of coproducts of $\hom$ functors, the combination of commutativity and idempotence is never satisfiable in Kleisli $\GMset[\N]$-representations, since the commutative operations must take equally weighted sums, whereas the idempotent ones must put one weight to zero and the other to one on pairs of multisets of equal weights.
\item Theorem \ref{thm:TNatRatDist} guarantees that \emph{all} natural transformation $\Dist_r^2\to\Dist_r$ are idempotent, i.e. idempotence is valid over $\Dist_r$-representations, and similarly for $\Giry$. Note that maps $\Q^2\to \Q^2$ mapping each pair $(q_1,q_2)$ to a pair $(\lambda,1-\lambda)$ with $\lambda\in\Q\cap\unit$ also define the idempotent operations of $\GMset[\Q]$.
\end{enumerate}

\paragraph{5.3. Units} For coproducts of $\hom$ functors we have:

\begin{proposition}\label{prop:units}
  Consider a functor $F=\coprod_{i\in I}\hom(X_i,-)$, a natural
  transformation $\alpha^s: F^2\to F$, given by an element
  $(s_{ij})_{i,j\in I}\in\prod_{i,j\in I}F(X_i+X_j)$ via Thm
  \ref{thm:PlusCoprodHom}, and a natural transformation
  $u: \const{1} \to F$. The
  transformation $u: \const{1} \to F$ is a unit for $\alpha^s$ iff
  $\id = m \comp s_{ik} = n \comp s_{ki} : X_i \to X_i$ with
  $m : X_i + \emptyset \to X_i$, $n : \emptyset + X_i \to X_i$ the
  isomorphisms that define $\emptyset$ as the unit of coproducts,
  and $X_k = \emptyset$.
\end{proposition}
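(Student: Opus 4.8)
The plan is to reduce the two-sided unit law to the two equations on the generic components $s_{ki}$ and $s_{ik}$ by combining the classification of constants (Theorem~\ref{thm:TNfromInitCriterion}) with the explicit description of binary operations underlying Theorem~\ref{thm:PlusCoprodHom}. First I would pin down $u$. By Theorem~\ref{thm:TNfromInitCriterion}, a natural transformation $u : \const{1} \to F$ corresponds to an element of $F\emptyset = \coprod_{i} \hom(X_i, \emptyset)$; since $\hom(X_i, \emptyset)$ is empty unless $X_i = \emptyset$, such a $u$ exists only if some $X_k = \emptyset$, and for that $k$ it is the transformation whose component $u_A : 1 \to FA$ selects the unique map ${!_A} : \emptyset = X_k \to A$ sitting in the $k$-th summand of $FA$. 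This already accounts for the side condition $X_k = \emptyset$ in the statement.

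Next I would unfold the unit laws at the level of $F$. By the interpretation in Eq.~\eqref{eq:OperationDefinition}, asking that $\lsem\sigma\rsem(\lsem u\rsem, a) = a$ be satisfiable in Kleisli $T$-representations amounts to requiring $\alpha^s_A(u_A(\ast), \xi) = \xi$ for every set $A$ and every $\xi \in FA$, and symmetrically for the right law; this reduction uses only that, as $a$ ranges over $\EK(X)$ and $x$ over $X$, the element $a(x)$ ranges over all of $FX$. I can then invoke the formula behind Theorem~\ref{thm:PlusCoprodHom}: for $f \in \hom(X_j,A)$ and $\xi \in \hom(X_i, A)$ one has $\alpha^s_A(f, \xi) = [f,\xi]\comp s_{ji}$, where the generic component is $s_{ji} = \alpha^s_{X_j + X_i}(\iota_1, \iota_2)$ with $\iota_1,\iota_2$ the coprojections.

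The core is then a short computation. For the left law I take $f = u_A(\ast) = {!_A} \in \hom(X_k, A)$, giving $\alpha^s_A({!_A}, \xi) = [{!_A}, \xi]\comp s_{ki}$, where $s_{ki}$ maps into $X_k + X_i = \emptyset + X_i$. Since ${!_A}$ is the unique map out of $\emptyset$, the copairing factors as $[{!_A}, \xi] = \xi \comp n$ with $n : \emptyset + X_i \to X_i$ the canonical isomorphism, so $\alpha^s_A({!_A}, \xi) = \xi \comp (n \comp s_{ki})$. Demanding this equal $\xi$ for all $A$ and all $\xi : X_i \to A$ forces, by the choice $A = X_i$ and $\xi = \id$, the equation $n \comp s_{ki} = \id_{X_i}$; conversely this one equation makes $\xi \comp (n \comp s_{ki}) = \xi$ hold for every $\xi$. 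A symmetric computation with $u$ in the second argument, using $[\xi, {!_A}] = \xi \comp m$ with $m : X_i + \emptyset \to X_i$, yields $m \comp s_{ik} = \id_{X_i}$. Conjoining the two gives precisely the stated biconditional.

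The only real subtlety, and the step I would watch most carefully, is the bookkeeping of the coproduct isomorphisms together with the first-versus-second-argument convention: one must verify that the left law constrains $s_{ki}$ while the right law constrains $s_{ik}$ (and not the reverse), and that the factorisation $[{!_A}, \xi] = \xi \comp n$ is oriented with $n : \emptyset + X_i \to X_i$. Everything else is a direct instance of the Yoneda-style reduction already exploited in the proof of Theorem~\ref{thm:PlusCoprodHom}.
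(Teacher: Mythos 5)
Your proof is correct and follows essentially the same route as the paper's: pin down $u$ as factoring through a summand $\hom(X_k,-)$ with $X_k=\emptyset$, rewrite both unit laws via the copairing formula $\alpha^s(a,b)=[a,b]\comp s_{ij}$ underlying Theorem~\ref{thm:PlusCoprodHom}, instantiate at identities to force $\id = m \comp s_{ik} = n \comp s_{ki}$, and recover the general case by postcomposition with an arbitrary $a : X_i \to X$. The only cosmetic difference is that you classify $u$ by invoking Theorem~\ref{thm:TNfromInitCriterion}, whereas the paper establishes the same fact as a separate auxiliary statement (Proposition~\ref{prop:units2}) via the identical naturality argument.
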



\begin{example}
  \begin{enumerate}
  \item A natural transformation $\alpha: \Maybe^2\to \Maybe$ has the
    unique natural transformation $\underline{1}\to \Maybe$ as unit
    iff its `$\Maybe(1)$ coordinates' are in the first summand of
    $\Maybe(1)$.
\item The unique natural transformation $\one\to\GMset[S]$ which pick the constant multiset with weight 0 defines a unit for any binary natural transformation $\GMset[S]^2\to \GMset[S]$ since $0$ is the unit of $S$. The unit axiom is thus \emph{valid} over $\GMset[S]$-representations.
\item Since there are no natural transformations $\one\to\Dist$, there can be no interpretation of constants in Kleisli $\Dist$-representations, and similarly for $\Dist_r$, $\Giry$ and $\HybM$.  
\item The sub-distribution monad $\Giry\Maybe$ offers both probabilistic
  choice and partial computation but the unit axiom will be problematic for any binary operation: consider the sub-distributions on the singleton set
  $1$, we need
\[
\alpha_1\left(\frac{1}{n}\delta_1 +\frac{n-1}{n}\delta_{\star},\delta_{\star}\right)=\frac{1}{n}\delta_1 +\frac{n-1}{n}\delta_{\star}
\]
and
\[
 \alpha_1\left(\delta_{\star},\frac{1}{n}\delta_1 +\frac{n-1}{n}\delta_{\star}\right)=\frac{1}{n}\delta_1 +\frac{n-1}{n}\delta_{\star}
\]
This means that the function $\phi$ defining $\alpha$ (Thm.~\ref{thm:plusSubDist}) must map $(\frac{n-1}{n},1)$ to $(1,0)$ and $(1,\frac{n-1}{n})$ to $(0,1)$ for each $n$, which clearly cannot be continuous.
\end{enumerate}
\end{example}


\paragraph{5.5. Absorption.} The absorption law states that for every program
$\prog[p] : X \to T X$ the equations below hold.
\begin{flalign*}
    \prog[p \sComp 0] = \prog[0] \hspace{0.3cm} \text{ (left absorption) } \hspace{0.5cm}
    \prog[0 \sComp p] = \prog[0] \hspace{0.3cm} \text{ (right absorption) } 
\end{flalign*}

\noindent
Even though they appear to be simple, to check their satisfiability in
general Kleisli representations is a surprisingly complex issue. On
the one hand, we have:
\begin{proposition}\label{prop:absorption}
  The axiom $\prog[0 \sComp p]  = \prog[0]$ holds in every Kleisli
  representation.
 \end{proposition}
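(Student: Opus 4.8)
The plan is to unfold both the interpretation of the constant $\prog[0]$ and that of sequential composition, and then to exploit the naturality of the transformation $\alpha : \one \to T$ chosen to interpret $\prog[0]$. The decisive observation is that, since the domain of $\alpha$ is the constant functor $\one$ on the terminal object $1$, naturality degenerates: for every morphism $f : X \to Y$ one has $\one(f) = \id_1$, so the naturality square collapses to the identity $Tf \comp \alpha_X = \alpha_Y$. Thus $\alpha$ is entirely insensitive to morphisms applied to its output, and this single fact will drive the whole argument, independently of the specific monad $T$ and of the program $\prog[p]$.

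First I would record that, by Eq.~\eqref{eq:OperationDefinition}, the interpretation of the nullary operation is $\lsem \prog[0] \rsem = \alpha_X \comp {!}_X$, where ${!}_X : X \to 1$ is the empty tuple $\langle\,\rangle$, i.e. the unique map into the terminal object. Unfolding the Kleisli composition $\circ_T$, the axiom asks us to compare
\begin{align*}
  \lsem \prog[0 \sComp p] \rsem &= \lsem \prog[p] \rsem \circ_T \lsem \prog[0] \rsem \\
  &= \mu_X \comp T(\lsem \prog[p] \rsem) \comp \alpha_X \comp {!}_X
\end{align*}
with $\lsem \prog[0] \rsem = \alpha_X \comp {!}_X$. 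As both sides are post-composed with ${!}_X$, it suffices to establish the equality of the maps $1 \to TX$ obtained by stripping ${!}_X$, namely $\mu_X \comp T(\lsem \prog[p] \rsem) \comp \alpha_X = \alpha_X$.

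The core is then a two-step diagram chase. Applying the degenerate naturality of $\alpha$ to $\lsem \prog[p] \rsem : X \to TX$ gives $T(\lsem \prog[p] \rsem) \comp \alpha_X = \alpha_{TX}$, turning the left-hand side into $\mu_X \comp \alpha_{TX}$. Applying the same naturality to the unit $\eta_X : X \to TX$ yields $\alpha_{TX} = T(\eta_X) \comp \alpha_X$, and the monad law $\mu_X \comp T(\eta_X) = \id_{TX}$ then collapses $\mu_X \comp \alpha_{TX}$ to $\alpha_X$. Re-attaching ${!}_X$ recovers $\lsem \prog[0] \rsem$, which is precisely the claim.

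I do not expect a genuine obstacle: the only points requiring care are fixing the direction of Kleisli composition and recognising that the constant interpretation factors through the terminal object as $\alpha_X \comp {!}_X$. Notably, the argument never uses anything specific about $T$ or $\prog[p]$, which is exactly why this axiom is \emph{valid} in the sense of the paper -- it holds for \emph{every} choice of $\alpha$ -- in sharp contrast with the companion law $\prog[p \sComp 0] = \prog[0]$, where $\alpha$ now sits \emph{inside} $T(-)$ and is composed \emph{after} $\prog[p]$; there the degenerate naturality no longer applies, and whether the equation holds becomes genuinely monad-dependent (as the hybrid monad illustrates).
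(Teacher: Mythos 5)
Your proof is correct and follows essentially the same route as the paper's: both arguments rest on the degenerate naturality of the constant transformation $\one \to T$ applied once to $\lsem \prog[p] \rsem$ (giving $T\lsem \prog[p]\rsem \comp \alpha_X = \alpha_{TX}$) and once to $\eta_X$, combined with the monad law $\mu_X \comp T\eta_X = \id_{TX}$. The only difference is cosmetic ordering -- the paper first packages the two latter steps as the lemma $\mu_X \comp 0_{TX} = 0_X$ and then does the diagram chase, whereas you chase first and invoke them at the end.
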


\noindent
On the other hand, the axiom $\prog[p \sComp 0]= \prog[0]$ need not
hold, and this is exemplified by the partial hybrid paradigm
$\HybM \Maybe$ which is briefly introduced in the following section.

\begin{theorem}
  \label{theo_dist}
  Consider a monad $T : \Set \to \Set$ with a natural transformation
  $\prog[0] : \const{1}\to T$. If the condition $T \emptyset \cong 1$ holds
  then the axiom $\prog[p \sComp 0] = \prog[0]$ holds as well.
\end{theorem}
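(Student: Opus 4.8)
The plan is to unfold the interpretation of the constant $\prog[0]$ and to exploit the fact that the hypothesis $T\emptyset\cong 1$ turns $T\emptyset$ into a terminal object; both sides of the claimed identity will factor through $T\emptyset$, and uniqueness of maps into a terminal object then forces them to coincide. Write $\alpha:\one\to T$ for the given natural transformation (called $\prog[0]$ in the statement). By Eq.~\eqref{eq:OperationDefinition}, the constant is interpreted at a set $X$ as $\lsem 0\rsem=\alpha_X\comp !_X:X\to TX$, where $!_X:X\to 1$ is the unique map to the terminal object. Let $f:=\lsem\prog[p]\rsem:X\to TX$ be the interpretation of an arbitrary program. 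Since $\prog[\sComp]$ is interpreted by Kleisli composition, the axiom $\prog[p\sComp 0]=\prog[0]$ unfolds to the equality $\mu_X\comp T(\lsem 0\rsem)\comp f=\lsem 0\rsem$ of Kleisli endomorphisms of $X$.

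First I would use naturality of $\alpha$. As its domain is the constant functor $\one$, naturality at the unique map $\iota_X:\emptyset\to X$ out of the initial object reads $T\iota_X\comp\alpha_\emptyset=\alpha_X$, so that
\[
\lsem 0\rsem=T\iota_X\comp\alpha_\emptyset\comp !_X
\]
exhibits $\lsem 0\rsem$ as a morphism factoring through $T\emptyset$. Substituting this factorisation into the left-hand side and applying the functor $T$ gives the composite $\mu_X\comp T^2\iota_X\comp T\alpha_\emptyset\comp T!_X\comp f$. I would then invoke naturality of the multiplication $\mu$ at $\iota_X$, namely $\mu_X\comp T^2\iota_X=T\iota_X\comp\mu_\emptyset$, to pull $T\iota_X$ outward:
\[
\mu_X\comp T(\lsem 0\rsem)\comp f=T\iota_X\comp\bigl(\mu_\emptyset\comp T\alpha_\emptyset\comp T!_X\comp f\bigr).
\]

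The crux of the argument, and the only place the hypothesis enters, is the final collapse. The parenthesised composite has type $X\to T\emptyset$, and so does the inner map $\alpha_\emptyset\comp !_X$ appearing in the factorisation $\lsem 0\rsem=T\iota_X\comp(\alpha_\emptyset\comp !_X)$. Because $T\emptyset\cong 1$ is terminal, there is a \emph{unique} morphism $X\to T\emptyset$, so these two composites must agree; post-composing with $T\iota_X$ then yields $\mu_X\comp T(\lsem 0\rsem)\comp f=\lsem 0\rsem$, which is precisely $\prog[p\sComp 0]=\prog[0]$. The main obstacle is conceptual rather than computational: one has to recognise that $T\emptyset\cong 1$ is exactly the condition making $T\emptyset$ terminal, so that the two chains of naturality rewrites become identifiable through uniqueness. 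Observe that no property of $f$ beyond its typing is used, which is why the identity holds for \emph{every} program $\prog[p]$; this also illuminates the contrast with the symmetric law $\prog[0\sComp p]=\prog[0]$ of Proposition~\ref{prop:absorption}, which holds unconditionally, whereas the law treated here genuinely needs the hypothesis.
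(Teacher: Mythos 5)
Your proof is correct, but it takes a genuinely different route from the paper's. The paper argues by \emph{contraposition}: assuming $\prog[p \sComp 0] = \prog[0]$ fails at some point $x \in X$, it packages the composite $1 \xrightarrow{x} X \xrightarrow{\prog} TX \xrightarrow{T(0_X \comp !_X)} T^2X \xrightarrow{\mu_X} TX$ into a natural transformation $\const{1}\to T$ that differs from $\prog[0]$ at $X$, and then invokes the bijection $[\Set,\Set](\const{1},T)\cong T\emptyset$ of Theorem~\ref{thm:TNfromInitCriterion} to conclude that $T\emptyset$ must have at least two elements, contradicting $T\emptyset\cong 1$. You instead give a \emph{direct} argument: factor both $\lsem \prog[0]\rsem$ and $\lsem \prog[p\sComp 0]\rsem$ through $T\emptyset$ using the naturality squares of $\prog[0]$ and $\mu$ at $\iota_X:\emptyset\to X$, and then collapse the two factoring maps $X\to T\emptyset$ by terminality of $T\emptyset\cong 1$. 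Each approach buys something: yours is more self-contained and elementary --- it never needs the classification of constants, only two naturality squares and uniqueness of maps into a terminal object --- and it transfers essentially verbatim to base categories other than $\Set$ possessing an initial object and a terminal object $1$, whereas Theorem~\ref{thm:TNfromInitCriterion} additionally demands $\cat(1,-)\cong\Id$. The paper's proof, on the other hand, fits its broader theme that constants correspond to elements of $T\emptyset$, and makes vivid how a single violation of the axiom manufactures a second global constant; note, though, that as written it leaves the naturality of the constructed transformation unverified --- precisely the bookkeeping (naturality of $\prog[0]$ and $\mu$) that your argument carries out explicitly. Your closing observation that the hypothesis $T\emptyset\cong 1$ enters only through terminality, in contrast with the unconditional Proposition~\ref{prop:absorption}, is also accurate and is consistent with the paper's counterexample $\HybM\Maybe$, where $\HybM\Maybe\emptyset\cong[0,\infty)$ is far from terminal.
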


\section{Adding features by combining monads}\label{sec:combining}

\subsection{Failure and Tests}

There is a natural procedure for turning Kleisli representations
which cannot interpret failure -- such as those for the distribution
or hybrid monads -- into Kleisli representations which do support
failure. More interestingly though, this procedure also allows to
interpret tests
, and in the case of probabilistic programs one recovers
the interpretation of \texttt{if-then-else} commands first given by
Kozen 
\cite{K81c}. 

\paragraph{Adding failure.} Recall from Section \ref{sec:operations}
that neither probabilistic nor hybrid systems can handle failure. As
recently described in \cite{sokolova:calco2017} for
probabilistic systems, adding a distinguished state is an obvious
strategy to overcome this shortcoming. We thus consider composing
effect monads of interest with the $\maybe$ monad $\Maybe$. This can
always be done as the following result \cite{Ghani02} shows.
\begin{theorem}\label{theo_TM}
  For a monad $T : \Set \to \Set$ there exists a distributive
  law $\delta: \Maybe T \to T \Maybe$ defined by
  \begin{flalign*}
    \delta_X = [ T i_1, \eta^T_{X + 1} \comp i_2 ]
  \end{flalign*}
\end{theorem}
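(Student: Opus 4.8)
The plan is to verify directly that $\delta$ satisfies the four coherence axioms of a distributive law of the monad $\Maybe$ over $T$: the two unit laws $\delta\comp(\eta^\Maybe T)=T\eta^\Maybe$ and $\delta\comp(\Maybe\eta^T)=\eta^T\Maybe$, and the two multiplication laws $\delta\comp(\mu^\Maybe T)=T\mu^\Maybe\comp\delta\Maybe\comp\Maybe\delta$ and $\delta\comp(\Maybe\mu^T)=\mu^T\Maybe\comp T\delta\comp\delta T$. Naturality of $\delta$ itself is immediate, since $\delta_X$ is the copairing of $Ti_1=T\eta^\Maybe_X$ and $\eta^T_{X+1}\comp i_2$, both of which are natural in $X$, and copairing preserves naturality. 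The structural observation that turns every axiom into a short case analysis is that the domain of $\delta_X$ is the coproduct $\Maybe TX=TX+1$; hence to prove any equation between maps out of $\Maybe TX$ (or out of the nested objects $\Maybe\Maybe TX$, $\Maybe T^2X$, $\dots$) it suffices to precompose with the coproduct injections and compare separately on the \emph{value} summand and on the \emph{failure} summand $1$.

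For the two unit axioms the verification falls straight out of the definition. The composite $\delta\comp(\eta^\Maybe T)$ restricted to the value summand is exactly the defining equation $\delta_X\comp i_1=Ti_1=T\eta^\Maybe_X$, which is the required $T\eta^\Maybe$. For $\delta\comp(\Maybe\eta^T)$, the value summand gives $Ti_1\comp\eta^T_X=\eta^T_{X+1}\comp i_1$ by naturality of $\eta^T$ along $i_1:X\to X+1$, while the failure summand gives $\eta^T_{X+1}\comp i_2$; together these are precisely $\eta^T_{X+1}=\eta^T_{\Maybe X}$. So both unit laws reduce to the definition of $\delta$ and naturality of $\eta^T$.

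The two multiplication axioms carry the real content. For compatibility with $\mu^\Maybe$ one tracks the two distinct exception copies in $\Maybe\Maybe TX=(TX+1)+1$ and shows that both are routed to the single failure of $T\Maybe X$, using that $\mu^\Maybe$ identifies the inner and outer exceptions and that naturality of $\eta^T$ lets the exception slide through $T\mu^\Maybe$ and $\delta_{\Maybe X}$. For compatibility with $\mu^T$, the value summand $T^2X$ reduces to naturality of $\mu^T$ along $i_1$, both sides collapsing to $Ti_1\comp\mu^T_X$; the failure summand reduces to the left-unit law $\mu^T\comp(\eta^T T)=\id$ of $T$, since the exception travels as $\eta^T_{TX+1}\comp i_2$, naturality of $\eta^T$ lets $T\delta_X$ push the inner unit outward, and $\mu^T_{X+1}$ then cancels it against $\eta^T_{T(X+1)}$.

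The hard part will be purely bookkeeping: two different coproducts are simultaneously in play — the injections of $\Maybe X=X+1$ that appear \emph{inside} the definition of $\delta$, versus the injections of $\Maybe TX=TX+1$ \emph{along which} one case-splits — and in the $\mu^\Maybe$ axiom one must take care not to conflate the inner and outer exceptions of the nested $\Maybe\Maybe$. Once the injections are named carefully, each of the four checks becomes a one-line diagram chase invoking only naturality of $\eta^T$ and $\mu^T$ together with the left-unit law of $T$, and assembling the four verified squares establishes that $\delta$ is a distributive law. (Conceptually, one could instead note that the Eilenberg--Moore category of $\Maybe$ is the category of pointed sets and that $T$ lifts to it by equipping $TX$ with the point $\eta^T_X\comp x_0$; the distributive law induced by this lifting is exactly $\delta$. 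I would nevertheless present the direct verification, as it delivers the explicit formula the theorem asserts with no extra machinery.)
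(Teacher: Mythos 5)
Your proposal is correct, but there is nothing in the paper to compare it against step by step: the paper does not prove Theorem~\ref{theo_TM} at all, it simply imports the result from the cited reference [Ghani02], and no proof appears in the appendix. Your direct verification therefore supplies exactly what the paper leaves to the literature, and it is sound as sketched: the four axioms you list ($\delta\comp(\eta^\Maybe T)=T\eta^\Maybe$, $\delta\comp(\Maybe\eta^T)=\eta^T\Maybe$, and the two multiplication compatibilities) are the correct ones for a law $\Maybe T\to T\Maybe$ making $T\Maybe$ a monad, and the case analysis over the coproduct $\Maybe TX=TX+1$ (and its nestings) does reduce each axiom to the ingredients you name -- the defining copairing, naturality of $\eta^T$, naturality of $\mu^T$ along $i_1$, and the left unit law $\mu^T\comp \eta^T T=\id$ for the failure summand of the $\mu^T$ axiom. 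I checked the three-way split of $\Maybe\Maybe TX=(TX+1)+1$ in the $\mu^\Maybe$ axiom and both branches of the $\mu^T$ axiom; they close as you claim. Your parenthetical alternative is also correct and is the standard conceptual argument: distributive laws $\Maybe T\to T\Maybe$ correspond to liftings of $T$ to the Eilenberg--Moore category of $\Maybe$ (pointed sets), and the lifting $(X,x_0)\mapsto (TX,\eta^T_X(x_0))$ induces precisely the $\delta$ of the statement; either route would serve as a legitimate proof, with the direct verification having the advantage of exhibiting the formula $\delta_X=[Ti_1,\eta^T_{X+1}\comp i_2]$ that the theorem asserts.
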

\noindent
In particular $T\Maybe$ always forms a monad of which $\Maybe$ is a
submonad. Combining monads in this way potentially addresses more than the problem of non-representability of
failure: since every $\Set$-monad $T$ is
strong we can always find a natural transformation
$\otimes^n: (-)^n\circ T\to T\circ (-)^n$ (we use the convention
$(-)^0=\underline{1}$ and $\otimes^0=\eta^T_1$). 
Hence, if a $\Set$-monad $S$ has a natural transformation
$\alpha: S^n\to S$, then $TS$ also has a natural transformation
of this type given by
\begin{equation}\label{eq:lifting}
\xymatrix@C=8ex
{
(TS)^n \ar[r]^{\otimes^n S}&T(S^n)\ar[r]^{T\alpha^n}& TS
}
\end{equation}
In other words, we can increase the list of available operations by composing monads. Note that the algebraic properties of $\alpha$ discussed in Section \ref{sec:axioms} are usually lost by the lifting described above. However if $T$ is commutative, then the lifting above will preserve associativity, commutativity and the existence of units. For $n=0$, \eqref{eq:lifting} shows that $T\Maybe$ always has a natural transformation 
\[
\xymatrix@C=8ex
{
\underline{1}\ar[r]^{\otimes^0=\eta^T_1}& T\underline{1} \ar[r]^{T\ast} &T\Maybe
}
\]
and we can thus always interpret failure in $T\Maybe$. In particular,
the monads $\Dist \Maybe$ and $\HybM \Maybe$ support an abort
operation $\prog[0]$, and, as shown in Section~\ref{sec:axioms}, the
axiom $\prog[0 \, ; p] = \prog[0]$ is valid in both paradigms. We also
have the bijection $\Dist \Maybe \emptyset \cong \Dist 1 \cong 1$ and
by an application of Theorem~\ref{theo_dist} this entails that the
axiom $\prog[p \, ; 0] = \prog[0]$ is valid in probabilistic
programming.  On the other hand, in the hybrid case we have the
bijections $\HybM \Maybe \emptyset \cong \HybM 1 \cong [0,\infty)$
which say that one can always interpret an abort operation that
produces failures for a given duration $[0,d]$. For each of these
operations the axiom $\prog[p \, ; 0] = \prog[0]$ cannot hold: on the
left side, the duration of an evolution produced by $\prog[p]$ is
added to that of $\prog[0]$. Hence, if the evolution of $\prog[p]$ is greater
than $0$ the sum is greater than the duration of the evolution
produced on the right side.

\paragraph{Adding Tests.} The combination $T \Maybe$ also provides
Kleisli $T\Maybe$-representations of useful fragments of Kleene
algebras with tests (KAT) \cite{kozen1997kleene}. Here we will only
consider the $\ast$-free fragment of KATs, i.e. idempotent semirings
with tests, or ISTs, which are defined as follows. An IST is a
two-sorted structure $(S,B,+,;,^-,0,1)$ such that $B\subseteq S$,
$(S,+,;,0,1)$ is an idempotent semiring, and $(B,+,;,^-,0,1)$ is a
boolean algebra.

Let us first deal with tests: each test $\tt b$ should be interpreted as a
predicate $\lsem {\tt b}\rsem: X\to 2$ on the state space, which is
equivalent to a map
\begin{equation}\label{eq:test}
X\to X+1, x\mapsto\begin{cases}x & \text{if }\lsem
 {\tt b}\rsem(x)=1\\
 \ast&\text{else}\end{cases} 
\end{equation}
We will therefore interpret tests in $\EK[T\Maybe](X)$ as the maps
$X\to T\Maybe X$ that can be built as
$\eta^T_{\Maybe X}\circ f:X\to \Maybe X\to T\Maybe X$, where $f$ is of
the shape \eqref{eq:test}. In particular, we automatically get a
two-sorted structure on $\EK[T\Maybe](X)$. 
 Note also that we can straightforwardly extend the
single-sorted definition of Kleisli representation in Section
\ref{sec:KlRep} to a two-sorted definition so that KAT-like
structures can be accomodated.

We now need to define a boolean algebra structure on tests in
$\EK[T\Maybe](X)$. The constants are as expected: $\lsem 1\rsem$ is the $T$-lifting of the unit of $\Maybe$, i.e.\@ the unit of $T\Maybe$, and $\lsem 0\rsem$ is the $T$-lifting of the unique natural transformation $\underline{1}\to\Maybe$.
The negation of a test $\overline{\tt b}$ is defined in
the obvious way and it follows that $\overline{1}=0$. Conjunction is simply Kleisli composition.
It is not hard to check by unravelling the definitions and using the
properties of the distributive law $\Maybe T\to T\Maybe$ that for any
two tests $\tt a,b$:
$\lsem \mathtt{b}\rsem\circ_{T\Maybe}\lsem \mathtt{a}\rsem=\lsem \mathtt{a \, ; \, b}\rsem$.

For disjunction, things are more complicated. We have seen in Theorem
\ref{thm:PlusCoprodHom} and Section \ref{sec:axioms}, that there is no
idempotent commutative binary natural transformation
$\Maybe\times\Maybe\to\Maybe$ with $\underline{1}\to\Maybe$ as
unit. In particular we cannot expect the disjunction $+$ to be
interpreted using a natural transformation
$T\Maybe\times T\Maybe\to T\Maybe$ of the shape \eqref{eq:lifting}. In
other words, simply composing $T$ with the $\maybe$ monad does not give
a non-deterministic choice in general.

However, there exists a natural and useful sub-class of programs in an
IST for which composing $T$ with the $\maybe$ monad does give
non-deterministic choice. We call this sub-class the
\texttt{if-then-else} fragment. It is given by:
\begin{align*}
\tt p = b \in Test \mid a \in At \mid skip \mid 0 \mid p \, ; \, p \mid p+_b p 
\end{align*}
where $\tt p+_b q$ reads `if $\tt b$ then $\tt p$ else $\tt q$'. This
restricted non-determinism has Kleisli $T\Maybe$-representations for
any monad $T$. To see why, choose any of the 3 natural
transformations $\alpha:\Maybe^2\to \Maybe$ (Prop.\ref{prop:units}) for which
$\underline{1}\to\Maybe$ is a unit, i.e. any of the transformations
mapping pairs $(x,\ast),(\ast,x)$ to $x$. With this choice made to resolve $\Maybe$-non-determinism, we define the Kleisli $T\Maybe$-representation of $\tt p+_b q$ in $\EK[T\Maybe](X)$ as:
\begin{equation}\label{eq:ifThenElse}
\lsem \mathtt{p+_b q}\rsem := T\alpha_X\circ \otimes_{\Maybe X}\circ \langle \lsem \mathtt{b;q}\rsem,\lsem\mathtt{\overline{b};q}\rsem\rangle
\end{equation}
When $\otimes$ is derived from the strength of $T$ we can show that
this Kleisli $T\Maybe$-representation of \texttt{if-then-else}
statements is \emph{sound} in the sense that expected algebraic
properties of the operators $\tt -+_b-$ are preserved by the
representation. It is routine by unravelling the definitions and using the explicit construction of strength in $\Set$ to show:
\begin{theorem}\label{thm:ifThenElse}
Let $\tt a,b$ be tests and consider a Kleisli $T\Maybe$-representation in $\EK[T\Maybe](X)$ interpreting tests according to \eqref{eq:test} and \texttt{if-then-else} according to \eqref{eq:ifThenElse}. Then\footnote{Note the similarity with the axioms of convex algebras.},
\begin{center}
\begin{tabular}{l l }
$\lsem \mathtt{p+_b p}\rsem=\lsem \mathtt{p}\rsem$ & $\lsem \mathtt{p+_b q}\rsem=\lsem \mathtt{q+_{~\overline{b}} p}\rsem$ \\
$\lsem \mathtt{p+_{1} q}\rsem=\lsem \mathtt{p}\rsem $ &
$\tt \lsem p+_a (q+_b r)\rsem=\lsem(p+_{a;b} (q +_{~\overline{a};b} r)\rsem$
\end{tabular}
\end{center}
\end{theorem}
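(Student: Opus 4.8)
The plan is to reduce all four identities to a single pointwise description of the \texttt{if-then-else} operator. Writing $B = \{x \in X : \lsem \mathtt{b}\rsem(x) = 1\}$ for the \emph{accept set} of a test $\mathtt b$ and $\bot := \eta^T_{\Maybe X}(\ast) \in T\Maybe X$ for the failure value, I claim that for all programs $\mathtt p, \mathtt q$ and every $x \in X$,
\[
  \lsem \mathtt{p +_b q}\rsem(x) = \begin{cases}\lsem \mathtt p\rsem(x) & x \in B,\\ \lsem \mathtt q\rsem(x) & x \notin B.\end{cases}
\]
Once this \emph{guarded case-split} lemma is in place, each equation of the theorem follows by a one-line comparison of accept sets, so the whole content of the proof is concentrated in establishing it.

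To prove the lemma I would first unwind the two guarded branches. Since $\lsem \mathtt b\rsem = \eta^T_{\Maybe X}\comp \hat{\mathtt b}$ is a \emph{pure} $\Maybe$-valued map (with $\hat{\mathtt b}:X\to X+1$ of the shape \eqref{eq:test}) lifted trivially into $T$, computing the Kleisli composite $\lsem \mathtt{b;p}\rsem = \lsem \mathtt p\rsem \circ_{T\Maybe} \lsem \mathtt b\rsem$ only requires pushing the two coproduct injections through the distributive law $\delta = [T i_1, \eta^T_{X+1}\comp i_2]$ of Theorem~\ref{theo_TM}: successes run $\mathtt p$ and failure is propagated, yielding $\lsem \mathtt{b;p}\rsem(x) = \lsem \mathtt p\rsem(x)$ for $x \in B$ and $\bot$ otherwise, and symmetrically $\lsem \mathtt{\overline b;q}\rsem(x) = \bot$ for $x \in B$ and $\lsem \mathtt q\rsem(x)$ otherwise. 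Next I would feed these into the definition \eqref{eq:ifThenElse}: at a point $x \in B$ the pair handed to $T\alpha_X \comp \otimes_{\Maybe X}$ is $(\lsem \mathtt p\rsem(x), \bot)$, and by the unit/naturality laws of the double strength in $\Set$, pairing a computation with the pure value $\bot = \eta^T(\ast)$ is the $T$-image under $w \mapsto (w, \ast)$ of $\lsem \mathtt p\rsem(x)$; applying $T\alpha_X$ then uses only $\alpha_X(w, \ast) = w$, which holds for \emph{every} $w \in \Maybe X$ by the choice of $\alpha$ in Proposition~\ref{prop:units}. Hence $T\alpha_X \comp \otimes_{\Maybe X}$ collapses to $T(\id) = \id$ and returns $\lsem \mathtt p\rsem(x)$; the case $x \notin B$ is symmetric. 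A pleasant by-product is that, since we never combine two genuine values but always a value with $\bot$, only the unit behaviour of $\alpha$ is ever invoked, so the representation is independent of which of the three admissible $\alpha$ was picked.

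With the lemma established, the four equations become boolean identities on accept sets. Idempotence $\lsem \mathtt{p+_b p}\rsem = \lsem \mathtt p\rsem$ is immediate, since both branches return $\lsem \mathtt p\rsem$. Commutativity $\lsem \mathtt{p+_b q}\rsem = \lsem \mathtt{q+_{\overline b} p}\rsem$ is the observation that the accept set of $\overline{\mathtt b}$ is the complement of $B$, so the two selection functions on $X$ coincide. The unit law $\lsem \mathtt{p +_1 q}\rsem = \lsem \mathtt p\rsem$ holds because the accept set of $\mathtt 1$ is all of $X$. Finally the generalised associativity reduces to verifying that the two sides select $\mathtt p, \mathtt q, \mathtt r$ on the same three regions cut out by the accept sets $A, B$ of $\mathtt a, \mathtt b$; this is a routine boolean computation using complementation together with the identities $\overline A \cap B \subseteq \overline{A\cap B}$ and $\overline{A \cap B} \cap \overline{\overline A \cap B} = \overline B$ (equivalently $(A\cap B)\cup(\overline A\cap B)=B$). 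I expect the main obstacle to be the lemma itself, specifically the bookkeeping needed to push the pure test value through Kleisli composition in the composite monad $T\Maybe$ and through the explicit double strength on $\Set$; once that routine-but-fiddly verification is done, everything else is purely set-theoretic.
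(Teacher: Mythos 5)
Your overall strategy is sound, and it is in fact the argument the paper leaves implicit: the paper gives no written-out proof of Theorem~\ref{thm:ifThenElse}, declaring it ``routine by unravelling the definitions and using the explicit construction of strength in $\Set$'', and your guarded case-split lemma --- $\lsem \mathtt{p+_b q}\rsem(x)=\lsem\mathtt{p}\rsem(x)$ for $x\in B$ and $\lsem\mathtt{q}\rsem(x)$ for $x\notin B$ --- is exactly the right crystallisation of that unravelling. Your computation of the guarded branches through the distributive law of Theorem~\ref{theo_TM}, the unit laws of the double strength in $\Set$ ($\otimes(t,\eta^T(m))=T(w\mapsto(w,m))(t)$ and symmetrically), and the unit property of $\alpha$ (which, as you note, holds for \emph{every} $w\in\Maybe X$, including $w=\ast$ --- needed since $\lsem\mathtt{p}\rsem(x)$ may itself contain failures) are all correct, as is the by-product that the interpretation does not depend on which of the three admissible $\alpha$'s from Proposition~\ref{prop:units} is chosen. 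You also correctly read through the typo in \eqref{eq:ifThenElse}, whose printed pairing $\langle\lsem\mathtt{b;q}\rsem,\lsem\mathtt{\overline{b};q}\rsem\rangle$ must be $\langle\lsem\mathtt{b;p}\rsem,\lsem\mathtt{\overline{b};q}\rsem\rangle$. The first three equations then follow exactly as you say.

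The one genuine problem is the fourth identity. Under your own lemma the equation \emph{as printed} is false: the left-hand side $\mathtt{p+_a(q+_b r)}$ selects $\mathtt{p}$ on all of $A$, whereas the right-hand side $\mathtt{p+_{a;b}(q+_{\overline{a};b}\,r)}$ selects $\mathtt{p}$ only on $A\cap B$ and selects $\mathtt{r}$ on $A\cap\overline{B}$, so the two sides disagree whenever $A\not\subseteq B$ and $\lsem\mathtt{p}\rsem\neq\lsem\mathtt{r}\rsem$ there. The boolean identities you invoke, $\overline{A}\cap B\subseteq\overline{A\cap B}$ and $\overline{A\cap B}\cap\overline{\overline{A}\cap B}=\overline{B}$, are precisely the ones that verify the \emph{corrected} law
\[
\lsem\mathtt{(p+_a q)+_b r}\rsem=\lsem\mathtt{p+_{a;b}(q+_{\overline{a};b}\, r)}\rsem ,
\]
with the left-hand side bracketed to the left. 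That is clearly the intended statement: the parentheses in the printed theorem are visibly unbalanced, and only this reading matches axiom (4) of convex algebras cited in the paper's footnote, where one side is right-nested and the other left-nested. So your method does prove the true theorem, but your last paragraph asserts that ``the two sides select $\mathtt{p},\mathtt{q},\mathtt{r}$ on the same three regions'' for the bracketing as given, which they do not; you should state the corrected identity explicitly and flag the typo, since otherwise this step of your verification fails on the statement exactly as written.
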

\noindent In the case of probabilistic programs, the Kleisli $\Dist\Maybe$-represen-tation \eqref{eq:ifThenElse} of \texttt{if-then-else} is precisely the one given by Kozen in \cite{K81c}.

\subsection{Non-determinism}

In order to extend a given programming paradigm with non-deterministic
features, one may also wish to combine the underlying monad $T$ with the
powerset $\Pow$ since the following two natural
transformations always exist.
\begin{flalign*}
&\alpha : \Pow T \times \Pow T \to \Pow T, \hspace{0.5cm} \alpha_X (A,B) = A \cup B \\
& 0 : \underline{1} \to \Pow T, \hspace{1.6cm} 0_X (\ast) = \emptyset
\end{flalign*}
\noindent
These respect the unit, idempotence, commutativity, and associative laws for
$\prog[+]$. So, in principle, less well-behaved monads, like $\HybM$ and
$\Dist$, together with $\Pow$ could give rise to new monads that can
be used to represent richer programming languages.

\paragraph{A negative result for $\Pow\Dist$.}
As already shown in \cite{2003:VaraccaProbability}, there is no
distributive law $\Dist\Pow\to\Pow\Dist$. We now strengthen this
result to `there does not exist any monad structure on $\Pow \Dist$
whatsoever' which answers the question recently raised in
\cite{2017:KeimelPlotkin} about this problem. Following
Theorem~\ref{thm:TNfromInitCriterion} we have:

\begin{lemma}\label{lem:unitPowfDist}
  The only natural transformations $\eta: \Id\to \Pow\Dist$ are the
  constant transformation $\eta_X(x)=\emptyset$, and the natural
  transformation defined by $\eta_X(x)=\{\delta_x\}$.
\end{lemma}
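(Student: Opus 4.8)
The plan is to run the same Yoneda argument that underlies Theorem~\ref{thm:TNfromInitCriterion}, but applied to the identity functor. Whereas that theorem exploited the representability $\underline{1}\cong\hom(\emptyset,-)$ in $\Set$, here I would use that $\Id\cong\hom(1,-)$. The Yoneda lemma then gives the bijection
\[
\left[\Set,\Set\right](\Id,\Pow\Dist)\;\cong\;\Pow\Dist(1),
\]
under which a natural transformation $\eta$ is sent to $\eta_1(\ast)\in\Pow\Dist(1)$, where $\ast$ is the unique element of $1$.

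First I would compute the right-hand side. Since a finitely supported probability distribution on the singleton $1=\{\ast\}$ must assign total mass $1$ to $\ast$, the only such distribution is the Dirac $\delta_\ast$, so $\Dist(1)=\{\delta_\ast\}\cong 1$. Applying the powerset functor, $\Pow\Dist(1)=\{\emptyset,\{\delta_\ast\}\}$ has exactly two elements. By the bijection above, there are therefore precisely two natural transformations $\Id\to\Pow\Dist$.

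It then remains to match these two elements with concrete transformations, which is the only step requiring any care. For $x\in X$, write $\hat{x}:1\to X$ for the map $\ast\mapsto x$; naturality of $\eta$ along $\hat{x}$ forces $\eta_X(x)=\Pow\Dist(\hat{x})\bigl(\eta_1(\ast)\bigr)$. If $\eta_1(\ast)=\emptyset$, then since the powerset functor sends $\emptyset$ to $\emptyset$ we obtain $\eta_X(x)=\emptyset$ for all $x$, i.e.\ the constant empty transformation. If $\eta_1(\ast)=\{\delta_\ast\}$, then pushforward along $\hat{x}$ sends $\delta_\ast\mapsto\delta_x$, so $\Pow\Dist(\hat{x})(\{\delta_\ast\})=\{\delta_x\}$ and hence $\eta_X(x)=\{\delta_x\}$. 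These recover exactly the two transformations in the statement, and there are no others.

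Since everything reduces to a one-line count of $\Pow\Dist(1)$ via Yoneda, I do not expect any real obstacle; the only thing to be careful about is the final identification, where one must check that the two abstract points of $\Pow\Dist(1)$ reconstruct the claimed concrete transformations -- a direct consequence of the reconstruction formula $\eta_X(x)=\Pow\Dist(\hat{x})(\eta_1(\ast))$.
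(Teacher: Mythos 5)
Your proof is correct and follows essentially the same route as the paper: the lemma there is justified by the remark ``Following Theorem~\ref{thm:TNfromInitCriterion}'', i.e.\ exactly the representability-plus-Yoneda argument you run with $\Id\cong\hom(1,-)$, reducing everything to the two-element set $\Pow\Dist(1)=\{\emptyset,\{\delta_\ast\}\}$ and reconstructing the two transformations via $\eta_X(x)=\Pow\Dist(\hat{x})(\eta_1(\ast))$. No gaps; the identification step you flag as needing care is handled correctly.
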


\noindent
By combining the result above, the monadic laws
$\mu\circ\eta_T=\mu\circ T\eta=\Id$, and adapting the proof of
\cite{2003:VaraccaProbability} we can show that:
\begin{theorem}\label{thm:PlotkinOpenProb}
  There is no monad structure on $\Pow\Dist$.
\end{theorem}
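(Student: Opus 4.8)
The plan is to show that no candidate pair $(\eta,\mu)$ of unit and multiplication can satisfy the monad laws, by first pinning down $\eta$ completely and then controlling $\mu$. Write $T=\Pow\Dist$ throughout. By Lemma~\ref{lem:unitPowfDist}, the unit $\eta:\Id\to T$ is either the constant map $\eta_X(x)=\emptyset$ or the map $\eta_X(x)=\{\delta_x\}$. The first option is excluded immediately by the left unit law $\mu_X\circ\eta_{TX}=\Id_{TX}$: if $\eta_X(x)=\emptyset$ for all $x$, then $\eta_{TX}$ is the constant map $S\mapsto\emptyset$, so the law would demand $\mu_X(\emptyset)=S$ for every $S\in TX$, which is impossible as soon as $TX$ has more than one element (e.g.\ for $X=1$, where $TX=\{\emptyset,\{\delta_\ast\}\}$). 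Hence any monad structure must have $\eta_X(x)=\{\delta_x\}$.

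Next I would fix the value of $\mu$ on the empty set. For any $f:X\to Y$, naturality of $\mu$ evaluated at $\emptyset\in T^2X$ gives $\Pow\Dist f\,(\mu_X(\emptyset))=\mu_Y(\emptyset)$, because $T^2f(\emptyset)=\emptyset$ (the outermost $\Pow$ sends $\emptyset$ to $\emptyset$). This says exactly that $X\mapsto\mu_X(\emptyset)$ is a natural transformation $\underline{1}\to\Pow\Dist$. By Theorem~\ref{thm:TNfromInitCriterion} these are in bijection with $\Pow\Dist\emptyset$, and since $\Dist\emptyset=\emptyset$ we have $\Pow\Dist\emptyset=\{\emptyset\}$, a singleton. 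Therefore $\mu_X(\emptyset)=\emptyset$ for every $X$. Together with the two unit laws this already forces the behaviour of $\mu$ on several distinguished elements: $\mu_X(\{\delta_S\})=S$ for every $S\in TX$ (left unit), and $\mu_X\circ T\eta_X=\Id_{TX}$ (right unit).

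The core of the argument is then to reproduce Varacca's counterexample \cite{2003:VaraccaProbability} using only these constraints rather than a presupposed distributive law. I would work over the two-element set $X=\{a,b\}$ and consider the elements of $T^2X$ built from the singletons $\{\delta_a\},\{\delta_b\}$ and the union $\{\delta_a,\delta_b\}$ in $TX$, forming both the Dirac $\delta_{\{\delta_a,\delta_b\}}$ and the strict convex combinations $p\,\delta_{\{\delta_a\}}+(1-p)\,\delta_{\{\delta_b\}}$ inside $\Dist(TX)$. Using naturality with respect to the collapsing map $X\to 1$ and the swap $a\leftrightarrow b$, together with the left and right unit laws and associativity $\mu_X\circ\mu_{TX}=\mu_X\circ T\mu_X$, one transports the already-known values of $\mu$ and pins $\mu_X$ down on these elements; the idempotency of union (the set $\{\delta_a,\delta_b\}$ is its own join) and the strictness of convex combination (all weights $p\in[0,1]$ are genuinely distinct) then force two incompatible assignments of probability weights, which is exactly the numerical collapse in Varacca's proof.

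The main obstacle is precisely this last step, and it is also what separates the present statement from the classical one. Varacca rules out a \emph{distributive law}, so in his setting the multiplication is given in advance by the law's axioms; here we may not assume that $\mu$ arises from any distributive law, so its values on the critical convex-combination elements must instead be extracted purely from the unit laws, naturality and associativity. Making that extraction airtight -- showing that these general monad constraints determine $\mu$ on enough of $T^2X$ to trigger Varacca's contradiction -- is the delicate part of the proof; everything preceding it is bookkeeping with the unit.
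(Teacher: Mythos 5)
Your treatment of the unit is correct and coincides with the paper's: Lemma~\ref{lem:unitPowfDist} leaves two candidates, and the constant-$\emptyset$ one is killed by the left unit law exactly as you argue (your extra observation that $\mu_X(\emptyset)=\emptyset$, via Theorem~\ref{thm:TNfromInitCriterion}, is also correct, though it turns out not to be needed). The problem is that the core of the proof --- the contradiction in the case $\eta_X(x)=\{\delta_x\}$ --- is not actually carried out, and you acknowledge this yourself. Worse, the plan you sketch would not work as stated. Writing $T=\Pow\Dist$: on the two-element set $X=\{a,b\}$, every element you propose to use is already in the image of one of the two unit composites --- $\{\delta_{\{\delta_a,\delta_b\}}\}=\eta_{TX}(\{\delta_a,\delta_b\})$ and $\{p\delta_{\{\delta_a\}}+(1-p)\delta_{\{\delta_b\}}\}=T\eta_X(\{p\delta_a+(1-p)\delta_b\})$ --- so the unit laws assign them values that are mutually consistent; naturality under the swap and under $X\to 1$ only adds symmetry and non-emptiness constraints. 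No contradiction can be extracted from this configuration, and invoking associativity does not change that (the paper's proof never uses associativity at all).

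The missing idea, which is the heart of the paper's argument (attributed to Plotkin), is to work on a \emph{four}-element set $X=\{a,b,c,d\}$ with the single element $u=\{\frac{1}{2}\delta_{\{\delta_a,\delta_b\}}+\frac{1}{2}\delta_{\{\delta_c,\delta_d\}}\}\in T^2X$, which lies in the image of \emph{neither} unit composite, and to transport it into those images by three different quotient maps. The two maps $f,g:X\to Y=\{a,b\}$ that merge the pair $\{c,d\}$ onto $\{a,b\}$ in the two possible ways send $u$ to $\{\delta_{\{\delta_a,\delta_b\}}\}=\eta_{TY}(\{\delta_a,\delta_b\})$, so the left unit law plus naturality forces every distribution in $\mu_X(u)$ into the intersection of the two preimages of $\{\delta_a,\delta_b\}$, which is contained in $\{\delta_a,\delta_b,\delta_c,\delta_d\}$; the map $h:X\to Z=\{a,c\}$ that collapses each pair to a point sends $u$ to $T\eta_Z(\{\frac{1}{2}\delta_a+\frac{1}{2}\delta_c\})$, so the right unit law plus naturality forces $Th(\mu_X(u))=\{\frac{1}{2}\delta_a+\frac{1}{2}\delta_c\}$, incompatible with $Th(\mu_X(u))\subseteq\{\delta_a,\delta_c\}$. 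The need for two disjoint pairs that can be merged with each other (for $f,g$) and separately collapsed (for $h$) is precisely why four points are required and why your two-point configuration cannot trigger the collapse; supplying this construction is the step your proposal is missing.
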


\paragraph{A positive result for $\NPow \HybM$}
On a more positive note, we can combine the non-empty powerset monad
$\NPow$ with $\HybM$ using the following theorem.
\begin{theorem}
  \label{theo_QE}
  There exists a distributive law
  $\delta : \HybM \NPow \to \NPow \HybM$ defined by
    $\delta_X (f,d) = \left \{ (g,d) \in \HybM X \mid g \in f
    \right \}$
  where $g \in f$ is shorthand notation for the condition
  \begin{flalign*}
    \forall t \in [0,d] \> . \> g (t ) \in f (t)
  \end{flalign*}
\end{theorem}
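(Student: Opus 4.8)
The plan is to verify that the stated $\delta$ is a natural transformation $\HybM\NPow \to \NPow\HybM$ and then to check the four Beck coherence axioms for a distributive law making $\NPow\HybM$ a monad: compatibility with the two units, $\delta\comp \eta^{\HybM}\NPow = \NPow\eta^{\HybM}$ and $\delta\comp \HybM\eta^{\NPow} = \eta^{\NPow}\HybM$, and compatibility with the two multiplications, $\delta\comp \HybM\mu^{\NPow} = \mu^{\NPow}\HybM \comp \NPow\delta\comp \delta\NPow$ and $\delta\comp \mu^{\HybM}\NPow = \NPow\mu^{\HybM}\comp \delta\HybM\comp \HybM\delta$. A convenient observation throughout is that every map built from $\delta$ on either side of these equations is specified by a pointwise membership condition of the form ``all sections $(h,d)$ with $h(t)$ lying in a prescribed set for each $t$'', so that an equality of two such maps reduces to checking that the two prescribed pointwise conditions coincide.

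For naturality, given $\phi : X \to Y$ I would chase an evolution $(f,d)\in\HybM\NPow X$ around the square; both routes yield the set of $(h,d)$ with $h(t)$ in the $\phi$-image of $f(t)$, the nontrivial inclusion (that every such $h$ factors as $\phi\comp g$ for a section $g$ of $f$) being an application of the axiom of choice. The two unit axioms are then immediate unfoldings: for $\HybM\eta^{\NPow}$ the condition $g(t)\in\{f(t)\}$ forces $g=f$, and for $\eta^{\HybM}\NPow$ a duration-$0$ evolution into $\NPow X$ collapses $\delta$ to $\{(\const{a},0)\mid a\in A\}$, matching $\NPow\eta^{\HybM}$. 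The $\NPow$-multiplication axiom is similar in spirit: both sides describe the sections $(g,d)$ with $g(t)\in\bigcup f(t)$, and the equality between ``for every $t$ there is a set in $f(t)$ containing $g(t)$'' and ``there is a global choice of such sets'' is once more exactly the axiom of choice.

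The main obstacle is the $\HybM$-multiplication axiom, since it forces one to unfold the hybrid multiplication $\mu^{\HybM}_X(F,d)=(\theta_X\comp F,d)\conc F(d)$, which simultaneously records the trace of starting values and concatenates the final inner evolution. Writing $F(t)=(f_t,e_t)$ and $e$ for the duration of the last evolution $F(d)$, the left-hand route $\delta\comp\mu^{\HybM}\NPow$ first produces the single set-valued evolution of duration $d+e$ whose value at $s\in[0,d]$ is the starting set $f_s(0)$ and at $s\in[d,d+e]$ is $f_d(s-d)$, and then takes all its pointwise sections. The right-hand route instead turns each inner set-valued evolution into a set of evolutions ($\HybM\delta$), selects a compatible family of them along $[0,d]$ ($\delta\HybM$), and flattens each selected family ($\NPow\mu^{\HybM}$). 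I would prove that both routes return exactly the set of $(h,d+e)$ with $h(s)\in f_s(0)$ on $[0,d]$ and $h(s)\in f_d(s-d)$ on $[d,d+e]$. The delicate inclusion is that any such $h$ is realised by the right-hand route: one must assemble a single evolution-of-evolutions $\gamma$ whose start-value trace and final evolution reproduce $h$, which requires choosing, for each instant $t<d$, a full section $g_t$ of $f_t$ with $g_t(0)=h(t)$, while setting $g_d(\tau)=h(d+\tau)$; the existence of these $g_t$ is again an appeal to the axiom of choice, and the matching of the junction value at $s=d$ uses that both pieces prescribe membership in the same set $f_d(0)$.
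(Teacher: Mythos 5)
Your proposal is correct and follows essentially the same route as the paper's proof: naturality plus the four Beck axioms, each verified by reducing both sides to a pointwise membership condition, with the same delicate step in the $\HybM$-multiplication square (assembling, for each $t<d$, a full section $g_t$ of $f_t$ with prescribed start value $h(t)$, and matching the junction at $s=d$), which is exactly the paper's equivalence $(\ast)$. If anything you are more explicit than the paper about where the axiom of choice and the non-emptiness supplied by $\NPow$ (rather than $\Pow$) are used; the paper calls naturality ``straightforward'' and only flags the failure for $\Pow$ in a remark after the proof.
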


\section{Conclusion}
We have developed a general definition of program semantics as \emph{Kleisli representation}, and studied whether monads describing various computational paradigms support Kleisli representations of classical programming constructs. We have shown that it is possible to explicitly and exhaustively answer this question by providing complete classification results of natural transformations of the type $(T)^n\to T$. In particular we have shown that monads for probabilistic and hybrid effects cannot possibly support the usual combination of failure and non-determinism, but the techniques developed are widely applicable. We have shown how combining effect monads with the $\maybe$ monad allows the representation of a failure, tests and \texttt{if-then-else} statements, arguably representing a sweet spot between between complexity and expressivity.

\bibliography{dahlqvist_neves}


\begin{thebibliography}{34}


\ifx \showCODEN    \undefined \def \showCODEN     #1{\unskip}     \fi
\ifx \showDOI      \undefined \def \showDOI       #1{#1}\fi
\ifx \showISBNx    \undefined \def \showISBNx     #1{\unskip}     \fi
\ifx \showISBNxiii \undefined \def \showISBNxiii  #1{\unskip}     \fi
\ifx \showISSN     \undefined \def \showISSN      #1{\unskip}     \fi
\ifx \showLCCN     \undefined \def \showLCCN      #1{\unskip}     \fi
\ifx \shownote     \undefined \def \shownote      #1{#1}          \fi
\ifx \showarticletitle \undefined \def \showarticletitle #1{#1}   \fi
\ifx \showURL      \undefined \def \showURL       {\relax}        \fi
\providecommand\bibfield[2]{#2}
\providecommand\bibinfo[2]{#2}
\providecommand\natexlab[1]{#1}
\providecommand\showeprint[2][]{arXiv:#2}

\bibitem[\protect\citeauthoryear{A. and H.}{A. and H.}{2017}]%
        {sokolova:calco2017}
\bibfield{author}{\bibinfo{person}{Sokolova A.} {and} \bibinfo{person}{Woracek
  H.}} \bibinfo{year}{2017}\natexlab{}.
\newblock \showarticletitle{Termination in Convex Sets of Distributions}. In
  \bibinfo{booktitle}{{\em CALCO}}. LIPIcs 72, \bibinfo{pages}{1--16}.
\newblock


\bibitem[\protect\citeauthoryear{Abbott, Altenkirch, and Ghani}{Abbott
  et~al\mbox{.}}{2003}]%
        {abbott03}
\bibfield{author}{\bibinfo{person}{M.~Gordon Abbott}, \bibinfo{person}{T.
  Altenkirch}, {and} \bibinfo{person}{N. Ghani}.}
  \bibinfo{year}{2003}\natexlab{}.
\newblock \showarticletitle{Categories of Containers}. In
  \bibinfo{booktitle}{{\em {FOSSACS}}} {\em (\bibinfo{series}{LNCS})},
  Vol.~\bibinfo{volume}{2620}. \bibinfo{publisher}{Springer},
  \bibinfo{pages}{23--38}.
\newblock


\bibitem[\protect\citeauthoryear{Ad{\'a}mek and Rosicky}{Ad{\'a}mek and
  Rosicky}{1994}]%
        {1994:AdamekLocally}
\bibfield{author}{\bibinfo{person}{J. Ad{\'a}mek} {and} \bibinfo{person}{J.
  Rosicky}.} \bibinfo{year}{1994}\natexlab{}.
\newblock \bibinfo{booktitle}{{\em Locally presentable and accessible
  categories}}. Vol.~\bibinfo{volume}{189}.
\newblock \bibinfo{publisher}{Cambridge University Press}.
\newblock


\bibitem[\protect\citeauthoryear{Dahlqvist, Danos, and Garnier}{Dahlqvist
  et~al\mbox{.}}{2016a}]%
        {2016:Machine}
\bibfield{author}{\bibinfo{person}{F. Dahlqvist}, \bibinfo{person}{V. Danos},
  {and} \bibinfo{person}{I. Garnier}.} \bibinfo{year}{2016}\natexlab{a}.
\newblock \showarticletitle{Giry and the Machine}. In \bibinfo{booktitle}{{\em
  {MFPS XXXII}}} {\em (\bibinfo{series}{ENTCS})}, Vol.~\bibinfo{volume}{325}.
  \bibinfo{pages}{85--110}.
\newblock


\bibitem[\protect\citeauthoryear{Dahlqvist, Danos, and Garnier}{Dahlqvist
  et~al\mbox{.}}{2016b}]%
        {2016:Concur}
\bibfield{author}{\bibinfo{person}{F. Dahlqvist}, \bibinfo{person}{V. Danos},
  {and} \bibinfo{person}{I. Garnier}.} \bibinfo{year}{2016}\natexlab{b}.
\newblock \showarticletitle{Robustly Parameterised Higher-Order Probabilistic
  Models}. In \bibinfo{booktitle}{{\em {CONCUR}}} {\em
  (\bibinfo{series}{LIPIcs})}, Vol.~\bibinfo{volume}{59}.
  \bibinfo{publisher}{Schloss Dagstuhl - Leibniz-Zentrum fuer Informatik},
  \bibinfo{pages}{23:1--23:15}.
\newblock


\bibitem[\protect\citeauthoryear{Foster, Kozen, Mamouras, Reitblatt, and
  Silva}{Foster et~al\mbox{.}}{2016}]%
        {2016:ProbNetKAT}
\bibfield{author}{\bibinfo{person}{N. Foster}, \bibinfo{person}{D. Kozen},
  \bibinfo{person}{K. Mamouras}, \bibinfo{person}{M. Reitblatt}, {and}
  \bibinfo{person}{A. Silva}.} \bibinfo{year}{2016}\natexlab{}.
\newblock \showarticletitle{Probabilistic {NetKAT}}. In
  \bibinfo{booktitle}{{\em ESOP 2016}} {\em (\bibinfo{series}{LNCS})}.
  Springer, \bibinfo{pages}{282--309}.
\newblock


\bibitem[\protect\citeauthoryear{Giry}{Giry}{1982}]%
        {1981:Giry}
\bibfield{author}{\bibinfo{person}{M. Giry}.} \bibinfo{year}{1982}\natexlab{}.
\newblock \showarticletitle{A categorical approach to probability theory}.
\newblock In \bibinfo{booktitle}{{\em Categorical aspects of topology and
  analysis}}. \bibinfo{publisher}{Springer}, \bibinfo{pages}{68--85}.
\newblock


\bibitem[\protect\citeauthoryear{Gischer}{Gischer}{1988}]%
        {1988:GischerEquational}
\bibfield{author}{\bibinfo{person}{J.~L. Gischer}.}
  \bibinfo{year}{1988}\natexlab{}.
\newblock \showarticletitle{The equational theory of pomsets}.
\newblock \bibinfo{journal}{{\em Theoretical Computer Science\/}}
  \bibinfo{volume}{61}, \bibinfo{number}{2-3} (\bibinfo{year}{1988}),
  \bibinfo{pages}{199--224}.
\newblock


\bibitem[\protect\citeauthoryear{Goodman, Mansinghka, Roy, Bonawitz, and
  Tenenbaum}{Goodman et~al\mbox{.}}{2012}]%
        {goodman2012church}
\bibfield{author}{\bibinfo{person}{N. Goodman}, \bibinfo{person}{V.
  Mansinghka}, \bibinfo{person}{D.~M Roy}, \bibinfo{person}{K. Bonawitz}, {and}
  \bibinfo{person}{J. Tenenbaum}.} \bibinfo{year}{2012}\natexlab{}.
\newblock \showarticletitle{Church: a language for generative models}.
\newblock \bibinfo{journal}{{\em arXiv preprint arXiv:1206.3255\/}}
  (\bibinfo{year}{2012}).
\newblock


\bibitem[\protect\citeauthoryear{Hansen, Kupke, and Leal}{Hansen
  et~al\mbox{.}}{2014}]%
        {2014:HansenPDL}
\bibfield{author}{\bibinfo{person}{H.~H. Hansen}, \bibinfo{person}{C. Kupke},
  {and} \bibinfo{person}{R.~A. Leal}.} \bibinfo{year}{2014}\natexlab{}.
\newblock \showarticletitle{Strong Completeness for Iteration-Free Coalgebraic
  Dynamic Logics}. In \bibinfo{booktitle}{{\em {TCS} 2014, Proceedings}} {\em
  (\bibinfo{series}{LNCS})}, Vol.~\bibinfo{volume}{8705}.
  \bibinfo{publisher}{Springer}, \bibinfo{pages}{281--295}.
\newblock


\bibitem[\protect\citeauthoryear{Haywood}{Haywood}{2011}]%
        {haywood2011symmetries}
\bibfield{author}{\bibinfo{person}{S. Haywood}.}
  \bibinfo{year}{2011}\natexlab{}.
\newblock \bibinfo{booktitle}{{\em Symmetries and conservation laws in particle
  physics: an introduction to group theory for particle physicists}}.
\newblock \bibinfo{publisher}{World scientific}.
\newblock


\bibitem[\protect\citeauthoryear{H{\"o}fner}{H{\"o}fner}{2009}]%
        {hofner09}
\bibfield{author}{\bibinfo{person}{P. H{\"o}fner}.}
  \bibinfo{year}{2009}\natexlab{}.
\newblock {\em \bibinfo{title}{Algebraic calculi for hybrid systems}}.
\newblock \bibinfo{thesistype}{Ph.D. Dissertation}. \bibinfo{school}{University
  of Augsburg}.
\newblock


\bibitem[\protect\citeauthoryear{Keimel and Plotkin}{Keimel and
  Plotkin}{2017}]%
        {2017:KeimelPlotkin}
\bibfield{author}{\bibinfo{person}{K. Keimel} {and} \bibinfo{person}{G.
  Plotkin}.} \bibinfo{year}{2017}\natexlab{}.
\newblock \showarticletitle{Mixed Powerdomains for probability and non
  determinism}.
\newblock \bibinfo{journal}{{\em To Appear in LMCS\/}} (\bibinfo{year}{2017}).
\newblock


\bibitem[\protect\citeauthoryear{Kozen}{Kozen}{1981}]%
        {K81c}
\bibfield{author}{\bibinfo{person}{D. Kozen}.} \bibinfo{year}{1981}\natexlab{}.
\newblock \showarticletitle{Semantics of probabilistic programs}.
\newblock \bibinfo{journal}{{\em J. Comput. Syst. Sci.\/}}
  \bibinfo{volume}{22}, \bibinfo{number}{3} (\bibinfo{date}{June}
  \bibinfo{year}{1981}), \bibinfo{pages}{328--350}.
\newblock


\bibitem[\protect\citeauthoryear{Kozen}{Kozen}{1997}]%
        {kozen1997kleene}
\bibfield{author}{\bibinfo{person}{D. Kozen}.} \bibinfo{year}{1997}\natexlab{}.
\newblock \showarticletitle{Kleene algebra with tests}.
\newblock \bibinfo{journal}{{\em ACM Transactions on Programming Languages and
  Systems (TOPLAS)\/}} \bibinfo{volume}{19}, \bibinfo{number}{3}
  (\bibinfo{year}{1997}), \bibinfo{pages}{427--443}.
\newblock


\bibitem[\protect\citeauthoryear{L{\"{u}}th and Ghani}{L{\"{u}}th and
  Ghani}{2002}]%
        {Ghani02}
\bibfield{author}{\bibinfo{person}{C. L{\"{u}}th} {and} \bibinfo{person}{N.
  Ghani}.} \bibinfo{year}{2002}\natexlab{}.
\newblock \showarticletitle{Composing monads using coproducts}. In
  \bibinfo{booktitle}{{\em {ICFP}}},
  \bibfield{editor}{\bibinfo{person}{M.~Wand} {and}
  \bibinfo{person}{S.~L.~Peyton Jones}} (Eds.). \bibinfo{publisher}{{ACM}},
  \bibinfo{pages}{133--144}.
\newblock


\bibitem[\protect\citeauthoryear{MacLane and Moerdijk}{MacLane and
  Moerdijk}{2012}]%
        {2012:MaclaneSheaves}
\bibfield{author}{\bibinfo{person}{S. MacLane} {and} \bibinfo{person}{I.
  Moerdijk}.} \bibinfo{year}{2012}\natexlab{}.
\newblock \bibinfo{booktitle}{{\em Sheaves in geometry and logic: A first
  introduction to topos theory}}.
\newblock \bibinfo{publisher}{Springer}.
\newblock


\bibitem[\protect\citeauthoryear{Moggi}{Moggi}{1989}]%
        {1989:MoggiComputational}
\bibfield{author}{\bibinfo{person}{E. Moggi}.} \bibinfo{year}{1989}\natexlab{}.
\newblock \showarticletitle{Computational lambda-calculus and monads}. In
  \bibinfo{booktitle}{{\em {LICS}}}. IEEE, \bibinfo{pages}{14--23}.
\newblock


\bibitem[\protect\citeauthoryear{Moggi}{Moggi}{1991}]%
        {1991:MoggiNotions}
\bibfield{author}{\bibinfo{person}{E. Moggi}.} \bibinfo{year}{1991}\natexlab{}.
\newblock \showarticletitle{Notions of computation and monads}.
\newblock \bibinfo{journal}{{\em Information and computation\/}}
  \bibinfo{volume}{93}, \bibinfo{number}{1} (\bibinfo{year}{1991}),
  \bibinfo{pages}{55--92}.
\newblock


\bibitem[\protect\citeauthoryear{Neves, Barbosa, Hofmann, and Martins}{Neves
  et~al\mbox{.}}{2016}]%
        {neves15}
\bibfield{author}{\bibinfo{person}{R. Neves}, \bibinfo{person}{L.~S. Barbosa},
  \bibinfo{person}{D. Hofmann}, {and} \bibinfo{person}{M.~A. Martins}.}
  \bibinfo{year}{2016}\natexlab{}.
\newblock \showarticletitle{Continuity as a computational effect}.
\newblock \bibinfo{journal}{{\em JLAMP\/}} (\bibinfo{year}{2016}).
\newblock


\bibitem[\protect\citeauthoryear{P. and F.}{P. and F.}{2014}]%
        {pmlr-v32-paige14}
\bibfield{author}{\bibinfo{person}{Brooks P.} {and} \bibinfo{person}{Wood F.}}
  \bibinfo{year}{2014}\natexlab{}.
\newblock \showarticletitle{A Compilation Target for Probabilistic Programming
  Languages}. In \bibinfo{booktitle}{{\em International Conference on Machine
  Learning}} {\em (\bibinfo{series}{Proceedings of Machine Learning
  Research})}, Vol.~\bibinfo{volume}{32}. \bibinfo{pages}{1935--1943}.
\newblock


\bibitem[\protect\citeauthoryear{Perko}{Perko}{2013}]%
        {perko2013differential}
\bibfield{author}{\bibinfo{person}{Lawrence Perko}.}
  \bibinfo{year}{2013}\natexlab{}.
\newblock \bibinfo{booktitle}{{\em Differential equations and dynamical
  systems}}. Vol.~\bibinfo{volume}{7}.
\newblock \bibinfo{publisher}{Springer Science \& Business Media}.
\newblock


\bibitem[\protect\citeauthoryear{Platzer}{Platzer}{2010}]%
        {platzer10}
\bibfield{author}{\bibinfo{person}{A. Platzer}.}
  \bibinfo{year}{2010}\natexlab{}.
\newblock \bibinfo{booktitle}{{\em Logical Analysis of Hybrid Systems: Proving
  Theorems for Complex Dynamics}}.
\newblock \bibinfo{publisher}{Springer}, \bibinfo{address}{Heidelberg}.
\newblock


\bibitem[\protect\citeauthoryear{Plotkin and Power}{Plotkin and Power}{2001a}]%
        {2001:PlotkinPowerAEAdequacy}
\bibfield{author}{\bibinfo{person}{G. Plotkin} {and} \bibinfo{person}{J.
  Power}.} \bibinfo{year}{2001}\natexlab{a}.
\newblock \showarticletitle{Adequacy for algebraic effects}. In
  \bibinfo{booktitle}{{\em International Conference on Foundations of Software
  Science and Computation Structures ({FoSSACS})}}.
  \bibinfo{publisher}{Springer}, \bibinfo{pages}{1--24}.
\newblock


\bibitem[\protect\citeauthoryear{Plotkin and Power}{Plotkin and Power}{2001b}]%
        {2001:PlotkinPowerSemanticsAE}
\bibfield{author}{\bibinfo{person}{G. Plotkin} {and} \bibinfo{person}{J.
  Power}.} \bibinfo{year}{2001}\natexlab{b}.
\newblock \showarticletitle{Semantics for algebraic operations}.
\newblock \bibinfo{journal}{{\em ENTCS\/}}  \bibinfo{volume}{45}
  (\bibinfo{year}{2001}), \bibinfo{pages}{332--345}.
\newblock


\bibitem[\protect\citeauthoryear{Plotkin and Power}{Plotkin and Power}{2003}]%
        {2003:PlotkinPowerEffects}
\bibfield{author}{\bibinfo{person}{G. Plotkin} {and} \bibinfo{person}{J.
  Power}.} \bibinfo{year}{2003}\natexlab{}.
\newblock \showarticletitle{Algebraic operations and generic effects}.
\newblock \bibinfo{journal}{{\em Applied Categorical Structures\/}}
  \bibinfo{volume}{11}, \bibinfo{number}{1} (\bibinfo{year}{2003}),
  \bibinfo{pages}{69--94}.
\newblock


\bibitem[\protect\citeauthoryear{Power}{Power}{1999}]%
        {1999:PowerEnriched}
\bibfield{author}{\bibinfo{person}{J. Power}.} \bibinfo{year}{1999}\natexlab{}.
\newblock \showarticletitle{Enriched {L}awvere theories}.
\newblock \bibinfo{journal}{{\em Theory and Applications of Categories\/}}
  \bibinfo{volume}{6}, \bibinfo{number}{7} (\bibinfo{year}{1999}),
  \bibinfo{pages}{83--93}.
\newblock


\bibitem[\protect\citeauthoryear{Smolka, Kumar, Foster, Kozen, and
  Silva}{Smolka et~al\mbox{.}}{2017}]%
        {2017:CantorScott}
\bibfield{author}{\bibinfo{person}{S. Smolka}, \bibinfo{person}{P. Kumar},
  \bibinfo{person}{N. Foster}, \bibinfo{person}{D. Kozen}, {and}
  \bibinfo{person}{A. Silva}.} \bibinfo{year}{2017}\natexlab{}.
\newblock \showarticletitle{Cantor meets {S}cott}. In \bibinfo{booktitle}{{\em
  POPL 2017}}. ACM, \bibinfo{pages}{557--571}.
\newblock


\bibitem[\protect\citeauthoryear{Sokolova and Woracek}{Sokolova and
  Woracek}{2017}]%
        {2017:Sokolova}
\bibfield{author}{\bibinfo{person}{A. Sokolova} {and} \bibinfo{person}{H.
  Woracek}.} \bibinfo{year}{2017}\natexlab{}.
\newblock \showarticletitle{Termination in Convex Sets of Distributions}. In
  \bibinfo{booktitle}{{\em CALCO 2017}}.
\newblock


\bibitem[\protect\citeauthoryear{Steinberg}{Steinberg}{2011}]%
        {steinberg2011representation}
\bibfield{author}{\bibinfo{person}{B. Steinberg}.}
  \bibinfo{year}{2011}\natexlab{}.
\newblock \bibinfo{booktitle}{{\em Representation theory of finite groups: an
  introductory approach}}.
\newblock \bibinfo{publisher}{Springer}.
\newblock


\bibitem[\protect\citeauthoryear{Suenaga and Hasuo}{Suenaga and Hasuo}{2011}]%
        {suenaga11}
\bibfield{author}{\bibinfo{person}{Kohei Suenaga} {and} \bibinfo{person}{Ichiro
  Hasuo}.} \bibinfo{year}{2011}\natexlab{}.
\newblock \showarticletitle{Programming with infinitesimals: A while-language
  for hybrid system modeling}.
\newblock \bibinfo{journal}{{\em Automata, Languages and Programming\/}}
  (\bibinfo{year}{2011}), \bibinfo{pages}{392--403}.
\newblock


\bibitem[\protect\citeauthoryear{Tabuada}{Tabuada}{2009}]%
        {tabuada09}
\bibfield{author}{\bibinfo{person}{P. Tabuada}.}
  \bibinfo{year}{2009}\natexlab{}.
\newblock \bibinfo{booktitle}{{\em Verification and Control of Hybrid Systems -
  {A} Symbolic Approach}}.
\newblock \bibinfo{publisher}{Springer}.
\newblock


\bibitem[\protect\citeauthoryear{Varacca}{Varacca}{2003}]%
        {2003:VaraccaProbability}
\bibfield{author}{\bibinfo{person}{D. Varacca}.}
  \bibinfo{year}{2003}\natexlab{}.
\newblock {\em \bibinfo{title}{Probability, nondeterminism and concurrency: two
  denotational models for probabilistic computation}}.
\newblock \bibinfo{thesistype}{Ph.D. Dissertation}. \bibinfo{school}{BRICS}.
\newblock


\bibitem[\protect\citeauthoryear{Wood, van~de Meent, and Mansinghka}{Wood
  et~al\mbox{.}}{2014}]%
        {wood-aistats-2014}
\bibfield{author}{\bibinfo{person}{F. Wood}, \bibinfo{person}{J.~W. van~de
  Meent}, {and} \bibinfo{person}{V. Mansinghka}.}
  \bibinfo{year}{2014}\natexlab{}.
\newblock \showarticletitle{A New Approach to Probabilistic Programming
  Inference}. In \bibinfo{booktitle}{{\em International conference on
  Artificial Intelligence and Statistics}}. \bibinfo{pages}{1024--1032}.
\newblock


\end{thebibliography}

\appendix

\section{Some programming languages generated by the hybrid monad}

The programming language $\prog[Hyb](X)$, analysed in Sections
\ref{sect:illustr} and \ref{sec:KlRep}, is \emph{time-triggered}: a
program $(\dot{x}_1 = t_1, \dots, \dot{x}_n = t_n \> \& \> d)$
terminates precisely when the instant of time $d$ is achieved. It is
also possible to consider \emph{event-triggered} languages by forcing
a program to terminate \emph{as soon as} a certain event occurs.  For
this, however, one needs to be very strict on the kinds of event
allowed, otherwise there might not exist an earliest time at which an
event happens.

Consider a finite set of real-valued variables
$X = \{ x_1, \dots, x_n\}$ and denote by $\At(X)$ the set given by the
grammar
\begin{align*}
  & \varphi = (x_1 := t, \dots, x_n := t) \mid
    (\dot{x}_1 = t, \dots, \dot{x}_n = t \> \& \> \psi),  \\
  &
  t = r \mid r \cdot x \mid t + t, \\
  & \psi = t \leq t \mid t \geq t \mid
    \psi \wedge \psi \mid \psi \vee \psi
\end{align*}
where $x \in X$.  Then, for a predicate $\psi$ define
$\lsem {\psi} \rsem \subseteq \R^n$ by,
\begin{flalign*}
  \lsem {t_1 \leq t_2} \rsem & = \left \{ (v_1,\dots,v_n) \in \R^n
    \mid \lsem {t_1} \rsem \leq \lsem {t_2} \rsem \right \} \\
  \lsem {t_1 \geq t_2} \rsem & = \left \{ (v_1,\dots,v_n) \in \R^n
    \mid
    \lsem {t_1} \rsem \geq \lsem {t_2} \rsem \right \} \\
  \lsem {\psi_1 \wedge \psi_1} \rsem & =  \lsem {\psi_1} \rsem \cap \lsem {\psi_2} \rsem \\
  \lsem {\psi_1 \vee \psi_1} \rsem & = \lsem {\psi_1} \rsem \cup \lsem {\psi_2}
  \rsem
\end{flalign*}

\begin{proposition}
  \label{cor:closed}
  For every predicate $\psi$ the set $\lsem {\psi} \rsem$ is closed in
  $\R^n$.
\end{proposition}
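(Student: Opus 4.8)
The plan is to prove the statement by structural induction on the predicate $\psi$, after first recording the key fact that every term $t$ denotes a continuous function $\R^n \to \R$.

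I would begin with a preliminary sub-induction on the grammar for terms to establish this continuity. A constant $r$ denotes a constant function; $r \cdot x$ denotes a scaled coordinate projection; and $t_1 + t_2$ denotes the pointwise sum of the interpretations of $t_1$ and $t_2$, which are continuous by the induction hypothesis. Since constant maps, coordinate projections, and sums of continuous maps are all continuous, every term interpretation $(v_1,\dots,v_n) \mapsto \lsem t \rsem$ is continuous---in fact affine, though continuity is all that is needed below.

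For the base cases of the main induction, consider $\psi = t_1 \leq t_2$. The set $\lsem t_1 \leq t_2 \rsem$ coincides with the preimage $g\inv\big((-\infty,0]\big)$ of a closed half-line under the continuous map $g : (v_1,\dots,v_n) \mapsto \lsem t_1 \rsem - \lsem t_2 \rsem$, and since preimages of closed sets under continuous maps are closed, it is closed; the case $\psi = t_1 \geq t_2$ is symmetric, using the preimage of $[0,\infty)$. For the two inductive steps I would appeal directly to the semantic clauses: $\lsem \psi_1 \wedge \psi_2 \rsem = \lsem \psi_1 \rsem \cap \lsem \psi_2 \rsem$ is a finite intersection of closed sets and hence closed, and $\lsem \psi_1 \vee \psi_2 \rsem = \lsem \psi_1 \rsem \cup \lsem \psi_2 \rsem$ is a finite union of closed sets and hence closed, in both cases invoking the induction hypothesis on $\psi_1$ and $\psi_2$.

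This argument is elementary and I do not anticipate any genuine obstacle. The only subtle point is to keep track of where the hypotheses on the grammar are actually used: continuity of the term interpretation relies on the terms being affine combinations of the variables, and closedness of the disjunctive case relies crucially on $\vee$ being binary, since an arbitrary union of closed sets need not be closed. Restricting predicates to finite boolean combinations of linear inequalities is therefore precisely what guarantees that $\lsem \psi \rsem$ is closed---which is in turn what ensures, as noted above, the existence of an earliest time at which an event occurs.
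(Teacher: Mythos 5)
Your proof is correct and follows essentially the same route as the paper: induction on the predicate, reducing the atomic cases to the closedness of preimages of closed sets under continuous (affine) maps, and closing the inductive cases by finite intersections and unions of closed sets. The only cosmetic difference is that you pull back the closed half-line $(-\infty,0]$ along the difference $\lsem t_1\rsem - \lsem t_2\rsem$, whereas the paper pulls back the closed order relation $R_\leq \subseteq \R\times\R$ along the pairing $\langle f,g\rangle$ -- an equivalent formulation.
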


\begin{proof}
  Start with $\lsem {t_1 \leq t_2} \rsem$. We will to show that for every
  family of real numbers $(a_i,b_i)_{i \in n}$ and
  $(c,d) \in \R \times \R$ the set
  \begin{flalign*}
    A = \left \{ (v_1,\dots,v_n) \in \R^n \mid \sum\nolimits_{i
        \in I} a_i v_i + c \leq \sum\nolimits_{i \in I} b_i v_i + d
    \right \}
  \end{flalign*}
  is closed in $\R^n$. Recall that the order relation
  $R_{\leq} \subseteq \R \times \R$ is closed in the Euclidean space
  $\R \times \R$ and consider two
  maps $f,g : \R^n \to \R$ defined by
  \begin{align*}
    f(v_1,\dots,v_n) & = \sum\nolimits_{i \in I} a_i v_i + c,
    \\ g(v_1,\dots,v_n) & = \sum\nolimits_{i \in I} b_i v_i +
    d
  \end{align*}
  Clearly both $f$ and $g$ are continuous since they can be written as
  compositions of multiplication and addition maps. Moreover,
  $A = \langle {f, g} \rangle^{-1}(R_\leq)$ and therefore $A$ is
  closed in $\R^n$.

  An analogous reasoning applies to $\lsem {t_1 \geq t_2} \rsem$ since
  the order relation $R_{\geq} \subseteq \R \times \R$ is also closed.
  Finally, for the cases that involve conjunction and disjunction
  apply the property of closed sets being closed under intersections
  and finite unions.
\end{proof}

\begin{corollary}
  \label{cor:min}
  Let us consider a program
  $(\dot{x}_1 = t_1, \dots, \dot{x}_n = t_n \> \& \> \psi)$, its
  solution $\phi : \R^n \times [0,\infty) \to \R^n$, and a valuation
  $(v_1,\dots,v_n) \in \R^n$. If there exists a time instant
  $t \in [0,\infty)$ such that
  $\phi(v_1,\dots,v_n,t) \in \lsem {\psi} \rsem$ then there exists a
  \emph{smallest} time instant that also satisfies this condition.
\end{corollary}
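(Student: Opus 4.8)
The plan is to reduce the statement to a standard fact about the real line: a nonempty closed subset of $\R$ that is bounded below attains its infimum. First I would fix the valuation $(v_1,\dots,v_n)$ and consider the single-variable map $t \mapsto \phi(v_1,\dots,v_n,t)$ from $[0,\infty)$ to $\R^n$. Because $\phi$ is the solution of a linear system of ordinary differential equations, this map is continuous (indeed smooth), a property already relied upon in the construction of the hybrid interpretation in Section~\ref{sec:KlRep}.

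Next I would introduce the set of witnessing times
\[
  S = \{ t \in [0,\infty) \mid \phi(v_1,\dots,v_n,t) \in \lsem \psi \rsem \}
\]
and observe that $S$ is exactly the preimage of $\lsem \psi \rsem$ under the continuous map $t \mapsto \phi(v_1,\dots,v_n,t)$. By Proposition~\ref{cor:closed} the set $\lsem \psi \rsem$ is closed in $\R^n$, so its preimage $S$ is closed in $[0,\infty)$; and since $[0,\infty)$ is itself closed in $\R$, the set $S$ is closed in $\R$ as well.

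Finally, the hypothesis of the corollary guarantees that $S$ is nonempty, and $S$ is trivially bounded below by $0$. Hence $m = \inf S$ exists and is a nonnegative real. Taking any sequence $t_k \in S$ with $t_k \to m$ and using that $S$ is closed, I would conclude that $m \in S$. Thus $m$ is a time instant satisfying $\phi(v_1,\dots,v_n,m) \in \lsem \psi \rsem$, and by construction no smaller one does, which is precisely the claim.

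I do not expect a serious obstacle: the only point requiring a little care is ensuring that continuity and closedness are read in the subspace $[0,\infty)$ rather than in all of $\R$, but this is harmless because $[0,\infty)$ is closed. The essential input is the closedness of $\lsem \psi \rsem$ established in Proposition~\ref{cor:closed}; without it (for instance if $\psi$ were allowed to contain strict inequalities) the infimum could fail to be attained, which is exactly why the grammar restricts predicates to non-strict inequalities.
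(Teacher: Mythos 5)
Your proof is correct and follows essentially the same route as the paper's: both arguments rest on Proposition~\ref{cor:closed} together with continuity of the solution map, so that the set of witnessing times is closed, and then invoke a standard real-analysis fact to extract the minimum. The only cosmetic difference is that the paper intersects the preimage with $[0,t]$ and uses compactness, whereas you use that a nonempty closed subset of $\R$ bounded below attains its infimum --- the two conclusions are interchangeable here.
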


\begin{proof}
  Using Proposition~\ref{cor:closed} we can easily show that
  the set
  $\phi(v_1,\dots,v_n, -)^{-1} (\lsem{\psi}\rsem) \cap [0,t]$ is
  compact, and consequently that  it has a minimum.
\end{proof}

\noindent
We can now introduce the following event-triggered programming language. Define
the interpretation map
\begin{flalign*}
  \At(X) \to \EK[\HybM](\R^n)
\end{flalign*}
as the one that sends $(x_1 := t_1, \dots, x_n := t_n)$ to the
function $\R^n \to \HybM(\R^n)$ defined by
\begin{flalign*}
  (v_1,\dots,v_n) \mapsto \eta \left (\lsem {t_1} \rsem ,\dots ,\lsem {t_n} \rsem \right)
\end{flalign*}
and $(\dot{x}_1 = t_1, \dots, \dot{x}_n = t_n \> \& \> \psi)$ to the
function $\R^n \to \HybM(\R^n)$ defined by
\begin{flalign*}
  (v_1,\dots,v_n) \mapsto (\phi(v_1,\dots,v_n, -), d)
\end{flalign*}
\noindent
where $d$ is the smallest time instant that intersects
$\lsem{\psi}\rsem$ if
$\phi^{-1}(v_1,\dots,v_n,-) \cap \lsem{\psi} \rsem \not = \emptyset$
(Corollary~\ref{cor:min}) and $0$ otherwise.  The free monoid
extension of this interpretation map induces a programming language
  \[
    \prog[p]=\prog[a\in At(X) \mid skip\mid p\sComp p]
  \]
whose semantics is a Kleisli representation.

\begin{example}[Bouncing ball]
Consider a bouncing ball dropped at a positive height $p$ and with
no initial velocity $v$. Due to the gravitational acceleration $g$,
it falls into the ground and then bounces back up, losing part of
its kinetic energy in the process. Consider the program,
\begin{flalign*}
  (\dot{p} = v, \dot{v} = g \> \& \> p \leq 0 \wedge v \leq 0) \prog[\sComp]  (v := v \times -0.5)
\end{flalign*}
which is here denoted by $\prog[b]$. The composition
\begin{flalign*}
 (v : = 0, p := 5) \prog[\sComp] \prog[b] \prog[\sComp] \prog[b] \prog[\sComp] \prog[b]
\end{flalign*}
encodes the action of dropping the ball at the
height of five meters and letting it bounce exactly three times.
The projection on $p$ of this program yields the plot below.
\begin{center}
  \scalebox{0.65}{
    \begin{tikzpicture}
      \begin{axis}[ title={Evolution of the bouncing ball's
          position}, ymin= 0, ymax = 5, xmin = 0, xmax = 2.53,
        grid = major ]
        \addplot [smooth, domain = 0:2] { 5 - (1/2)*9.8*x^2 };
        \addplot[smooth, shift = {(101.0,0.0)}, domain = 0:2] {
          4.945*x - (1/2)*9.8*x^2 }; \addplot[smooth, shift =
        {(202.0,0.0)}, domain = 0:2] { 2.473*x - (1/2)*9.8*x^2 };
      \end{axis}
    \end{tikzpicture}
  }
\end{center}
\end{example}

\noindent
Our next hybrid programming language is closely related to
H\"{o}fner's algebra of hybrid systems \cite{hofner09}. Recall the
interpretation map $\At(X) \to \EK[\HybM](\R^n)$ for the
event-triggered programming language and compose it with the function
$\EK[\HybM](\R^n) \to \EK[\HybM](\R^n \times 2)$ that sends $f$ to the
map $g$ defined by
\begin{align*}
  g(v,\bot) & = \eta(v,\bot) \\
  g(v,\top) & = (h,d)            
\end{align*}
where $f(x) = (f(x, -), d)$ and $h(t) = (f(x,t),\bot)$ if $t \not = d$
and $h(t) = (f(x,t),\top)$ otherwise. The free monoid extension
of the composition
\begin{flalign*}
 \At(X) \to \EK[\HybM](\R^n \times 2) 
\end{flalign*}
provides another Kleisli representation for the event-triggered programming
language
\[
    \prog[p]=\prog[a\in At(X) \mid skip\mid p\sComp p]
\]
\noindent
Its sequential composition now behaves essential like sequential composition
in \cite{hofner09}: discrete assignments are applied precisely at the end
of an evolution, and evolutions produced by two programs are concatenated.
\begin{example}[Stopwatch]
  Let $t$ be a variable that denotes time and consider the program
  composition
  \begin{flalign*}
    t : = 0 \prog[\sComp] (\dot{t} = 1 \> \& \> t = 5) \prog[\sComp]
    t := 0 \prog[\sComp] (\dot{t} = 1 \> \& \> t = 10)
  \end{flalign*}
  It yields the plot below.
  \begin{center}
  \scalebox{0.65}{
    \begin{tikzpicture}
      \begin{axis}[ title={Evolution of time}, ymin= 0, ymax = 10, xmin = 0, xmax = 15,
        grid = major]
        \addplot [smooth, domain = 0:5] { x };
        \addplot[smooth, domain = 0:15] { x - 5};
      \end{axis}
      \draw[line width=0.8, red,dashed] (2.3,0) -- (2.3,5.6);
      \node[] at (2.9,4) {\red{t : = 0}}; 
    \end{tikzpicture}
  }
\end{center}
\end{example}

\section{Definition of multiset-type monads}

Given a semiring $S$, the functorial part of the \emph{generalised
  multiset monad $\GMset:\Set\to\Set$} is defined by
\[
  \begin{cases}\GMset X= \left \{\phi: X\to S\mid  |\supp(\phi)|<\omega \right \}\\
    \GMset f:\GMset X\to\GMset Y, \hspace{0.2cm} \phi\mapsto \lambda
    y. \sum\limits_{x\in f\inv(\{y\})}\phi(x)\end{cases}
\]
The unit $\eta^{\GMset[]}$ is defined at each $X$ by
$\eta^{\GMset[]}_X(x)(y)=\delta_{x}(y)$, i.e. 1 if $x=y$ and 0
otherwise. The multiplication $\mu^{\GMset[]}$ is defined at each
$X$ by
$\mu^{\GMset[]}_X(\Phi)(x)=\sum_{\phi\in \supp(\Phi)}\Phi(\phi)\cdot
\phi(x)$, where $\cdot$ is the semiring multiplication.

\section{Proofs}

\begin{myProof}{Theorem~\ref{thm:TNfromInitCriterion}}
  For every $\cat$-object $X$ there exists a unique arrow
  $!_X: \emptyset\to X$, and thus any arrow $m: 1\to F\emptyset$ can
  be extended to a morphism $F!_X\circ m: 1\to FX$. The collection of
  all such morphisms forms a natural transformation
  $\alpha(m)$. Conversely, we can define
  $\phi: \left[\cat,\cat\right](\underline{1},F)\to
  \cat(1,F\emptyset), \beta\mapsto\beta_\emptyset$. It is clear that
  $\phi(\alpha(m))=m$, to see that
  $\alpha(\phi(\beta))=\alpha(\beta_\emptyset)=\beta$ use the
  naturality of $\beta$.
\end{myProof}

\begin{myProof}{Theorem~\ref{thm:PlusCoprodHom}}
  In $\Set$ binary products distribute over arbitrary coproducts. It
  follows that
  \begin{align*}
    F\times F=&(\coprod_{i\in I}\hom(X_i,-))\times (\coprod_{j\in I}\hom(X_j,-) )\\
    =& \coprod_{i,j\in I} \hom(X_i,-)\times\hom(X_j,-)= \coprod_{i,j\in I} \hom(X_i+X_j,-)
  \end{align*}
  where the last step follows from the fact that the contravariant
  $\hom$ functor sends colimits to limits. It follows that
  \begin{align*}
    \left[\Set,\Set\right](F\times F,F)&=\left[\Set,\Set\right] (\coprod_{i,j\in I} \hom(X_i+X_j,-),F )\\
                                       &=\prod_{i,j\in I}\left[\Set,\Set\right](\hom(X_i+X_j,-),F) \\
    & \cong \prod_{i,j\in I} F(X_i+X_j)
  \end{align*}
  where the last step is an application of the Yoneda lemma. Given an
  element $s\in\prod_{i,j\in I} F(X_i+X_j)$ with components
  $s_{i,j}\in F(X_i+X_j)$, the associated natural transformation
  $\alpha^s$ is defined at each $Y$ by:
  \begin{align}\label{eq1:thm:PlusCoprodHom}
    (a,b)\in\hom(X_i,Y)\times\hom(X_j,Y)\mapsto
    F\left[a,b\right](s_{ij})
  \end{align}
  where $[a,b]$ is the coproduct map $X_i+X_j\to Y$.
\end{myProof}

\begin{myProof}{Proposition \ref{prop:FinFuncRep}}
  We reason:
  \begin{flalign*}
    F \cong \pfs & \Lan(F_f) \\
    \stackrel{(1)}{\cong} \pfs & \Lan \left ( \colim \El[F_f]\op \stackrel{\Forg_{F_f}\op}{\longrightarrow}
      \omega\op \stackrel{\Yoneda}{\longrightarrow}[\omega, \Set] \right ) \\
    \stackrel{(2)}{\cong} \pfs & \colim \left ( \Lan \> \El[F_f]\op
      \stackrel{\Forg_{F_f}\op}{\longrightarrow}
      \omega\op \stackrel{\Yoneda}{\longrightarrow}[\omega, \Set] \right ) \\
    \stackrel{(3)}{\cong} \pfs & \colim \left ( \El[F_f]\op
      \stackrel{\Forg_{F_f}\op}{\longrightarrow}
      \omega\op \stackrel{\Inc\op}{\longrightarrow} \Set\op
      \stackrel{\Yoneda}{\longrightarrow}[\Set, \Set] \right )
  \end{flalign*}
  where $(1)$ is an application of Theorem~\ref{thm:FuncRep}, $(2)$
  takes advantage of $\Lan : [\omega,\Set] \to [\Set,\Set]$ being a
  left adjoint and $(3)$ uses the property
  $\Lan (\hom(n,-) : \omega \to \Set) \cong \hom(n, -) : \Set \to
  \Set$.
\end{myProof}

\begin{myProof}{Theorem \ref{thm:BinaryTNatFinitary}}
  We simply calculate:
  \begin{align*}
   & \left[\Set,\Set\right](F\times F,F)\\
    \stackrel{(1)} {=}&\left[\Set,\Set\right]\left(\colim \El[F_f\times F_f]\op \stackrel{\Yoneda\circ \Inc\op \circ \Forg\op}{\longrightarrow}\left[\Set,\Set\right],F\right)\\
    \stackrel{(2)}{=}&\lim \left[\Set,\Set\right]\left(\El[F_f\times F_f]\op \stackrel{\Yoneda\circ\Inc\op \circ \Forg\op}{\longrightarrow}\left[\Set,\Set\right],F\right)\\
    \stackrel{(3)}{=}&\lim \left(\El[F_f\times F_f] \stackrel{F_f\circ \Forg}{\longrightarrow}\Set\right)
  \end{align*}
  where $(1)$ is an application of Proposition \ref{prop:FinFuncRep},
  $(2)$ follows from the fact that the contravariant $\hom$ functor
  sends colimits to limits, and $(3)$ is an instance of the fact that
  for every functor $G : \catC \to \Set$ there exists an isomorphism
  \begin{flalign*}
    [\Set, \Set] \left (\Yoneda G\op, F \right ) \cong F G
  \end{flalign*}
  which follows from the Yoneda lemma.
\end{myProof}

\begin{myProof}{Theorem \ref{thm:GMsetTNat}}
For clarity we show the result for $n=2$ but the same argument holds
  for any finite $n$.  Let $\alpha: \GMset[S]\times\GMset[S]\to \GMset[S]$. 
  Since we only consider weight functions with finite support the functor $\GMset[S]$ is finitary  and therefore Theorem \ref{thm:BinaryTNatFinitary} applies. Since we're only considering multisets on the sets $n\in \omega$, we will write a multiset on $n$ simply as an $n$-tuple of elements of $S$. Note that when $\GMset[S]$ is applied to a map it creates
  a map between multisets which preserves the total mass of multisets. It follows that 
  $\mathbf{El}(\GMset[S]\times \GMset[S])$ has $S^2$ orbits since any pair $ \left((r_1,\ldots,r_n),(s_1,\ldots, s_n)\right)\in \GMset[S](n)\times \GMset[S](n)$ of multisets gets mapped to the pair  $(\sum_i^n r_i,\sum_i^n s_i)$ of their total mass under the map $!: n\to 1$.

  
  Let us now compute the limit \eqref{eq:TNatProdLim}. By definition, the orbits of $\mathbf{El}(\GMset[S]\times \GMset[S])$ cannot communicate with one another via morphisms, thus to compute the limit \eqref{eq:TNatProdLim} it is enough to characterize all the possible `threads' in the image under $\GMset[S]\circ\Forg$ of each orbit. Each of these threads correspond to a possible definition of the natural transformation $\alpha$ on the particular orbit in question. In particular, to specify $\alpha$ completely one need to describe what it does on each orbit. To do this we choose a pair of arbitrary multisets over $n$
  \begin{equation}\label{eq:mSetPair}
  \left((r_1,\ldots,r_n),(s_1,\ldots, s_n)\right).
  \end{equation}
  This choice fixes the orbit: we are now working in the orbit indexed by $(\sum_i^n r_i, \sum_i^n
  s_i)$ (and no morphism can make us jump to another orbit). It follows from the common integer divisor property that we can find a common integer divisor $q\in S$ for the finite sequence $(r_1,\ldots,r_n,s_1,\ldots,s_n)$ and a sequence $(l_1,\ldots,l_n,m_1,\ldots,m_n)$ with $r_i=l_i q, s_i=m_i q, 1\leq i\leq n$. Now let $M=\sum_i^n l_i$ and $N=\sum_i^n m_i$, assume w.l.o.g. that $M\leq N$ (else reverse the roles of $M$ and $N$ in what follows). We can write the pair of multisets \eqref{eq:mSetPair} as the image of the pair of multisets on $M+N$
  \begin{equation}\label{eq:mSetPairII}
  (\mu_1,\mu_2)=\left((\overbrace{q,\ldots,q}^M ,\overbrace{0,\ldots,0}^N),(\overbrace{0,\ldots,0}^M,\overbrace{q,\ldots,q}^N)\right)
  \end{equation}
  under the map $f: M+N\to n$ given by
  \[
  i\mapsto j \text{ if }\begin{cases} 1+\sum_{k=1}^{j-1} l_k\leq i\leq \sum_{k=1}^{j} l_k,\text{ or,}\\
  M+1+\sum_{k=1}^{j-1} m_k\leq i\leq M+\sum_{k=1}^{j} l_k
  \end{cases}
  \]
  The pair of multisets \eqref{eq:mSetPairII} is invariant under all permutations on $M+N $ of the shape $(\pi)(\rho)$ with $\pi\in\mathrm{Perm}(M)$ and $\rho\in\mathrm{Perm}(N)$. It follows by naturality that the image of this pair of multisets under $\alpha_{M+N}$ must also have this invariance property, and thus be of the shape
  \[
  (\overbrace{s,\ldots,s}^M,\overbrace{t,\ldots,t}^N)
  \]
  for some $s,t\in S$. Since $q$ is invertible we can write $s=q s' $ and $t=q t'$ for some $s',t'\in S$. By naturality we now have
  \begin{align*}
  &\alpha_n((r_1,\ldots,r_n),(s_1,\ldots, s_n))\\
  &=\alpha_n(f\times f(\mu_1,\mu_2))\\
  &= f(\alpha_{M+N}(\mu_1,\mu_2))\\
  &= f (q s',\ldots,q s', q t',\ldots,q t')\\
  &= (l_1 q s'+ m_1 q t',\ldots, l_n q s'+ m_n q t')\\
  &= (r_1 s' + s_1 t', \ldots, r_n s' + s_n t')
  \end{align*}
In other words, the image by $\alpha_n$ is necessarily a weighted sum of the multisets. Thus $(s',t')$ specifies an element in a thread of the limit \eqref{eq:TNatProdLim} associated the orbit indexed by $(\sum_i^n r_i, \sum_i^n s_i)$.

We now show that $(s',t')$ in fact specifies an entire thread uniquely by showing that, once chosen, this weighting must constant across the entire orbit. This is easily done: take any other pair of multisets in the same orbit $((r_1',\ldots,r_p'),(s_1',\ldots, s_p'))$, it follows from the common integer divisor property that we can find a common integer divisor $q'$ for all the elements in $((r_1',\ldots,r_p'),(s_1',\ldots, s_p'))$ \emph{and in} $((r_1,\ldots,r_n),(s_1,\ldots, s_n))$. By expressing all elements in terms of $q'$ we can apply the same trick as above and find that $\alpha_n$ and $\alpha_p$ will indeed take sums weighted by the same weight $(s',t')$.

We have thus shown that for a each choice $(\sum_i^n r_i,\sum_i^n s_i)$ of orbit, the corresponding elements of the limit \eqref{eq:TNatProdLim} are given by a weighting scheme $(s,t)\in S^2$. It follows that the limit \eqref{eq:TNatProdLim}, or equivalently the possible specifications of $\alpha$ across every orbit, is given by the set of maps $S^2\to S^2$ mapping a pair of total weights to a weighting scheme.

In other words, a map $\phi\colon S^2\to S^2$ defines the natural
transformation which sends a tuple of multisets with total masses
$(m_1,\ldots,m_n)$ to their sum weighted by the pair
$\phi(m_1,\ldots,m_n)$.
\end{myProof}

\begin{myProof}{Theorem \ref{thm:PlusGiry}}
 It follows from \eqref{eq:MachineGiry} that we need only consider the set 
 \[
 \left[\Pol_f,\Pol_f\right](\Giry_f\times\Giry_f,
  \Giry_f).
  \]
  It is not hard to see that $\Giry_f n$ is homeomorphic to the $n-1$-dimensional simplex with the usual topology and that the set of rational probabilities $\Giry_f^r n$ on $n$ forms a dense subset of the $n$-dimensional simplex. It is trivial to adapt the proof of Theorem \ref{thm:TNatRatDist} to show that the natural transformations $\Giry_f^r \times\Giry_f^r \to\Giry_f $ are given by the convex combinations $+^\lambda$, where $\lambda$ can now range over any values in $\unit$. It now follows from the fact that objects in $\Pol$ are complete that these transformations extend by continuity (since the $(n-1)$-dimensional simplex is compact) to give us all the natural transformations $\Giry_f\times\Giry_f\to\Giry_f$, and thus all the natural transformation $\Giry\times\Giry\to\Giry$.
\end{myProof}

\begin{myProof}{Theorem \ref{thm:plusSubDist}}
  It is not hard to check that the functor $\Giry\Maybe$ satisfies both the domain and codomain conditions of \cite{2016:Machine}. It follows that
\begin{align*}
  &\left[\Pol,\Pol\right](\Giry\Maybe\times \Giry\Maybe,\Giry\Maybe)\\
  &\cong \left[\Pol_f,\Pol_f\right]((\Giry\Maybe)_f\times (\Giry\Maybe)_f,(\Giry\Maybe)_f)
\end{align*}
  The space $(\Giry\Maybe)_f n=\Giry_f (n+1)$ is homeomorphic to the $n$-dimensional simplex with the usual topology and by completeness of Polish spaces it is enough to restrict ourselves to what happens to the dense subset of $\Giry^r_f
  (n+1)$ consisting of rational probabilities. The orbits of the category $\El[\Giry^r_f\Maybe \times\Giry_f^r\Maybe]$ are described by pairs of rationals $(r_1,r_2)\in\unit^2$ (describing the weight assigned to the `+1' component, or equivalently picking an $n-1$ hyperplane in each $n$-dimensional simplex). For a fixed orbit labelled by $(r_1,r_2)$ we can use the proof of Theorem \ref{thm:GMsetTNat} to show that the only possible natural assignments $\alpha_n:\Giry^r_f n+1\times \Giry^r_f n+1\to \Giry_f n+1$ are weighted sums of sub-distributions on $n$, i.e. weighted sums given by pairs $(q_1,q_2)\in\unit^2$ such that $q_1+q_2\leq 1$. However, each $\alpha_n$ must be a continuous map, and thus vary continuously across the orbits, and the conclusion follows.
  \end{myProof}
  
  \begin{myProof}{Theorem \ref{thm:PlusPow}}
  Consider any regular cardinal $\kappa>\lambda$, and the
  $\kappa$-accessible version $\Pow_\kappa$ of $\Pow$ (taking
  powersets of cardinality less that $\kappa$, see
  \cite{1994:AdamekLocally}). Theorem \ref{thm:BinaryTNatFinitary}
  generalises completely straightforwardly to $\kappa$-accessible
  functors, and
  $\left[\Set,\Set\right]((\Pow_\kappa)^\lambda, \Pow_\kappa)$ is thus
  given by the limit (\ref{eq:TNatProdLim}) (with the inclusion
  functor $\Inc$ suitably modified). To compute this limit, consider a
  set $X$ and a collection $(U_i)_{i\in \lambda}$ of subsets of
  $X$. It is easy to see, by considering what happens at the singleton
  1, that $\El[(\Pow_\kappa)^\lambda]$ has $2^\lambda$-orbits; one for
  each element of $(\Pow_\kappa 1)^\lambda$. The object
  $(X, (U_i)_{i\in \lambda})$ in $\El[(\Pow_\kappa)^\lambda]$ belongs
  to the orbit indexed by $(1,(!_X[U_i])_{i\in \lambda})$ where
  $!_X: X\to 1$, i.e. to the orbit determined by a subset of indices
  $J\in 2^\lambda$.

  We will show that any thread in the limit (\ref{eq:TNatProdLim})
  must pick an element $\bigcup_{k\in K} U_k$ in the copy of
  $\Pow_\kappa(X)$ corresponding to $(X, (U_i)_{i\in \lambda})$, for
  some subset of indices $K\in 2^\lambda$ such that $i\notin K$
  whenever $U_i=\emptyset$, and that this choice must be made
  consistently across the orbit. This will prove that a natural
  transformation $(\Pow_\kappa)^\lambda\to \Pow_\kappa$ is entirely
  determined by non-increasing maps $2^\lambda\to 2^\lambda$ (mapping $J$ to $K$).
  To prove the claim consider the object
  \begin{equation}\label{eq:objectPowkappa}
    \left(\biguplus_i U_i\times \lambda, \left(\biguplus_i U_i\times \{\epsilon_i\}\right)_{i\in \lambda}\right) \text{ in }\El[(\Pow_\kappa)^\lambda]
  \end{equation}
  where $\epsilon_i=i$ if $U_i\neq\emptyset$ and $\emptyset$ else.  It
  belongs to the same orbit as $(X, (U_i)_{i\in \lambda})$ as it is
  connected to it by the map $f: \biguplus_i U_i\times \lambda\to X$,
  defined by
  \[
    (u,i)\mapsto\begin{cases} u&\text{if }u\in U_i\\
      \text{any }u_i^0\in U_i& \text{if }	 u\notin U_i, U_i\neq\emptyset\\
      \text{anything }&\text{else}
    \end{cases}
  \]
  Note now that the object (\ref{eq:objectPowkappa}) is invariant
  under all endomorphisms which keep the $\lambda$-component
  constant. This means that for any thread in the limit
  (\ref{eq:TNatProdLim}), the component corresponding to the object
  (\ref{eq:objectPowkappa}) must contain
  $\biguplus_i U_i\times \epsilon_i$, i.e. that it will be a union
  \[
    \bigcup_{k\in K}\left(\biguplus_i U_i\times \epsilon_k\right)
  \]
  over some $K\subseteq \lambda$. Moreover, since
  $\epsilon_k=\emptyset$ when $U_k=\emptyset$ we do indeed have that
  $J\geq K$ (for the obvious order on $2^\lambda$). By pushing this
  component of the thread to the $(X, (U_i)_{i\in \lambda})$-component
  with $f$, we indeed get $\bigcup_{k\in K} U_k$ as claimed. It
  remains to check that the choice of $K\subseteq \lambda$ must be
  made consistently across the orbit. For this consider another object
  $(X, (V_i)_{i\in \lambda})$ in the same orbit (i.e. $V_i=\emptyset$
  iff $U_i=\emptyset$). We can always build the object
  \[
    (X^2,(U_i\times V_i)_{i\in \lambda})\text{ in
    }\El[(\Pow_\kappa)^\lambda]
  \]
  which gets mapped to $(X,(U_i)_{i\in \lambda})$ and
  $(X,(V_i)_{i\in I})$ by the projections maps
  $\pi_1,\pi_2: X^2\to X$. If the $(X,(U_i)_{i\in \lambda})$-component
  of a thread in the limit (\ref{eq:TNatProdLim}) is
  $\bigcup_{k\in K} U_k$, then the
  $(X^2,(U_i\times V_i)_{i\in \lambda})$-component of the thread must
  clearly be $\bigcup_{k\in K} U_k\times V_k$, and so the
  $(X,(V_i)_{i\in I})$-component must be $\bigcup_{k\in K} V_k$.

  Finally we need to show that our result for holds for the full
  powerset monad $\Pow$. Suppose for the sake of contradiction that
  $\alpha:(\Pow)^\lambda\to\Pow$ is not one of the transformations
  described above, then this must be witnessed at a set $X$,
  i.e. there must exist $(U_i)_{i\in \lambda}\in(\Pow X)^\lambda$ such
  that $\alpha_X((U_i)_{i\in \lambda})$ is not given by one of the
  transformations above. But since we can always find a regular
  cardinal such that $\Pow_\kappa X=\Pow X$, this would define a
  natural transformation
  $(\Pow_\kappa)^\lambda\times\Pow_\kappa \to\Pow_\kappa$ which is not
  of the form described above, a contradiction.
\end{myProof}

\noindent
Thus, intuitively, a non-increasing map $\phi : 2^n \to 2^n$ induces the natural
transformation $\Pow^n \to \Pow$ that given an $n$-tuple of subsets
$(X_1,\dots,X_n)$ with total masses $(m_1,\dots,m_n)$ it returns
$\cup \{ X_i \mid \pi_i \comp \phi(m_1,\dots,m_n) = 1 \}$.

\begin{myProof}{Proposition~\ref{prop:commutativity}}
  If a natural transformation $\alpha^s$ is commutative then for
  every $a : X_i \to X$, $b : X_j \to X$ the equation,
  \begin{flalign*}
    \alpha^s(a,b) = [a,b] \comp s_{ij} = \alpha^s(b,a) = [b,a] \comp
    s_{ji}
  \end{flalign*}
  must hold. In particular, the equation,
  \begin{flalign}
    \label{eq:comm}
    [i_2, i_1] \comp s_{ij} = [i_1,i_2] \comp s_{ji} = s_{ji}
  \end{flalign}
  holds for the injections $i_2 : X_i \to X_j + X_i$,
  $i_1 : X_j \to X_j + X_i$.

  Let us now assume that Equation~(\ref{eq:comm}) holds. The goal is
  to show that for every elements $a : X_i \to X$, $b : X_j \to X$ the
  equation $[a,b] \comp s_{ij} = [b,a] \comp s_{ji}$ holds. Thus reason,
  \begin{flalign*}
    [b,a] \comp s_{ji} = [b,a] \comp [i_2, i_1] \comp s_{ij} =
    [a,b] \comp s_{ij}
  \end{flalign*}
\end{myProof}

\begin{myProof}{Proposition~\ref{prop:idempotence}}
  If a transformation $\alpha^s$ is idempotent then for every
  $i \in I$, $a : X_i \to X$, the equation
  $\alpha^s(a,a) = [a,a] \comp s_{ii} = a : X_i \to X$ holds. In
  particular, we have
  \begin{flalign}
    \label{eq:idemp}
    \triangledown \comp s_{ii} =[\id,\id] \comp s_{ii} = \id : X_i \to
    X_i
  \end{flalign}
  Now assume that Equation~(\ref{eq:idemp}) holds and reason,
  \begin{flalign*}
    a = a \comp \id = a \comp [\id,\id] \comp s_{ii} = [a,a] \comp
    s_{ii}
  \end{flalign*}
\end{myProof}

\begin{proposition}\label{prop:units2}
  A natural transformation of the type $\const{1} \to F = \coprod_{i\in I}\hom(X_i,-)$
  factorises through some inclusion
  \begin{flalign*}
   \hom(X_k, -) \to F 
  \end{flalign*}
  with $k \in I$ and $X_k = \emptyset$.
\end{proposition}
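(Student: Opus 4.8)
The plan is to reduce everything to an evaluation at the initial object via Theorem~\ref{thm:TNfromInitCriterion}, and then inspect which summands of $F\emptyset$ can possibly be inhabited. Since $\Set$ has an initial object $\emptyset$ and satisfies $\Set(1,-)\cong\Id$, the theorem supplies the bijection
\[
[\Set,\Set](\const{1},F)\cong F\emptyset,
\]
so every natural transformation $\beta:\const{1}\to F$ is completely determined by the single element $\beta_\emptyset(\ast)\in F\emptyset$.

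First I would compute this right-hand side. Since $F=\coprod_{i\in I}\hom(X_i,-)$ and coproducts in $\Set$ are formed summand-wise, we have $F\emptyset=\coprod_{i\in I}\hom(X_i,\emptyset)$. In $\Set$ the hom-set $\hom(X_i,\emptyset)$ is empty whenever $X_i\neq\emptyset$, and is the one-element set $\{\id_\emptyset\}$ whenever $X_i=\emptyset$. Hence $F\emptyset$ is in bijection with the set of indices $k\in I$ for which $X_k=\emptyset$, and in particular $\beta_\emptyset(\ast)$ is forced to lie in a summand $\hom(X_k,\emptyset)$ with $X_k=\emptyset$, where it can only be the empty map $\id_\emptyset$.

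It then remains to unravel the bijection of Theorem~\ref{thm:TNfromInitCriterion} to see that the transformation built from this element factors through the $k$-th inclusion. Following the construction in that proof, the component at an arbitrary set $Y$ is $F(!_Y)\comp\beta_\emptyset$, where $!_Y:\emptyset\to Y$ is the unique arrow. Because $\beta_\emptyset(\ast)$ sits in the $k$-th summand, applying $F(!_Y)$ (which acts on that summand by post-composition with $!_Y$) produces the element $!_Y\in\hom(\emptyset,Y)=\hom(X_k,Y)$, again in the $k$-th summand. Thus $\beta_Y$ lands in the image of $\hom(X_k,-)\hookrightarrow F$ for every $Y$, which is exactly the claimed factorisation. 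The only point requiring care---the main obstacle, such as it is---is the bookkeeping of the bijection so as to pin down the particular index $k$ and to verify that the naturality squares keep every component inside that single summand; once $X_k=\emptyset$ is identified as the source of the canonical map $!_Y$, the factorisation through $\hom(X_k,-)$ is immediate.
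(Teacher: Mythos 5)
Your proof is correct and follows essentially the same route as the paper's: inspect the component at $\emptyset$ to force $\beta_\emptyset(\ast)$ into a summand $\hom(X_k,\emptyset)$ with $X_k=\emptyset$, then propagate along the unique map $\emptyset\to Y$ to keep every component in that summand. The only cosmetic difference is that you package the naturality step via the bijection of Theorem~\ref{thm:TNfromInitCriterion}, whereas the paper applies the same naturality square directly.
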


\begin{proof}
  A natural transformation of the type
  $u : \const{1} \to F \cong \coprod_{i \in I} \hom(X_i, -)$ has in
  particular a map
  \begin{flalign*}
   u_\emptyset : 1 \to \coprod_{i \in I} \hom(X_i, \emptyset) 
  \end{flalign*}
  This entails the existence of an element $k \in I$ such that
  $X_k = \emptyset$ and
  $u_\emptyset(\ast) \in i_k(\hom(X_k, \emptyset))$. Using naturality
  and the map $a_X : \emptyset \to X$, it follows that for every set
  $X$, $u_X(\ast)$ lives in the $k$-th summand of
  $\coprod_{i \in I} \hom(X_i, X)$.
\end{proof}

\begin{myProof}{Proposition~\ref{prop:units}}
  If a natural transformation $\alpha^s : F \times F \to F$ has
  $u : \const{1} \to F$ as a unit then for every $a : X_i \to X$ the
  equation below holds
  \begin{flalign*}
    a = \alpha^s(a,x) = [a,x] \comp s_{ik} = \alpha^s(x,a) = [x,a] \comp s_{ki}
  \end{flalign*}
  where $x = u_X(\ast) : X_k = \emptyset \to X$. In particular, one has
  \begin{flalign*}
    \id = [\id,x] \comp s_{ik} = [x, \id] \comp s_{ki}
  \end{flalign*}
  Since $m = [\id,x] : X_i + \emptyset \to X_i$ and
  $n = [x, \id] : \emptyset + X_i \to X_i$ we obtain,
  \begin{flalign}
  \label{eq:units}
   \id = m \comp s_{ik} = n \comp s_{ki} 
  \end{flalign}
  Now assume that Equation~(\ref{eq:units}) holds and reason,
  \begin{flalign*}
    a = a \comp \id = a \comp m \comp s_{ik} = a \comp [\id,x] \comp s_{ik} =
    [a, x] \comp s_{ik}
  \end{flalign*}
  Using an analogous reasoning we obtain $a = [x,a] \comp s_{ki}$.
\end{myProof}

 \begin{myProof}{Proposition \ref{prop:absorption}}
   It follows by the naturality of $0: \underline{1}\to T$ applied to
   the map $\eta_X: X\to TX$ that $T\eta_X\circ 0_X = 0_{TX}$, and thus
   \begin{align}\label{eq:muSends0To0}
     \mu_X\circ 0_{TX}=\mu_X\circ T\eta_X\circ 0_X=0_X
   \end{align}
   We then have for every Kleisli representation $\lsem - \rsem$
   \begin{align*}
    \lsem 0 \sComp a \rsem &:=\mu_X\circ Ta\circ 0_X\circ !_X \\
    & =\mu_X\circ 0_{TX}\circ !_{X}&\text{Naturality of }0\\
    & =0_X\circ !_X &\text{ By Eq. }(\ref{eq:muSends0To0})\\
    & :=\lsem 0 \rsem
   \end{align*}
\end{myProof}

\begin{myProof}{Theorem \ref{theo_dist}}
  Assume that the axiom $p \prog[\sComp] 0 = 0$ does not hold and recall the
  bijections below.
 \begin{flalign*}
   [\Set, \Set](\const{1}, T) \cong [\Set,\Set](\hom(\emptyset, -), T)
   \cong T \emptyset
  \end{flalign*}
  We will show that $T \emptyset \not \cong 1$.
  By assumption, there exists an element $x \in X$ such that
  \begin{flalign*}
    \mu_X \comp T (0_X \comp !_X) \comp p (x) \not = 0_X \comp !_X (x).
  \end{flalign*}
  Hence, we have a natural transformation $\alpha : \const{1} \to T$ whose
  components are defined as
  \[
    \xymatrix@C=30pt{
      1 \ar[r]^{x} \ar@/_1.3pc/[rrrr]_{\alpha_X}
      & X \ar[r]^{p} & T X \ar[r]^(0.45){T (0_X \comp !_X)}
      & TT X \ar[r]^{\mu_X} & T X
    }
  \]
  Since $\alpha_X(\ast) \not = 0_X (\ast)$ the conditions
  $T \emptyset \not \cong 1$ holds.
\end{myProof}

\begin{myProof}{Theorem \ref{thm:PlotkinOpenProb}}
  We proceed by contradiction. Assume that there exists a unit
  transformation $\eta:\Id\to \Pow \Dist$ and a multiplication
  transformation $\mu:(\Pow \Dist)^2\to\Pow \Dist$ for which
  $\Pow \Dist$ is a monad. By definition of a monad we must have that
  at any $X$
  \begin{equation}\label{eq:PowfDist:rule1}
    \mu_X\circ \eta_{\Pow
      \Dist X}=\id_{\Pow
      \Dist X}
  \end{equation}
  and similarly,
  \begin{equation}\label{eq:PowfDist:rule2}
    \mu_X\circ \Pow
    \Dist\eta_X=\id_{\Pow\Dist X}.
  \end{equation}
  This set of equations gives us the action of $\mu_X$ on very
  specific inputs, and we will see that it is enough to generate a
  contradiction.

  As shown in Lemma \ref{lem:unitPowfDist}, $\eta$ is either the
  constant natural transformation to $\emptyset$ or is defined at
  $x\in X$ by $\eta_X(x)=\{\delta_x\}$. Assume first that $\eta$ is
  the constant natural transformation to $\emptyset$, and let $X$ be
  any non-empty set. Since the cardinality of $\Pow\Dist X$ is then at
  least 2, it is clear that there cannot exist a function
  $\mu_X: \Pow\Dist\Pow\Dist X\to\Pow\Dist X$ such that
  Eq. \ref{eq:PowfDist:rule1} holds for any $U\in \Pow\Dist X$
  :
  \[
    \mu_X\circ \eta_{\Pow\Dist X}(U)=\mu_X(\emptyset)=U.
  \]
  We therefore immediately get a contradiction if we assume that
  $\eta$ is the constant natural transformation to $\emptyset$.

  Next we assume that $\eta_X(x)=\{\delta_x\}$ and follow an argument
  due to Plotkin. Consider the sets $X=\{a,b,c,d\}$ and $Y=\{a,b\}$,
  the map $f: X\to Y$ defined by $f(a)=f(c)=a, f(b)=f(d)=b$ and the
  element
  $\{\frac{1}{2}\delta_{\{\delta_a,\delta_b\}}+\frac{1}{2}\delta_{\{\delta_c,\delta_d\}}\}\in
  \Pow \Dist\Pow \Dist X$. It is straightforward to compute that
  \begin{align*}
    \Pow
    \Dist\Pow
    \Dist f\left(\left\{\frac{1}{2}\delta_{\{\delta_a,\delta_b\}}+\frac{1}{2}\delta_{\{\delta_c,\delta_d\}}\right\}\right)& =\{\delta_{\{\delta_a,\delta_b\}}\}\\
                                                                                                                          &=\eta_{\Pow
                                                                                                                            \Dist Y}(\{\delta_a,\delta_b\})
  \end{align*}
  It now follows by naturality and Eq. \eqref{eq:PowfDist:rule1} that
  \[
    \xymatrix@C=4ex {
      \{\frac{1}{2}\delta_{\{\delta_a,\delta_b\}}+\frac{1}{2}\delta_{\{\delta_c,\delta_d\}}\}\ar@{|->}[r]\ar@{|->}[d]  & \mu_X(\{\frac{1}{2}\delta_{\{\delta_a,\delta_b\}}+\frac{1}{2}\delta_{\{\delta_c,\delta_d\}}\})\ar@{|->}[d]\\
      \{\delta_{\{\delta_a,\delta_b\}}\}\ar@{|->}[r] &
      \mu_Y(\eta_{\Pow \Dist
        Y}(\{\delta_a,\delta_b\}))=\{\delta_a,\delta_b\} }
  \]
  It follows that any distributions in
  $\mu_X(\{\frac{1}{2}\delta_{\{\delta_a,\delta_b\}}+\frac{1}{2}\delta_{\{\delta_c,\delta_d\}}\})$
  must belong to the preimage of $\{\delta_a,\delta_b\}$ under
  $\Pow \Dist f$, i.e. to
  \[
    \{p\delta_a+(1-p)\delta_c\mid p\in
    \unit\}\cup\{p\delta_b+(1-p)\delta_d\mid p\in \unit\}
  \]
  By considering the map $g: X\to Y$ defined by
  $g(a)=g(d)=a, g(b)=g(c)=b$, we also get
  \begin{align*}
    \Pow
    \Dist\Pow
    \Dist g\left(\left\{\frac{1}{2}\delta_{\{\delta_a,\delta_b\}}+\frac{1}{2}\delta_{\{\delta_c,\delta_d\}}\right\}\right)& =\{\delta_{\{\delta_a,\delta_b\}}\}\\
                                                                                                                          &=\eta_{\Pow
                                                                                                                            \Dist Y}(\{\delta_a,\delta_b\})
  \end{align*}
  and thus by naturality and Eq. (\ref{eq:PowfDist:rule1}), any
  distribution in
  \begin{flalign*}
   \mu_X(\{\frac{1}{2}\delta_{\{\delta_a,\delta_b\}}+\frac{1}{2}\delta_{\{\delta_c,\delta_d\}}\} 
  \end{flalign*}
  must also belong to the preimage of $\{\delta_a,\delta_b\}$ under
  $\Pow \Dist g$, i.e. to
  \[
    \{p\delta_a+(1-p)\delta_d\mid p\in
    \unit\}\cup\{p\delta_b+(1-p)\delta_c\mid p\in \unit\}
  \]
  It follows that
  $\mu_X(\{\frac{1}{2}\delta_{\{\delta_a,\delta_b\}}+\frac{1}{2}\delta_{\{\delta_c,\delta_d\}}\})$
  contains at most the elements
  \[
    \{\delta_a,\delta_b,\delta_c,\delta_d\}
  \]
  Now consider the map $h: X\to Z:=\{a,c\}$ defined by
  $h(a)=h(b)=a, h(c)=h(d)=c$. By definition we have
  \begin{align*}
    \Pow
    \Dist\Pow
    \Dist h\left(\left\{\frac{1}{2}\delta_{\{\delta_a,\delta_b\}}+\frac{1}{2}\delta_{\{\delta_c,\delta_d\}}\right\}\right)& \hspace{-2pt}=\hspace{-2pt}\left\{\frac{1}{2}\delta_{\{\delta_a\}}+\frac{1}{2}\delta_{\{\delta_c\}}\right\}\\
                                                                                                                          &\hspace{-2pt}=\hspace{-2pt}\Pow
                                                                                                                            \Dist\eta_Y(\{\frac{1}{2}\delta_a+\frac{1}{2}\delta_c\})
  \end{align*}
  It now follows by naturality and Eq. \eqref{eq:PowfDist:rule2} that
  \[
    \xymatrix@C=4ex {
      \{\frac{1}{2}\delta_{\{\delta_a,\delta_b\}}+\frac{1}{2}\delta_{\{\delta_c,\delta_d\}}\}\ar@{|->}[r]\ar@{|->}[d]  & \mu_X(\{\frac{1}{2}\delta_{\{\delta_a,\delta_b\}}+\frac{1}{2}\delta_{\{\delta_c,\delta_d\}}\})\ar@{|->}[d]\\
      \left\{\frac{1}{2}\delta_{\{\delta_a\}}+\frac{1}{2}\delta_{\{\delta_c\}}\right\}\ar@{|->}[r]
      &
      *\txt{$\mu_Y(\Pow
        \Dist\eta_Y(\{\frac{1}{2}\delta_a+\frac{1}{2}\delta_c\}))=$ \\
        $\{\frac{1}{2}\delta_a+\frac{1}{2}\delta_c\}$} }
  \]
  and we immediately get a contradiction since
  \[
  \Pow \Dist
  h(\{\delta_a,\delta_b,\delta_c,\delta_d\})=\{\delta_a,\delta_c\}
  \]
  and
  $\{\frac{1}{2}\delta_a+\frac{1}{2}\delta_c\}\notin
  \{\delta_a,\delta_c\}$.
\end{myProof}

\begin{myProof}{Theorem \ref{theo_QE}}
  It is straightforward to prove that
  $\delta : \HybM \NPow \to \NPow \HybM$ is a natural transformation.
  So we will now show that the natural transformation
  $\delta : \HybM \NPow \to \NPow \HybM$ makes the following diagram
  commute.
  \[
    \xymatrix{
      & \HybM \ar[dl]_{\HybM \eta^\NPow} \ar[dr]^{\eta^\NPow_\HybM} & \\
      \HybM \NPow \ar[rr]_{\delta} & & \NPow \HybM \\
      & \NPow \ar[ul]^{\eta^\HybM_\NPow} \ar[ur]_{\NPow \eta^\HybM} & }
  \]
  Start with the upper triangle.
  \begin{align*}
    & \delta_X \comp \HybM \eta_X^\NPow    \\
    = \pfs & \delta_X \comp ((\eta^\NPow \comp) \times \id) \\
    = \pfs & \eta^\NPow_{\HybM X}
  \end{align*}
  \noindent
  We then proceed with the lower one.
  \begin{align*}
    \pfs & \delta_X \comp \eta^\HybM_{\NPow X}  \\
    = \pfs & \left \{  \eta_X (a) \in \HybM X \mid a \in -
    \right \}  \\
    = \pfs & \NPow \eta^\HybM_X
  \end{align*}
  Finally, we will show that the natural transformation
  $\delta : \HybM \NPow \to \NPow \HybM$ makes the following diagram
  commute.
    \[
      \xymatrix{ \HybM \NPow \NPow \ar[r]^{\delta_\NPow} \ar[d]_{\HybM
          \mu} & \NPow \HybM \NPow \ar[r]^{\NPow \delta}
        & \NPow \NPow \HybM \ar[d]^{\mu_\NPow} \\
        \HybM \NPow \ar[rr]_{\delta} & & \NPow \HybM \\
        \HybM \HybM \NPow \ar[r]_{\HybM \delta} \ar[u]^{\mu_\NPow} &
        \HybM \NPow \HybM \ar[r]_{\delta_\HybM} & \NPow \HybM \HybM
        \ar[u]_{\NPow \mu} }
    \]
      Start with the upper square. Consider an element
  $(f,d) \in \HybM \NPow \NPow X$. A straightforward calculation
  provides the following equations.
  \begin{flalign*}
    \delta_X \comp \HybM \mu_X (f,d) 
    & = \left \{ \; (g,d) \in \HybM X
      \mid g \in \cup \comp f  \; \right \}   \\
    \mu_{\NPow X} \comp \NPow \delta_X \comp \delta_{\NPow X} (f,d) 
    & =  \bigcup \left \{ \; \delta_X \left (h,d \right ) \mid
      (h,d) \in \HybM \NPow X \wedge h \in f \; \right \} 
  \end{flalign*}
  \noindent
  We will show that both sets are indeed the same.  For this, start
  with an element $(g,d) \in \HybM X$, and reason in the following
  manner.
  \begin{align*}
    \pfs & g \in \cup \comp f  \\
    \Leftrightarrow \pfs & \forall t \in [0,\infty) . \> g \left (t \right ) \in \cup \comp f \\
    \Leftrightarrow \pfs & \forall t \in [0,\infty) . \> \exists Z_t \in f
    \left
      (t \right ) . \> g \left (t \right ) \in Z_t  \\
    \Leftrightarrow \pfs & \exists (h,d) \in \HybM \NPow X . \>
    g \in h \wedge h \in f  \\
    \Leftrightarrow \pfs & \exists (h,d) \in \HybM \NPow X . \> (g,d)
    \in \delta_X \left (h,d \right ) \wedge h \in f
  \end{align*}
  \noindent
  In order to keep the notation unburduned, and whenever no
  ambiguities arise, we will often use a pair $(f,d) \in \HybM X$ as
  if it were simply the map $f \in X^{[0,\infty)}$ which constantly
  outputs $f(d)$ after $d$ is achieved.

  Let us now concentrate on the lower square. Consider an element
  $(f,d) \in \HybM \HybM \NPow X$, and let $e = \pi_2(f (d))$. Then,
  a straightforward calculation shows that the following equations
  hold.
  \begin{flalign*}
    &\delta_X \comp \mu_{\NPow X} \left (f ,d \right )   \\
    & =  \left \{ \; (g,d + e) \in \HybM X \mid g \in
      (\theta_{\NPow X} \comp f, d)
      \conc (f (d)) \; \right \} \\
    & \NPow \mu_X \comp \delta_{\HybM X} \comp \HybM
    \delta_X \left (f , d \right )  \\
    & = \left \{ \; (\theta_X \comp g, d) \conc (g (d)) \mid
      (g,d) \in \HybM \HybM X \wedge g \in \delta_X \comp f \;
    \right \}
  \end{flalign*}
  \noindent
  We will show that both sets are actually the same.  Start with an
  element $(h,d + e) \in \HybM X$, and reason as follows.
  \begin{align*}
    \pfs & h \in (\theta_{\NPow X} \comp f, d) \conc (f (d))   \\[3pt]
    \Leftrightarrow \pfs & \forall t \leq d \; . \; \> h \left (t
    \right ) \in \theta_{\NPow X}
    \comp f \left (t \right ) \\ & \; \wedge \;
    \forall t > d \; . \; \> h \left (t \right ) \in (f (d)) \> (t - d)   \\[3pt]
    \stackrel{(\ast)}{\Leftrightarrow} \pfs & \forall t \leq d \; . \; \> h \left (t
    \right ) \in \NPow \theta_X \comp \delta_X \comp f \left (t \right
    ) \\ &  \; \wedge \;   \exists (g,e) \in
    \delta_X \left (f (d) \right ) \> . \>
    \forall t > d \; . \; \> h \left (t \right ) = g \left (t - d \right ) & \\[3pt]
    \Leftrightarrow \pfs & \exists (g,d) \in \HybM \HybM X \; . \;
    \forall t \leq d \; . \; g \left ( t \right ) \in \delta_X \left
      (f (t) \right ) \\ &
    \; \wedge \;  \theta_X \comp g \left (t \right ) = h \left
      ( t \right ) \; \wedge \; \forall t > d \; . \;  h \left ( t \right
    ) = (g (d)) \left (t - d \right )  & \\[3pt]
    \Leftrightarrow \pfs & \exists (g,d) \in \HybM \HybM X \; . \;
    (h, d + e) = (\theta_X \comp g, d) \conc (g (d)) \\ & \; \wedge \; g \in \delta_X \comp f &
  \end{align*}
\end{myProof}

\noindent
  Note that if instead of the functor $\NPow$ one would consider the
  powerset, then the equivalence $(\ast)$ would not hold. In
  particular, the equation
  \begin{flalign*}
    \theta_{\Pow X} \comp f = \Pow \theta_X \comp \delta_X \comp f
  \end{flalign*}
  would not necessarily hold.

\end{document}